\documentclass[11pt]{article}

\usepackage{graphicx}

 \usepackage{epsfig}
\usepackage{amssymb}
\usepackage{graphicx}
\usepackage{multirow}
\usepackage{subcaption}
\captionsetup{compatibility=false}
\usepackage{amsmath}
\usepackage{xspace}
\usepackage{tabularx}
\usepackage[sort]{cite}
\usepackage{fullpage}
\usepackage{enumitem}
\usepackage{algorithm}
\usepackage{amsthm}
\usepackage{hyperref}
\usepackage{framed}
\usepackage{color}
\usepackage[noend]{algpseudocode}

\newtheorem{theorem}{Theorem}

\newtheorem{lemma}[theorem]{Lemma}
\newtheorem{observation}[theorem]{Observation}

\newcommand{\knw}{${\sf KNOWN}$\xspace}
\newcommand{\kwn}{${\sf KNOWN}$\xspace}
\newcommand{\unkwn}{${\sf UNKNOWN}$\xspace}

\newcommand{\pingpong}{{\sc Ping-Pong}\xspace}
\newcommand{\pingpongk}{{\sc K-Ping-Pong}\xspace}
\newcommand{\placeswipe}{{\sc Place-\&-Swipe}\xspace}
\newcommand{\aLook}{\textsf{Look}\xspace}
\newcommand{\aMove}{\textsf{Move}\xspace}
\newcommand{\aCompute}{\textsf{Compute}\xspace}


\title{Patrolling on Dynamic Ring Networks}

\author{Shantanu Das$^{\dag}$, Giuseppe A. Di Luna$^{\dag}$, Leszek A. Gasieniec$^{\ddag}$\\
\footnotesize
$^\dag$ Aix-Marseille University, LIS, CNRS, Universit\'{e} de Toulon, France \\
\footnotesize
{\{shantanu.das,giuseppe.diluna\}@lis-lab.fr} \\
\footnotesize
	$^\ddag$ University of Liverpool, Liverpool, UK\\
	\footnotesize
	{L.A.Gasieniec@liverpool.ac.uk}}
\date{}

\begin{document}
\maketitle

\begin{abstract}
We study the problem of patrolling the nodes of a network collaboratively by a team of mobile agents, such that each node of the network is visited by at least one agent once in every $I(n)$ time units, with the objective of minimizing the idle time $I(n)$. While patrolling has been studied previously for static networks, we investigate the problem on dynamic networks with a fixed set of nodes, but dynamic edges. In particular, we consider 1-interval-connected ring networks and provide various patrolling algorithms for such networks, for $k=2$ or $k>2$ agents. We also show almost matching lower bounds that hold even for the best starting configurations. Thus, our algorithms achieve close to optimal idle time. Further, we show a clear separation in terms of idle time, for agents that have prior knowledge of the dynamic networks compared to agents that do not have such knowledge. 
This paper provides the first known results for collaborative patrolling on dynamic graphs.  
\end{abstract}

\section{Introduction}
In recent years patrolling is gaining 
on popularity in the area of algorithms and in particular algorithmics of mobile agents and applications.
Patrolling naturally occurs in daily routines requiring regular visits to specific (possibly mobile) objects and areas.  
It can also refer to monitoring of complex network processes or systems behaviour. 
Typical applications of patrolling include safety or security related surveillance, regular updates, data gathering, and other perpetual tasks.

We consider the patrolling problem in networks (graphs) with the objective of visiting all nodes of the graph perpetually, optimizing the \emph{idle time} - the maximum time period during which any node is left unvisited. Unlike all previous results on the patrolling problem, we study the problem on a dynamic graphs where some links of the graph may be missing for certain duration of time. This complicates the problem and requires a strong coordination between the agents, in order to reduce the idle time, even in simple networks. We restrict our attention, in this paper to dynamic ring networks. In the case of a static ring network, the simple strategy of periodically cycling the nodes of the ring, is known to provide the optimal idle time. However, for patrolling dynamic rings, more involved strategies are required depending on the number of the agents, the capabilities of the agents and whether or not the dynamic structure of the network is known to the agents. Among various known dynamic graph models, we consider interval connected dynamic networks which ensures that the network is connected at any time interval. We distinguish between the \kwn setting when the agents know in advance about the changes in the graph structure, from the \unkwn setting when such information is not available to the agents. We show a clear separation between the two cases, in terms of the minimum idle time for patrolling. For both cases, we provide lower bounds and almost matching upper bounds on the idle time for patrolling, supported by deterministic algorithms for collaborative patrolling.

\subsection*{Related Work}
\paragraph{Patrolling.}
The problem of patrolling is a close relative to several classical algorithmic challenges which focus on monitoring and mobility. 
These challenges include the Art Gallery Problem \cite{DBLP:journals/ipl/Ntafos86}, 
where one is interested in determining the smallest number of   
inert guards and their location to constantly monitor 
all artefacts, and its dynamic alternative referred to as 
the k-Watchmen Problem 
\cite{DBLP:journals/ipl/Chin88,DBLP:journals/dcg/CarlssonJN99}. 
In further work on fence patrolling \cite{DBLP:conf/esa/CzyzowiczGKK11, DBLP:journals/dc/KawamuraK15,CollinsCGKKKMP13} the authors focus on monitoring vital (possibly disconnected) parts of a linear environment where each point is expected to be visited with the same frequency. A similar approach is adopted in \cite{DBLP:journals/algorithmica/CzyzowiczGKKKT17} where we find studies on monitoring of a linear environment by agents prone to faults. 
The problem of patrolling objects which require different frequencies of visits was first considered in \cite{DBLP:conf/sofsem/GasieniecKLLMR17}, 
where the authors assume availability of a single mobile agent.
They also showed a close relationship between these type of patrolling and the Pinwheel scheduling problem \cite{DBLP:journals/algorithmica/ChanC93}. 
In a more recent work \cite{DBLP:conf/sofsem/GasieniecKLLMR17} the authors consider monitoring by two agents of $n$ nodes located on a line and requiring different frequencies of visits. The authors provide several approximation algorithms concluding with the best currently known $\sqrt{3}$-approximation.

\paragraph{Dynamic networks and mobile agents.}
The field of dynamic networks is an hot and active research topic \cite{HarG97,KuO11,CaFQS12,Michail16}. In the message passing model a lot of attention has been devoted to classic problems such as agreement \cite{KuMO11,AuPR13,BRS12}, information dissemination \cite{AwE84,KLyO10,OdW05,CleMPS11}, and counting \cite{di2015brief,icalpcounting}. Surprisingly, the investigation of mobile agents on dynamic networks started only recently. In the centralised setting (when agents know the dynamic of the graph apriori) the problem of exploring a graph in the fastest possible way has been studied in several papers \cite{Michail2014,ErHK15,cover1}. The task is NP-hard on general graphs and it becomes polynomial on special topologies \cite{cover2,IlKW14}. Notably, in the case of interval connected ring the exploration can be done in ${\cal O}(n)$ rounds \cite{IlcinkasW18}. 

The distributed setting (when agents do not know the dynamic of the graph) has been mostly overlooked, or limited to restrictive dynamic assumptions, such as periodic \cite{Flocchini2013,IlW11} or recurrent \cite{IlcinkasW18} graphs. The exploration with termination of interval connected rings has been studied in \cite{DBLP:conf/icdcs/LunaDFS16}. For rings that are connected over time, a perpetual self-stabilizing exploration algorithm has been proposed in \cite{BournatDD16}. Finally, the gathering problem on  interval connected rings has been studied in \cite{LunaFPPSV17}.
To the best of our knowledge there is no previous work studying the patrolling of a dynamic network. 

\subsection*{Our Contributions}

\begin{table}[tb]
\center
\resizebox{0.6\textwidth}{!}{%
\begin{tabular}{ l  l | l | l |}
\hline
Adversary&&\multicolumn{2}{ |c| }{Number of Agents} \\
\cline{3-4}
& & $k=2$ & $k>2$ \\ \hline
 \multirow{2}{*}{\kwn}&  Upper Bound & $3\lceil \frac{n}{2} \rceil$& $3\lceil \frac{n}{k} \rceil$\\
 \cline{2-4}
  &  Lower Bound & $n$ & $\frac{2n}{k}$ \\
\hline

 \multirow{2}{*}{\unkwn}&  Upper Bound  & $2n-2$& $4 \lceil \frac{n}{k} \rceil$\\
  \cline{2-4}
  &  Lower Bound & $2n-6$& $\frac{2n}{k}$ \\
\hline
\end{tabular}}
\vspace{0.2cm}
\caption{ Results for the idle time in dynamic rings of $n$ nodes, with $k$ uniformly placed agents having global visibility. \label{table:iammiamm}}
\end{table}

We show, first of all, in Section~\ref{sec:local}, that when the agents have local visibility, limited to the current node, then patrolling has an idle time of $n-\alpha \cdot k$ rounds, both in case of arbitrary initial placement (where $\alpha=1$) and uniform initial placement with $b$-bits of persistent memory (where $\alpha=2^{b}$). This means that using multiple agents reduces the idle time by only an additive factor. In contrast, for a \emph{static} ring, the idle time for patrolling with $k$ agents is $\frac{n}{k}$, achieving a multiplicative factor efficiency over single agent patrolling. 

Thus, for the rest of paper, we consider agents having global visibility, allowing it to see the current configuration of the ring with set of available links.
We start with team size of $k=2$ agents in Section~\ref{global:two} and then generalize these results to $k>2$ agents in Section~\ref{global:k}. The results of these two sections are summarized in Table~\ref{table:iammiamm}. The bounds denoted here are for the stable idle time, after a stabilization time that is at most $O(n)$. These results show a clear distinction between the case of \kwn adversary (where the dynamic structure of the network is known apriori) and the case of \unkwn adversary when the agents do not have prior knowledge of the dynamic network.

Other than the above results, we also show a slightly better lower bound of $\lfloor (1+\frac{1}{5})(n-1) \rfloor$ for the special case of two agents in dynamic ring with \kwn adversary, when the agents are arbitrarily placed.

\section{Model}
A set of agents, $A:\{a_0,\ldots,a_{k-1}\}$, operates on a dynamic graph ${\cal G}$. Each agent follows the same algorithm (all agents are identical) executing a sequence of \aLook,\aCompute, \aMove cycles. 
In the \aLook phase of each cycle, the agent acquires a {\em snapshot} of the environment. In the {\aCompute} phase the agent uses the information from the snapshot and the contents of its local persistent memory to compute the next destination, which may be the current node or one of its neighbors. During the \aMove phase an agent traverses an edge to reach the destination node. The information contained in the persistent memory is the only thing that is preserved among cycles.

\paragraph{\bf Synchronous system.}
The system is {\em synchronous}, that is the time is divided in discrete units called rounds. Rounds are univocally mapped to numbers in $\mathbb{N}$, starting from round $0$. In each round, each agent in $A$ executes exactly one entire \aLook,\aCompute, \aMove cycle. 

\paragraph{\bf Interval connected ring.}
A dynamic graph ${\cal G}$ is a function mapping a round $r \in \mathbb{N}$ to a graph $G_r:(V,E(r))$ where $V:\{v_0,\ldots,v_{n-1}\}$ is a set of nodes and $E: \mathbb{N} \rightarrow V \times V$ is a function mapping a round $r$ to a set of undirected edges. 
%
We restrict ourselves to $1$-interval-connected rings. A dynamic graph ${\cal G}$ is a $1$-interval-connected ring when the union of the graph instances $G_{\infty}=(V,E_{\infty})=(V,\cup_{i=0}^{+\infty} E(i))$ is a ring graph,  
and at each round $r$, the graph $G_r$ is connected. 
The graph ${\cal G}$ is anonymous, i.e. all nodes are identical to the agents. The endpoints of each edge are labelled 
as either {\em clockwise} or {\em counter-lockwise}, in a consistent manner (i.e the ring is oriented). 

\paragraph{\bf Local versus Global Snapshot.}
\begin{itemize}
\item Local Snapshot: the snapshot obtained by an agent at a node $v$ in round $r$ contains only information about the node $v$, i.e. the number of  agents in $v$ and the set of available edges incident to node $v$ at round $r$.
\item Global Snapshot: the snapshot obtained by an agent contains the graph $G_r$ (where the current location of the agent is marked), and for each node in $V$ the number of agents present in that node at round $r$.  
\end{itemize}


\paragraph{\bf Knowledge of ${\cal G}$.}

We examine two different settings: the one with known ${\cal G}$ (\knw) and the one without such knowledge (\unkwn). 
In the \knw setting during the {\aCompute} phase agents have access to the dynamic graph ${\cal G}$. In this case the decision of what will be the movement of the agent depends on the snapshot,  on the content of the persistent local memory and on the entire dynamic graph ${\cal G}$. On the contrary in the \unkwn setting, during the  { \aCompute} phase, no other information is available (an agent uses only the snapshot and the local memory). Another way to see the \unkwn setting is to imagine that ${\cal G}$ is adaptive to the strategy of algorithm ${\cal A}$: there exists an adversarial entity, namely the {\em scheduler}, that decides the graph ${\cal G}$ according to the strategy of algorithm ${\cal A}$. 

\paragraph{\bf Configurations and initial placement of agents.}
Given a graph $G_r$, and the set of agents $A$, a configuration at round $r$, is a function $C_r: A \rightarrow V$ that maps agents in $A$ to nodes of $V$ where agents are located. 
We say that there is a {\em uniform initial placement}, if $C_0$ is such that the segments of consecutive rings nodes not occupied by agents have size  $\lfloor \frac{n}{k} \rfloor$ or $\lceil \frac{n}{k} \rceil$. We say that there is an {\em arbitrary initial placement} if the configuration $C_0$ is injective ( no two agents may start on the same node).

\paragraph{\bf Idle time.}
An algorithm ${\cal A}$ running on a graph ${\cal G}$, generates an execution ${\cal E}$. The execution ${\cal E}:\{C_0,C_1,C_2\ldots\}$ is an infinite sequence of configurations, one for each round $r$. 
Given a node $v$ and an execution ${\cal E}$ the set of visits of $v$, $S_{{\cal E},v}:\{r_1,r_2,r_3,\ldots \}$ is a set containing all rounds in which $v$ has been visited by some agent in execution ${\cal E}$; more formally, $r_j \in S_{{\cal E}, v}$ if and only if $C_{r_j}(a)=v$ for some $a \in A$. The idle set $I_{{\cal E},v}$ of node $v$ is a set containing all the intervals of time between two consecutive visits of node $v$ in execution ${\cal E}$; more formally, $x \in I_{{\cal E},v}$ if and only if there exists $r_i,r_{i-1}$ in $S_{{\cal E}, v}$ and $x=r_i-r_{i-1}$. We assume that each node has been visited at round $-1$.

We say that an algorithm solves patrolling on a graph ${\cal G}$, if each node of the graph is visited infinitely often.  
Given an algorithm ${\cal A}$ and an integer $n \geq 5$, we define as $T_n$ the set of all executions of algorithm ${\cal A}$ over any ($1$-interval-connected) dynamic ring ${\cal G}$ with $n$ nodes. The idle time of algorithm ${\cal A}$ is  the function $I(n)=\underset{{\forall {\cal E} \in T_n}}{\max}( \cup_{\forall v \in V}I_{{\cal E},v})$.

\paragraph{Stable idle time.}
Given an execution ${\cal E}$ we define as ${\cal E}[r,\infty]$ the execution obtained by removing the first $r$ configurations from ${\cal E}$, similarly we can define the idle set  $I_{{\cal E}[r,\infty],v}$ . An algorithm ${\cal A}$ as a stable idle time $I_{r_s}(n)$ with stabilisation time $r_s$ if there exists a round $r_s$ such that $I_{r_s}(n)=\underset{{\forall {\cal E} \in T_n}}{\max}( \cup_{\forall v \in V}I_{{\cal E}[r_s,\infty],v})$.

\section{Preliminaries}

We devote this section to some simple observations based on previous results on dynamic rings. Note that for a single agent moving in a dynamic ring, an adaptive adversary can keep the agent confined to the starting node and one of its neighbors.

\begin{observation} (\cite{KLyO10,OdW05}) \label{obs:2node}
In a dynamic ring ${\cal G}$ under the \unkwn model with global snapshot, 
a single agent can visit at most $2$ nodes. 
\end{observation}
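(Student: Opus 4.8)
The plan is to show that, for any deterministic single-agent algorithm $\mathcal{A}$, the scheduler allowed in the \unkwn model can build a dynamic ring on which the agent never leaves a set of two adjacent nodes. Since in this model the scheduler may fix $\mathcal{G}$ adaptively against the (deterministic) strategy $\mathcal{A}$, exhibiting one such bad $\mathcal{G}$ suffices. First I would fix the agent's start node $v_i$ and designate the confinement set $S=\{v_i,v_{i+1}\}$, where $v_{i+1}$ is the clockwise neighbour. The construction is driven by one elementary fact: because a single agent always occupies exactly one node, the scheduler only ever needs to delete a \emph{single} edge to seal off every escape route, and deleting one edge of a ring leaves a path, so the per-round connectivity requirement of a $1$-interval-connected ring is met automatically.

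The core is a \emph{moving-wall} scheduling rule, analysed by induction on the round $r$ with the invariant ``the agent occupies a node of $S$''. Concretely, in round $r$ the scheduler deletes the unique edge leaving $S$ on the side where the agent currently sits: if the agent is on $v_i$ it deletes the counter-clockwise edge $(v_{i-1},v_i)$, and if it is on $v_{i+1}$ it deletes the clockwise edge $(v_{i+1},v_{i+2})$; every other edge is present. Given the invariant at the start of round~$r$, the \aLook/\aCompute/\aMove semantics close the induction immediately: from either node of $S$ the only present incident edges lead to the other node of $S$ or back to the current one, so whatever destination the \aCompute phase returns, the \aMove phase keeps the agent inside $S$. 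This is also the step that makes the \emph{global} snapshot useless to the agent, since seeing all of $G_r$ exposes no wall-crossing edge, because none exists in $G_r$.

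The step I expect to be the main obstacle is verifying that $\mathcal{G}$ is genuinely $1$-interval-connected, i.e. that $\bigcup_r E(r)$ is the whole ring: the two wall edges $(v_{i-1},v_i)$ and $(v_{i+1},v_{i+2})$ are exactly the ones the scheduler ever suppresses, so I must argue each appears in at least one round. The clean case is when the agent oscillates, visiting both nodes of $S$ infinitely often: by the rule the clockwise wall is present in every round the agent sits on $v_i$ and the counter-clockwise wall in every round it sits on $v_{i+1}$, so both appear and the footprint is the full ring. The delicate case is a degenerate one in which the agent eventually parks forever at a single node of $S$; here I would reinstate the corresponding wall edge once the agent has committed to staying, which is harmless precisely when the \aCompute phase keeps the agent in place even though the extra incident edge is merely offered and not demanded. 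I would close by remarking that this is the non-replicating analogue of the adversarial edge-scheduling used to stall dissemination in $1$-interval-connected networks in~\cite{KLyO10,OdW05}: a token, unlike a piece of information, is not duplicated when it crosses an edge, and it is exactly this that lets a single absent edge act as a wall confining the agent to two nodes.
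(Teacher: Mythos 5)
The paper does not actually prove this observation: it is stated with a citation to \cite{KLyO10,OdW05}, preceded only by the remark that ``an adaptive adversary can keep the agent confined to the starting node and one of its neighbors.'' Your moving-wall scheduler is precisely that folklore argument, and its core is sound: since a ring minus one edge is a path, deleting the single edge leaving $S$ on the agent's side keeps every $G_r$ connected, and the induction that the \aMove phase can never take the agent outside $S=\{v_i,v_{i+1}\}$ is correct regardless of what \aCompute returns. You are also right that the only delicate point is the footprint condition, i.e.\ that $\bigcup_r E(r)$ must be the full ring.

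However, your patch for the degenerate case has a genuine hole. ``Reinstate the wall edge once the agent has committed to staying'' is not a move the scheduler can make safely: the agent's \aCompute decision is a deterministic function of the \emph{snapshot}, and the snapshot with the wall edge present is not the one under which the agent ``committed'' to anything; your own caveat (``harmless precisely when the \aCompute phase keeps the agent in place'') concedes the point and leaves unhandled exactly the sub-case in which the agent crosses the reinstated edge. Note first that if the agent parks after having visited \emph{both} nodes of $S$, no reinstatement is needed at all: each wall edge is present in every round the agent spends at the opposite node of $S$, so any oscillation already places both wall edges in the footprint --- your case split should be ``agent visits both nodes of $S$ at some point'' versus ``agent never leaves $v_i$,'' not ``oscillates forever'' versus ``eventually parks.'' In the genuinely problematic sub-case (the agent never leaves its start node $v_i$, so only $(v_{i-1},v_i)$ is missing from the footprint), the correct repair is different from yours: probe once by presenting a graph containing $(v_{i-1},v_i)$, e.g.\ the full ring, for a single round. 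If the agent stays or moves to $v_{i+1}$, resume your wall on $S$; if instead it crosses to $v_{i-1}$, its visited set is still only $\{v_i,v_{i-1}\}$ --- it never reached $v_{i+1}$ --- so the scheduler re-designates the confinement pair as $\{v_{i-1},v_i\}$ and runs the moving wall there, whose two wall edges already appeared during the earlier rounds. In all branches the union of the $E(r)$'s is the whole ring and the agent visits at most $2$ nodes; without this re-designation step, your construction as written can either fail 1-interval-connectedness (edge never appears) or let the agent reach a third node.
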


\begin{observation} (\cite{KLyO10,OdW05}) 
In a dynamic ring ${\cal G}$ under the \knw model, a single agent can reach any node $V$ in at most $n-1$ rounds. 
\end{observation}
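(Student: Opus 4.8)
The plan is to prove this observation about the \knw model by exhibiting an explicit strategy that, given full knowledge of the dynamic graph ${\cal G}$, routes a single agent from its current node to any target node $v_t$ within $n-1$ rounds. The key structural fact I would exploit is $1$-interval-connectivity: at every round $r$, the graph $G_r$ is connected, and since $G_\infty$ is a ring, each $G_r$ is a ring with \emph{at most one} edge missing. Hence $G_r$ is either the full ring or a single path (Hamiltonian path) on all $n$ nodes. This immediately bounds the graph diameter at every round by $n-1$.

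First I would set up the reachability argument. Fix a target $v_t$ at distance $d \le \lfloor n/2 \rfloor$ from the agent along the shorter arc of the ring $G_\infty$. The naive idea is to walk along the shorter arc, but a missing edge on that arc can block progress in a given round. The crucial observation is that in each round at most one edge is absent, so on the path $G_r$ the agent always has at least one incident edge available (unless it sits at the single cut endpoint, in which case it still has its one path-neighbor available). I would argue that at every round the agent can make progress toward $v_t$ along \emph{some} direction of the ring that does not increase its distance, because to be blocked in \emph{both} ring-directions simultaneously would require two missing edges, contradicting $1$-interval-connectivity. Therefore the agent can decrease its distance to $v_t$ by one in every round, reaching it in at most $d \le \lfloor n/2 \rfloor \le n-1$ rounds.

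The main obstacle, and the step I would be most careful about, is handling the case where the single missing edge lies exactly on the short arc between the agent and $v_t$, forcing a detour the ``long way around.'' In that situation the agent cannot shrink its distance along the short arc in that particular round. Here I would use the \knw assumption decisively: because the agent knows the entire schedule of edge changes in advance, it can plan an optimal route that either waits one round for the blocking edge to reappear or commits to the long arc, whichever is faster. A clean way to formalize this is to build the time-expanded (evolving) graph whose vertices are pairs $(v,r)$ and whose edges connect $(u,r)$ to $(w,r+1)$ whenever $\{u,w\} \in E(r)$ or $u=w$; a shortest temporal path in this graph gives the travel time, and I would bound its length. The worst case is a static ring with one permanently missing edge, which is just a path of $n$ nodes and diameter $n-1$, yielding the stated bound, and I would verify that no dynamic schedule can force a longer traversal since connectivity at every round guarantees at least one viable outgoing move per step.

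I would close by noting the bound $n-1$ is tight: on a static ring (a valid $1$-interval-connected instance) with one edge permanently removed, the two endpoints of the resulting path are at distance exactly $n-1$, so no strategy, even with full knowledge, can do better in the worst case. This matches the claimed upper bound and shows it cannot be improved.
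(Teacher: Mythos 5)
Your proposal has a genuine gap at its central step. In your second paragraph you infer, from ``at most one missing edge per round,'' that the agent ``can decrease its distance to $v_t$ by one in every round.'' This is a non sequitur: having at least one available incident edge is not the same as making progress. If the missing edge lies on the short arc toward $v_t$, the only available move goes along the long arc and strictly \emph{increases} the distance $\min(d,n-d)$ whenever $d<n/2$. You then acknowledge this in your third paragraph, but there you replace the needed argument with the assertion that ``the worst case is a static ring with one permanently missing edge,'' justified only by ``connectivity at every round guarantees at least one viable outgoing move per step.'' That justification proves too much: a viable outgoing move also exists at every round in the \unkwn model, where the conclusion is false --- by the paper's Observation~\ref{obs:2node} an agent can be confined to two nodes forever. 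So per-round viability alone cannot yield the $n-1$ bound, and the claim that the static path is the extremal schedule is precisely what remains unproven; the time-expanded-graph formalization is fine for \emph{defining} the optimal travel time, but you never bound it.

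The missing idea is a two-strategy counting argument, the same mechanism the paper uses in its proof of Observation~\ref{obs:ilcinkas} (from Proposition~1 of Ilcinkas--Wade): from the start node $u$, consider the two monotone walks ``always move clockwise, wait when blocked'' and ``always move counter-clockwise, wait when blocked.'' The clockwise walk only ever attempts edges of the clockwise arc from $u$ to $v_t$, and the counter-clockwise walk only edges of the complementary arc; these edge sets are disjoint, so the (at most one) missing edge in any round blocks at most one of the two walks. Hence their combined progress increases by at least $1$ per round, and after $n-1$ rounds it is at least $n-1=d+(n-d)-1$, forcing at least one walk to have completed its required $d$ (respectively $n-d$) steps and reached $v_t$. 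In the \knw model the agent precomputes which of the two walks succeeds and follows it. Note that the paper itself states this observation with a citation to prior work rather than a proof, so this counting argument is exactly what you were expected to supply. Your tightness remark --- a static ring with one edge permanently removed forces $n-1$ rounds between the path's endpoints --- is correct.
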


Due to the above observations, the only interesting cases for patrolling is for $k\geq 2$ which we investigate in this paper. For any $k$ agents, we have the following observation derived from the proof of Proposition 1 in \cite{IlcinkasW18}.

\begin{observation}\label{obs:ilcinkas} (\cite{IlcinkasW18}) 
Given a dynamic ring ${\cal G}$ the \unkwn model with Global Snapshot. For any round $r$ and any $1 \leq h \leq n-1$, there are $n-h$ distinct nodes, such that if $n-h$ agents are placed in these nodes and they all move in the same direction from round $r$ until round $r+h-1$, then they visit exactly $h+1$ nodes. 
\end{observation}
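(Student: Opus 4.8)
The plan is to exhibit a single dynamic ring together with one placement of the $n-h$ agents for which the common clockwise walk sweeps exactly $h+1$ nodes; since in the \unkwn setting the scheduler may realise any $1$-interval-connected ring, it suffices to construct one good instance rather than to reason about a worst case. First I would place the $n-h$ agents on a block $B$ of $n-h$ consecutive nodes, so that the $h$ empty nodes form a single contiguous gap $D$ immediately clockwise of $B$, and let every agent attempt to step clockwise in each of the rounds $r,\dots,r+h-1$. The only control the scheduler needs is a single missing edge acting as a movable ``wall''; keeping just one edge absent leaves each $G_t$ a path and hence connected, so the instance is legitimately $1$-interval-connected. This is exactly the frontier-blocking device used in the proof of Proposition~1 of \cite{IlcinkasW18}.

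Next I would split the $h$ rounds into two phases. In the \emph{free phase}, lasting $s:=2h+1-n$ rounds, all edges are present, so the whole block shifts one node clockwise per round and its clockwise frontier advances $s$ nodes into $D$. In the \emph{blocked phase}, lasting the remaining $h-s$ rounds, the scheduler keeps exactly the edge just clockwise of the current frontier missing, freezing the frontier while the trailing agents merely compress into nodes they already occupy. The set of visited nodes is then $B$ together with the $s$ freshly entered nodes of $D$, of size $(n-h)+s=h+1$, which is the claimed count; one checks $0\le s\le h$, so both phases are well defined and the frontier never leaves $D$.

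The hard part is the bookkeeping rather than any single clever step. I must verify that during the blocked phase the trailing agents create no new visited node, since each of them moves only into a node already in $B$ or already swept, vacating a node that was visited at round $r$; and that during the free phase the union of the shifted blocks is precisely the arc $B$ extended by $s$, with no off-by-one error at either endpoint of the sweep. I would also confirm that a single missing edge per round keeps every $G_t$ connected, and record the regime $s=2h+1-n\ge 0$ under which the target $h+1$ is attainable (for smaller $h$ the $n-h$ \emph{distinct} starting nodes already force strictly more than $h+1$ visited nodes, so the interesting range is $h\ge\frac{n-1}{2}$). Packaging these checks, and noting that the construction is the frontier-blocking argument of \cite{IlcinkasW18} specialised to $n-h$ co-moving agents, completes the proof.
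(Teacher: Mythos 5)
There is a genuine gap: you have inverted the quantifier over the dynamic graph, and with it the meaning of the statement. The observation is universal in ${\cal G}$: \emph{for every} $1$-interval-connected ring (i.e., every adversarial schedule of edge removals) and every round $r$, there exist $n-h$ starting nodes, depending on that schedule, such that an agent placed on any one of them and moving in a fixed direction during rounds $r,\dots,r+h-1$ is never blocked and hence \emph{individually} visits exactly $h+1$ nodes. This is precisely how the paper uses it: in Lemma~\ref{lemma:swipe} and Lemma~\ref{lemma:swipek} the observation supplies, against an arbitrary adversary, a set $E_i$ of $\lfloor \frac{n}{2} \rfloor + 1$ (resp.\ $\lfloor \frac{n}{k} \rfloor + 1$) ``good'' nodes in which an antipodal pair (resp.\ a uniformly spaced set) must lie, which is what makes the Swipe phase of \placeswipe work. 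Your opening move --- ``it suffices to construct one good instance'' because the scheduler may realise any ring --- is exactly backwards: the scheduler is the adversary, so exhibiting one favourable ${\cal G}$ with a single missing ``wall'' edge chosen by you establishes nothing that the algorithmic upper bounds can invoke.

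You also misread ``they visit exactly $h+1$ nodes'' as a claim about the \emph{union} of the nodes visited by all $n-h$ agents; it is a per-agent count, as the paper's base case makes plain (for $h=1$ it asserts $n-1$ agents each visiting $2$ nodes, which is impossible under the union reading once $n>3$, since $n-1$ distinct starting nodes already exceed $2$). Your own restriction $s=2h+1-n\geq 0$, i.e.\ $h\geq \frac{n-1}{2}$, should have been the warning sign, because the observation is asserted for all $1\leq h\leq n-1$. The paper's actual proof is a short induction with no construction at all: imagine an agent on every node, all moving in the same direction; $1$-interval connectivity on a ring footprint means at most one edge is absent per round, so at most one hitherto-unblocked agent is blocked each round; hence after $h$ rounds at least $n-h$ agents were never blocked, each having visited exactly $h+1$ nodes, and their starting positions form the claimed set of $n-h$ distinct nodes. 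Your frontier-blocking construction is a lower-bound-style adversarial device and cannot be repaired into a proof of this universally quantified guarantee.
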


\begin{proof} The proof is contained in \cite{IlcinkasW18}. We report it here for completeness. 
Let us imagine to have an agent on each node. At each round an agent move counter-clockwise (or clockwise). 
W.l.o.g round $r=0$.
The proof is by induction:
\begin{itemize}
\item (Base Step) Let $h=1$. Round  is $r=0$. There is at most one edge missing, at most on agent is blocked. Thus there are $n-1$ agents that visit $2$ nodes. 
\item (Inductive step )Let $h=t+1$. Round is $r=t$. From the inductive hypothesis we that $n-t$ agents visited $t+1$ nodes by round $t-1$. At round $t$ at most $1$ of this agent is blocked, thus we have $n-t-1$ agents that visited $t+2$ nodes. 
\end{itemize}\end{proof}

It is also possible to show an easy lower bound on the idle time of any algorithm under the strongest model considered in this paper (i.e. under global visibility and knowledge of ${\cal G}$) 

\begin{theorem}\label{2knw:lb}
Consider the \knw model with Global Snapshot.
Let ${\cal A}$ be any patrolling algorithm for $k$ agents with uniform initial placement. We have that $I_{r_s}(n) \geq \frac{2n}{k}$ for any stabilization time $r_s$. 
\end{theorem}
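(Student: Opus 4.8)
The plan is to prove this lower bound by constructing an adversarial dynamic ring that forces some node to remain unvisited for at least $2n/k$ rounds, leveraging Observation~\ref{obs:ilcinkas} which tells us how slowly a group of agents moving in one direction can sweep through the ring. The key intuition is that with $k$ uniformly placed agents, the ring is partitioned into $k$ arcs each of length roughly $n/k$. To keep idle time small, the agents must collectively cover all these arcs quickly, but in a dynamic ring the adversary can block one edge per round to slow the collective coverage.

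First I would set up the configuration. With uniform initial placement, consider the $k$ arcs (segments between consecutive agents) of size $\lfloor n/k \rfloor$ or $\lceil n/k \rceil$; pick a node $v$ that lies in the interior of one of the larger arcs, as far as possible from both bounding agents. To visit $v$, some agent must travel a distance of roughly $n/k$ along the ring. The strategy is to track the ``frontier'' of visited nodes on each side of $v$: because of Observation~\ref{obs:ilcinkas}, if we have a set of agents all moving toward $v$, the adversary can block one edge per round so that after $h$ rounds a group of agents that started moving together has advanced its collective frontier by only a bounded amount. I would formalize the claim that to reach a node at distance $d$ from the nearest agent, at least $d$ rounds are required even with coordinated motion, but the real point is that the \emph{two} frontiers closing in on $v$ from clockwise and counter-clockwise directions can only advance at a combined rate that the adversary caps, yielding the factor $2$ in $2n/k$.

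The crux of the argument — and the main obstacle — is making precise how the adversary prevents both neighboring agents of the target arc from converging on $v$ efficiently. The idea is that between two consecutive visits of $v$, the agent that last visited $v$ must retreat (or the next visit must come from the opposite side), and in the worst case the adversary arranges the edge deletions so that an agent leaving $v$ toward one side and another agent approaching from the other side cannot ``share'' progress: the combined distance traveled to produce two consecutive visits is at least $2 \cdot (n/k)$, because after visiting $v$ the nearest available re-visitor sits a full arc-length away. I would argue that in the stable regime (after $r_s$) the positions of agents relative to the arcs are essentially forced, so the adversary can repeatedly impose this penalty, showing $I_{r_s}(n) \geq 2n/k$ regardless of $r_s$.

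Concretely, I would exhibit a specific schedule: after $v$ is visited at some round, block the edge on the side from which the visiting agent came, forcing the next visit of $v$ to originate from an agent at distance at least $\lceil n/k \rceil$ on the opposite side, and simultaneously block edges to delay that agent's approach so that the gap between the two visits is at least $2n/k$. The technical care needed is to verify that with $k$ agents the adversary has enough ``blocking budget'' (one edge per round suffices because only one arc's traversal is the bottleneck at a time) and that uniform placement guarantees an arc of length at least $n/k$ whose midpoint realizes the bound. I expect the delicate step to be handling the boundary behavior — ensuring that no clever coordinated maneuver using agents from non-adjacent arcs can circumvent the per-round blocking — and I would address this by noting that any such distant agent must itself traverse at least a full arc to reach $v$'s neighborhood, so it cannot beat the $2n/k$ bound either.
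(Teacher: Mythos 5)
There is a genuine gap, and it is structural rather than a matter of missing detail: your adversary is \emph{adaptive} (``after $v$ is visited at some round, block the edge on the side from which the visiting agent came''), but Theorem~\ref{2knw:lb} is stated for the \knw model, where the dynamic graph ${\cal G}$ is fixed in advance and the agents know \emph{all} of it, including future rounds, during every \aCompute phase. You cannot de-adaptivize your schedule by simulation here: the agents' positions at round $t$ depend on the deletions at rounds $>t$ (they plan using the whole of ${\cal G}$), while your adversary's deletion at round $t$ depends on the positions at round $t$ --- a circularity with no guaranteed fixed point. This is precisely why reactive-scheduler arguments in this paper are confined to the \unkwn setting. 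The paper's own proof sidesteps the issue with an oblivious schedule: remove one fixed edge \emph{forever}, so the ring degenerates into a static path of $n$ nodes, and then invoke the known lower bound of $\frac{2n}{k}$ for patrolling a line with $k$ agents \cite{CollinsCGKKKMP13}. That bound holds for every round onward, hence for every stabilization time $r_s$, and knowledge of ${\cal G}$ is useless against a static graph.

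A second, independent flaw is that you fix the target node $v$ in advance and try to force consecutive visits of \emph{that} node to be $\geq \frac{2n}{k}$ apart. Idle time is a maximum over all nodes, so the algorithm is free to sacrifice other nodes and, for instance, station an agent on or next to your chosen $v$, visiting it every round; your claim that ``the nearest available re-visitor sits a full arc-length away'' is a property of the algorithm's configuration, which the adversary does not control, and uniform placement is only an \emph{initial} condition. Any correct argument must be global --- showing that \emph{some} node is neglected whatever the agents do --- which is exactly what the line-patrolling bound delivers. Two further technical obstacles in your sketch: $1$-interval connectivity permits at most one missing edge per round, so you cannot simultaneously delay a retreating agent on one side of $v$ and an approaching agent on the other; and Observation~\ref{obs:ilcinkas} is an existential statement guaranteeing that many starting nodes yield \emph{fast} exploration (it powers the upper bounds, Lemmas~\ref{lemma:swipe} and~\ref{lemma:swipek}), so it pushes in the wrong direction for a lower bound.
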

\begin{proof}
The scheduler removes the same edge forever. At this point the $k$ agents have to patrol a line and the lower bound for idle time on a line with $k$ agents  is $\frac{2n}{k}$ (See \cite{CollinsCGKKKMP13} for a proof).  
\end{proof}

\section{Patrolling with Local Visibility}
\label{sec:local}

In this section we analyse the Local Snapshot model, we first examine the case in which the placement of the agents is arbitrary and then we examine the case in which the placement is uniform.

\subsection{Lower bound for arbitrary initial placement}

\begin{theorem}\label{th:trivial}
Consider a dynamic ring under the \knw model with Local Snapshot and arbitrary initial placement.
Then any patrolling algorithm ${\cal A}$ for $k$ agents has stable idle time $I_{r_s}(n) \geq n-k$, for any stabilisation time $r_s$. 
\end{theorem}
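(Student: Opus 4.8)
The plan is to prove the lower bound by exhibiting a single easy instance where local visibility cripples the team: a \emph{static} ring (which is trivially $1$-interval-connected, since every $G_r$ is the whole ring) together with an initial placement of the $k$ identical agents on $k$ \emph{consecutive} nodes. Such a placement is injective, hence a legal arbitrary placement, and the instance is even in the \knw model, so the bound will hold a fortiori in the strongest adversarial setting considered here. The whole difficulty is concentrated in showing that, despite full knowledge of $\mathcal G$, the agents cannot exploit their number.

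First I would establish the key structural fact by induction on the round: with a \emph{Local Snapshot} on a static, anonymous (but oriented) ring, all $k$ agents are perpetually indistinguishable and therefore move in lockstep. At round $0$ every agent is alone on its node and sees both incident edges present; being identical deterministic agents with identical initial memory, the same algorithm, and the same local input, and having no way to learn their position on the anonymous ring, they all compute the same move. After moving they again each occupy a distinct node with both edges present and update their memory identically, so the invariant is maintained forever. Consequently the occupied set is always a contiguous block of $k$ nodes, obtained from the initial block by a rigid rotation; encoding this by an offset $o(r)\in\mathbb{Z}/n\mathbb{Z}$, a single \aMove\ gives $|o(r+1)-o(r)|\le 1$, i.e.\ the block shifts by at most one position per round.

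Next I would turn the patrolling requirement against the block. Since $\mathcal A$ patrols, every node is visited infinitely often, so $o(\cdot)$ takes every value of $\mathbb{Z}/n\mathbb{Z}$ infinitely often. Fix a node $v$ at the boundary of the block; $v$ is covered exactly when $o(r)$ lies in the $k$-window $[v-k+1,v]\bmod n$, whose complement is a single arc of $n-k$ uncovered positions. Let $\mu$ be the midpoint of that arc. The offset visits $\mu$ infinitely often, and each such visit lies inside a maximal interval during which $o(\cdot)$ stays outside the window $[v-k+1,v]$; because steps are of size at most $1$, an excursion that leaves the window, reaches $\mu$, and returns to the window spans at least $n-k$ rounds. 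Hence $v$ suffers a gap of length $\ge n-k$ between two consecutive visits, and since $\mu$ is revisited forever this gap recurs after every round $r_s$. Therefore $I_{r_s}(n)\ge n-k$ for any stabilisation time $r_s$.

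The step I expect to be the main obstacle is the lockstep/indistinguishability claim: I must argue carefully that anonymity, the absence of positional information, and identical initial memory genuinely prevent any deterministic symmetry breaking, and that the shared memory evolves identically round by round (the induction). The second delicate point is pinning the exact constant $n-k$ in the coverage argument — choosing the witness $\mu$ and counting the $\pm 1$-step excursion precisely, while ruling out that short ``turn-around'' re-coverings of $v$ could help, which is exactly where the fact that the far arc must be visited infinitely often is invoked. The averaging bound alone would only yield $\sim n/k$; it is the rigidity of the block (contiguous, moving by at most one per round) that upgrades this to $n-k$.
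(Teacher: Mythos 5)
Your construction is exactly the paper's: a static ring (legally $1$-interval-connected), $k$ agents on $k$ consecutive nodes, and the symmetry induction showing that the occupied set remains a rigid contiguous block of $k$ nodes that shifts by at most one position per round. That part is correct, and in fact you justify the lockstep invariant more carefully than the paper does (the paper asserts in one sentence that every $C_r$ is a rotation of $C_0$). Where you diverge is the final counting step: the paper simply declares that the best strategy for the rotating block is to move perpetually in one direction, yielding idle time $n-k$, whereas you try to make this rigorous via the offset function $o(\cdot)$ — and it is precisely there that your argument breaks.

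The step ``every node is visited infinitely often, so $o(\cdot)$ takes every value of $\mathbb{Z}/n\mathbb{Z}$ infinitely often'' is false for $k\ge 2$. Consider the pendulum strategy: move clockwise for $n-k$ rounds, then counter-clockwise for $n-k$ rounds, forever. The offset then ranges over an arc of only $n-k+1$ values, yet the block $[o,o+k-1]$ sweeps all $n$ nodes infinitely often, so this is a legitimate patrolling candidate in which $k-1$ offset values are never taken. Consequently your fixed midpoint $\mu$, chosen once from the boundary node $v$ of the \emph{initial} block, may lie among the offsets that eventually never recur, and the claim that the length-$(n-k)$ gap at $v$ recurs after every $r_s$ collapses. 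The bound itself survives this scenario (the pendulum's extreme nodes suffer gaps of roughly $2(n-k)$), and the repair is local: for each round $r\ge r_s$ let $u_r$ be the midpoint of the uncovered arc at round $r$; its distances $d_1,d_2$ to the two ends of the block satisfy $d_1+d_2=n-k+1$, and since the block endpoints move at speed at most $1$, both the previous and the next visit of $u_r$ are at least $\lfloor (n-k+1)/2 \rfloor$ rounds away from $r$, so a gap of length at least $n-k$ spans \emph{every} round after stabilisation. Alternatively, observe that the set $L$ of offsets visited infinitely often is an arc (by the $\pm 1$ dynamics) of length at least $n-k+1$ (since $L+[0,k-1]$ must cover $\mathbb{Z}_n$), and choose $v$ and $\mu$ relative to $L$ rather than to the initial block; your excursion argument then goes through verbatim.
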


\begin{proof}
Let us consider a static ring of $n$ nodes $G=(V=\{v_0,\ldots,v_{n-1}\}, E=\{(v_0,v_1),(v_1,v_2),\ldots\} )$ and a set of agents $\{a_0,\ldots,a_{k-1}\}$. Configuration $C_0$ is such that $C(a_j)=v_{j}$, that is agents are placed one on each node in $\{v_0,\ldots,v_{k-1}\}$. As the nodes of the ring are anonymous and the agents are identical with local visibility, each executing the same algorithm, at each round $r$ the configuration $C_r$ can only be a rotation of configuration $C_0$. 
Moreover, configuration $C_r$ is a rotation of either one step counter-clockwise or one step clockwise of configuration $C_{r-1}$. This implies that the best idle time is obtained by having agents to perpetually move in the same direction.
The idle time of this strategy is $I_{r_s}(n) = n-k$ for any possible stabilization time $r_s$. 
\end{proof}

The above result assumes the agents to be placed on consecutive nodes, and its proof does not hold when there is an uniform initial placement of agents.  
We consider the case of uniform placement in the next section.
%
%

\subsection{Lower bound for uniform placement in the \unkwn model} 

We now prove a lower bound on the idle time for any patrolling algorithm for $k$ agents with uniform initial placement in dynamic rings under the \unkwn model. This result holds only for agents with bounded memory.

\begin{theorem}\label{theorem:lowerboundlocal}
Consider a dynamic ring under the \unkwn model with local snapshots and uniform initial placement.
Given any patrolling algorithm ${\cal A}$ for $k$ agents, with $c={\cal O}(1)$ bits of memory, the idle time for patrolling is $I(n) \geq n-7\cdot2^{c}k$. 
\end{theorem}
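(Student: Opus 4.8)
The plan is to let the scheduler---which in the \unkwn model may fix $\mathcal{G}$ with full knowledge of $\mathcal{A}$---confine all $k$ agents to a short arc and keep one node on the complementary arc unvisited for $n-7\cdot 2^{c}k$ consecutive rounds. I would first use anonymity and determinism: starting from the uniform placement on the oriented anonymous ring, as long as every agent's local snapshot equals ``both incident edges present, one agent here'', all agents share the same internal state and compute the same move, so they advance in lockstep with their pairwise distances preserved. It therefore suffices to analyse the \emph{free trajectory} of a single agent that always sees both of its edges. Applying the pigeonhole principle to its $\le 2^{c}$ states, this trajectory is eventually periodic with period $p\le 2^{c}$; let $d$ be its net displacement per period. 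This yields a clean dichotomy.

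If $d=0$ the agent is trapped, without any edge removal, inside a window of at most $2^{c}$ nodes. Removing no edge keeps all agents in lockstep, so the $k$ windows stay disjoint (the gaps of a uniform placement have size $\lfloor n/k\rfloor\gg 2^{c}$) and at most $2^{c}k$ nodes are ever visited. The remaining $\ge n-2^{c}k\ge n-7\cdot 2^{c}k$ nodes are visited only at the conventional round $-1$, so their idle time is unbounded and the claim holds immediately.

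If $d\neq 0$, say clockwise, the free behaviour alone would sweep and patrol the whole ring, so the scheduler must intervene with a \emph{soft barrier}. I would pick an edge $e^{*}$ just behind the rear agent $a_{0}$, at distance $\Theta(2^{c})$ so that $a_{0}$'s backtracking never crosses it, and remove $e^{*}$ exactly in the rounds in which an agent sitting at an endpoint of $e^{*}$ would otherwise cross it. This deletes at most one edge per round, so $\mathcal{G}$ stays $1$-interval-connected and no agent ever traverses $e^{*}$. Let $w$ be the endpoint of $e^{*}$ that lies counter-clockwise of every agent. The forward sweep moves away from $w$, so $w$ is not visited during the sweep, and to visit $w$ an agent must travel counter-clockwise across essentially the whole linearised ring. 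Since each agent moves at most one node per round, the minimum occupied position decreases by at most one per round, and I would argue it cannot reach $w$ before round $n-7\cdot 2^{c}k$: the rear agent must first drift almost the entire way to the far end (no agent lies behind it), after which any return to $w$ costs another $\Omega(n)$ counter-clockwise steps. This produces an idle interval of length $\ge n-7\cdot 2^{c}k$ at $w$.

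The delicate step---and the one I expect to be the main obstacle---is controlling each agent's \emph{reaction} to the barrier using only a single edge removal per round. A bounded-memory agent that detects the missing edge $e^{*}$ may try to reverse and ``bounce'' back along the line, which would re-cover $w$ and destroy the bound; the scheduler must prevent a sustained bounce. Here I would invoke the $\le 2^{c}$-state pigeonhole a second time, applied to an agent's behaviour while it is adjacent to the barrier: any attempt to move away from $e^{*}$ repeats a state within $2^{c}$ steps, so the scheduler (which knows $\mathcal{A}$) can time its removals of $e^{*}$ so as to pin each agent inside an $O(2^{c})$ window near the barrier and to keep the counter-clockwise reversal frontier advancing at rate at most one node per round. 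Summing these per-agent windows over the $k$ agents, together with the slack coming from the floor/ceiling in the uniform placement and from the initial transient, is exactly what yields an active region of size $\le 7\cdot 2^{c}k$ and hence the stated bound. Making this pinning argument uniform over \emph{all} $c$-bit algorithms, and handling the collisions inside the pile---where the ``number of agents here'' field can change an agent's behaviour---within the one-edge-per-round budget, is the technically hardest part.
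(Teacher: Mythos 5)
Your opening moves match the paper's own proof: model the agent as a machine with at most $2^{c}$ states, extract the eventually periodic fault-free behaviour (the paper's fault-free cycle $F_{s_0}$, with per-period displacement $d$), and split on $d=0$ versus $d\neq 0$. Your $d=0$ branch is sound and corresponds to a special case of the paper's Lemma~\ref{lemma:nopatrol}. The genuine gap is in the $d\neq 0$ branch, and it is exactly where you located it. A static ``soft barrier'' at a fixed edge $e^{*}$ fails against the simplest constant-memory algorithm: move clockwise until blocked, then reverse (the algorithm of Figure~\ref{fig:simple}). Against it, your barrier turns the ring into a line on which the $k$ agents sweep back and forth in staggered phases, yielding idle time $O(n/k)$ --- far below $n-7\cdot 2^{c}k$. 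Your proposed repair, timing the removals of $e^{*}$ so as to ``pin each agent inside an $O(2^{c})$ window near the barrier,'' cannot work in the local-snapshot model: an agent that has bounced and moved away from $e^{*}$ no longer sees $e^{*}$ in its snapshot, so removing $e^{*}$ has no effect on it whatsoever; influencing it would require removing an edge incident to its current node, and with $k$ agents dispersing in different places this would need several simultaneous removals, violating $1$-interval connectivity. So the second pigeonhole step you invoke does not close the argument.

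The paper resolves this difficulty by inverting the scheduler's role: rather than erecting a barrier and fighting the agents' reactions forever, it intervenes only during a finite \emph{compression} phase and then never again. Working inside a strongly connected component $K$ of the state diagram (reachable from $F_{s_0}$ because the diagram is finite and has no sink), the scheduler traps the agents one at a time --- one edge removal per round suffices, since only one agent is blocked at any given time --- until consecutive agents are at distance between $2|K|$ and $7|K|$, and, using strong connectivity of $K$, steers each agent back to a state $s^{*}$ lying on a fault-free cycle $F^{*}$ (this state bookkeeping after blocking is precisely what your sketch has no mechanism for). Lemma~\ref{lemma:nopatrol} forces $d(F^{*})\neq 0$ for any algorithm that patrols at all. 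After compression, no edge is ever removed: with no blocking there is no bouncing, every agent just runs the fault-free cycle, the convoy of width at most $7k|K|\leq 7\cdot 2^{c}k$ drifts around the ring at speed at most one per round, the spacing of at least $2|K|$ guarantees the agents never meet (so the meeting-free projection $T_{\cal A}$ remains the correct model, answering your worry about the ``number of agents here'' field), and every node in the residual gap of length $n-7k|K|$ waits at least that many rounds between consecutive visits. This ``compress, then release'' construction is the missing ingredient in your $d\neq 0$ case.
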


In order to prove this result we have to introduce some concepts related to the state diagram of a patrolling algorithm ${\cal A}$

\paragraph{State diagram.}
Given an algorithm ${\cal A}$ executed by an agent $a_j$, we can model it as a finite state machine with state diagram $H_{\cal A}$. We use the terms vertex and arc when we refer to the state diagram to no generate confusion with the terms edge and node used for the dynamic ring.  We may also use the term state when we refer to a vertex of $H_{\cal A}$.

Let $T_{\cal A}:(S,D)$ be the projection of $H_{\cal A}$ obtained consider only the arcs and vertices of  $H_{\cal A}$ that are visited in executions where agents never meet. 

Each vertex $s \in S$ in $T_{\cal A}:(S,D)$ has three outgoing arcs, and each arch has a label in the form $Snapshot:Movement$. One arc leads to the state that is reached when the agent sees that both edges are incident in the local node (this arc has a label with $Snapshot=\{C,CC\}$). The second arc is the one corresponding to the state transition that agent does when the missing edge is the counter-clockwise one (this arc has a label with $Snapshot=\{C\}$). Finally, the last arc is the one used when the missing edge is the clockwise one ($Snapshot=\{CC\}$). 

Each arc label has also associated the movement $m$ that the agent performs when in state $s$ it sees a specific local snapshot (let us recall that a snapshot corresponds to the label of the arc). We have that $m \in \{0,-1,1\}$, $m=0$ if the agent stays at the current node, it is $-1$ if the agent moves to the counter-clockwise, and it is $+1$ if the agent moves to the clockwise.  
In Figure~\ref{fig:simple} there is the $T_{\cal A}$ of a simple algorithm in which an agent goes in fixed direction until it is blocked by an edge removal. Once blocked the agent switches direction.

\begin{figure}

  \centering
    \includegraphics[width=0.6\textwidth]{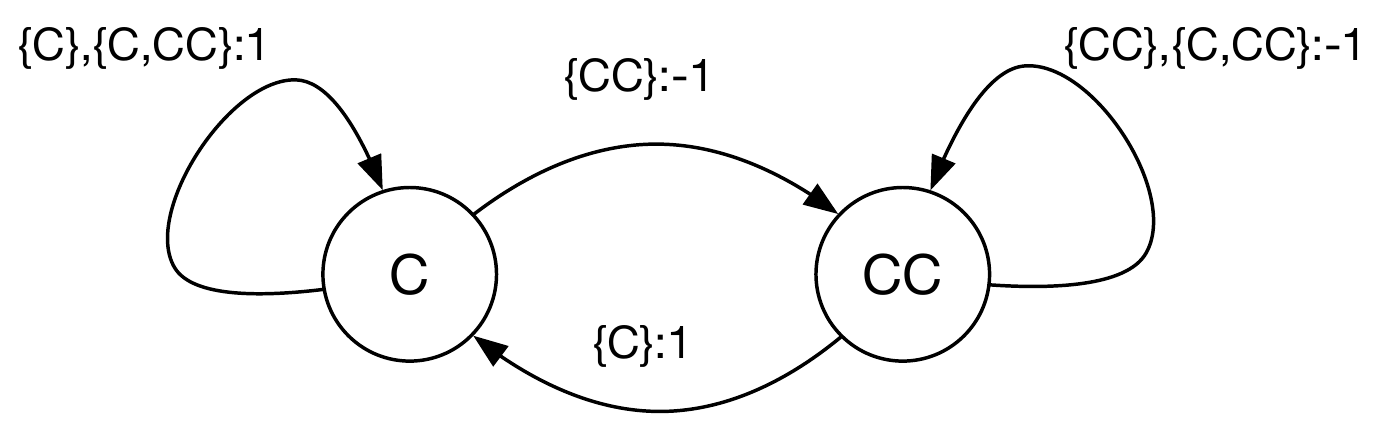}
      \caption{Simple algorithm in which agents reverse direction when blocked. The arcs have label in the form $Snapshot:Movement$. The Snapshot is $\{C,CC\}$ if both edges are present and $\{C\}$ (resp. $\{CC\}$) if the clockwise (resp. counter-clockwise) edge is absent. The movement is $1$ if the agents move on the clockwise edge, $-1$ if it moves on the counter-clockwise edge and $0$ if it stays still. }\label{fig:simple}
\end{figure}

\paragraph{Reachability, fault free paths and cycle displacement.}
Given two vertices $s,s' \in S$ we say that, in $T_{\cal A}$, vertex $s'$ is reachable from $s$ in $t$-steps if and only if there exits a simple directed path in $T_{\cal A}$ from $s$ to $s'$ that has length $t$. Notice that, the presence of such a path means that starting from state $s$ there always exists a scheduler of edge removals that forces the agent to reach $s'$, and in such scheduler the agent will traverses at most $t$ edges of the ring. Also notice that we must have $t \leq |S|^2$. 

Given a path $p$ in $T_{\cal A}$ we say that $p$ is a fault-free path, if any arc in $p$ has label with $Snapshot=\{C,CC\}$. The definition of fault-free cycle is analogous. 
Given a cycle $F$ in $T_{\cal A}$ the displacement of $d(F)$ is the sum of all the movements on the arcs in $F$. Essentially, a cycle has zero displacement when an agent placed at node $v$ at the end of cycle is still in node $v$. 
A cycle has positive displacement if, after the executions of all actions in the cycle, the agents moved clockwise. The negative displacement is analogous. 
See Figure~\ref{fig:cycles}  for examples of previous definitions. 

\begin{figure}

  \centering
    \includegraphics[width=0.4\textwidth]{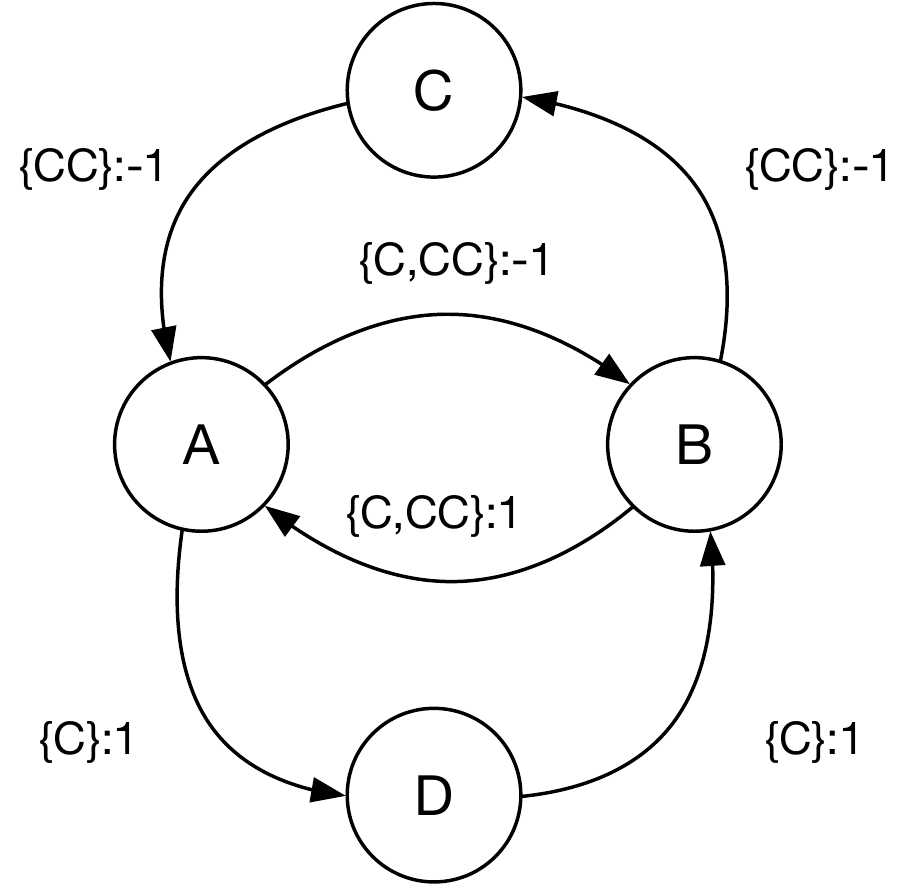}
      \caption{Example of cycles: we have a fault-free cycle $(A,B),(B,A)$ with zero displacement, a cycle $(B,C),(C,A),(A,B)$ with negative displacement equal to $-3$, and a cycle $(A,D),(D,B),(B,A)$ with positive displacement equal to $3$.}\label{fig:cycles}
\end{figure}

Before entering in the proof of our lower bound, we do a simple observation on the structure of any algorithm ${\cal A}$.

\begin{observation}\label{obs:trivial}
Let us consider any algorithm ${\cal A}$. Consider any state $s$ of $T_{\cal A}$. There always exists a path $p$, possibly empty, from $s$ to a vertex $s'$ such that:
\begin{itemize}
\item $p$ is fault-free.
\item $s'$ belongs to a cycle $F_{s}$ of $T_{\cal A}$, and the cycle $F_{s}$ is fault free.
\end{itemize}
\end{observation}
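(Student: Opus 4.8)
The plan is to exploit the fact that the \emph{fault-free} behaviour of an agent is a deterministic self-map on the finite state set $S$, so iterating it from any starting state must eventually close up into a cycle. First I would observe that, by the stated structure of $T_{\cal A}$, every vertex $s \in S$ has exactly one outgoing arc carrying the label with $Snapshot=\{C,CC\}$, namely the transition the agent takes when it sees both incident edges present. This lets me define a total function $\phi : S \to S$ sending each state to the endpoint of its unique fault-free outgoing arc; totality is immediate since each vertex of $T_{\cal A}$ is declared to have all three snapshot-arcs, so the $\{C,CC\}$-arc never leaves the projected diagram.

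Next I would consider the orbit of $s$ under $\phi$, i.e.\ the infinite sequence $s_0 = s,\ s_1 = \phi(s_0),\ s_2 = \phi(s_1),\dots$, in which every consecutive pair $s_i \to s_{i+1}$ is traversed along a fault-free arc. Because the agent uses only $c = {\cal O}(1)$ bits of memory, the state set is finite ($|S| \le 2^{c}$), so by the pigeonhole principle some state recurs. Let $j$ be the smallest index for which $s_j = s_i$ holds for some $i < j$, and let $i < j$ be that first occurrence. Minimality of $j$ forces $s_0,\dots,s_{j-1}$ to be pairwise distinct, since an earlier coincidence would contradict the choice of $j$.

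Finally I would set $s' = s_i$ and take $p$ to be the prefix $s_0 \to s_1 \to \cdots \to s_i$. This $p$ is fault-free by construction, is a simple path by the distinctness just noted, and is empty exactly when $i=0$, which is precisely the case in which $s$ already lies on a fault-free cycle. The remaining segment $s_i \to s_{i+1} \to \cdots \to s_{j-1} \to s_j = s_i$ is a closed walk through distinct states whose every arc carries $Snapshot=\{C,CC\}$, so it is the required fault-free cycle $F_s$ containing $s'$. I do not expect a genuine obstacle here: the argument is a standard ``rho-shape'' decomposition of an eventually-periodic orbit, and the only point meriting a sentence of care is the totality of $\phi$ inside $S$, which the given structural description of $T_{\cal A}$ supplies directly.
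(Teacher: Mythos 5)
Your proof is correct and follows essentially the same route as the paper: the paper's one-line argument that the fault-free subgraph $U=(S,D[\{C,CC\}])$ has no sink vertex is exactly your totality of $\phi$, and your explicit pigeonhole/rho-shape decomposition of the orbit is just the standard unpacking of that fact. Nothing to fix.
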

\begin{proof}
The proof comes directly from the fact that there is no sink vertex in graph $U=(S,D[\{C,CC\}])$, with $D[\{C,CC\}]$ subset of $D$ containing only arcs with label $\{C,CC\}$. 
\end{proof}

From Observation~\ref{obs:trivial} we have that for any algorithm ${\cal A}$ with initial state $s_0$ the cycle $F_{s_0}$ is well defined and exists. 

The proof of our lower bound is based on the following two lemmas. 

\begin{lemma}\label{lemma:nopatrol}
Let us consider any algorithm ${\cal A}$ with initial state $s_0$. If from a vertex $s \in F_{s_0}$ there exists a path in $T_{\cal A}=(S,D)$ to a node $s^{*}$ such that $s^{*}$ belongs to a strongly connected component $K$ of $T_{\cal A}$ and to a fault-free cycle $F^{*}$ with $d(F^{*}) = 0$, then ${\cal A}$ does not solve patrolling.  
\end{lemma}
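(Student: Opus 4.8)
The plan is to convert the structural hypothesis into an explicit scheduler (adversary) that confines every agent to a bounded arc of the ring; then, for $n$ large enough, the arcs ever visited by the $k$ agents cannot cover all of $V$, so some node is visited only finitely often and \textbf{${\cal A}$ does not solve patrolling}. The entire construction stays inside $T_{\cal A}$, i.e.\ in the regime where no two agents meet, which I guarantee by keeping the agents far apart.

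First I would treat a single agent. By Observation~\ref{obs:trivial}, from $s_0$ a fault-free path leads into the fault-free cycle $F_{s_0}$; composing this path, the traversal of $F_{s_0}$ up to the vertex $s\in F_{s_0}$ named in the hypothesis, and the hypothesised path from $s$ to $s^*$, gives one path from $s_0$ to $s^*$ of length $O(|S|^2)$. By the reachability semantics recalled just before the statement, such a path means there is a schedule of edge removals forcing a lone agent out of $s_0$ to $s^*$ while it crosses only $O(|S|^2)$ ring edges; as an agent moves at most one node per round, this driving phase keeps it inside an arc of width $O(|S|^2)$. From $s^*$ I then let the agent run on $F^*$ forever. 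Since $F^*$ is fault-free, keeping the agent on it costs \emph{no} edge removals, and since $d(F^*)=0$ the agent is back on the same node after each lap; hence it oscillates forever inside a fixed window of $O(|S|)$ nodes. So a single agent driven this way ever visits only $O(|S|^2)$ distinct nodes.

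Lifting this to $k$ agents is the crux, and the main obstacle is $1$-interval connectivity: at most one edge may be deleted per round, so the scheduler cannot drive all agents at once, and it cannot keep the uniformly placed team synchronised either, because reproducing one agent's fault pattern at all $k$ positions would delete $k$ edges and disconnect the ring. What saves the argument is that a trap is \emph{self-sustaining}: once an agent sits on the fault-free, zero-displacement cycle $F^*$ it stays there at no removal cost. I would therefore park the agents one at a time, spending the single per-round removal inside the small region of the agent currently being driven from $s_0$ to $s^*$; every other agent, receiving no incident fault, simply evolves on $F_{s_0}$. As the agents start $\lfloor n/k\rfloor$ or $\lceil n/k\rceil$ nodes apart and every occupied region has width $O(|S|^2)$, for $n$ large the edge deleted at any round lies far from all other agents, so it perturbs only the intended one, leaves $G_r$ connected, and creates no meeting, keeping us inside $T_{\cal A}$.

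It remains to bound the coverage. An agent waits at most $O(k\,|S|^2)$ rounds before its turn, during which, moving at most one node per round, it sweeps a contiguous arc of width $O(k\,|S|^2)$ and is then parked forever in a fixed $O(|S|)$-window of $F^*$. Over all $k$ agents the union of swept arcs and parked windows has total width $O(k^2\,|S|^2)$, a bound independent of $n$ (recall $|S|\le 2^{c}$ for an agent with $c$ bits of memory). Taking $n$ above this bound leaves a node outside every agent's reachable set, so it is visited only finitely often. The role of the strongly connected component $K\ni s^*$ is robustness: if the single-edge budget ever forces an unavoidable perturbation on an already-parked neighbour, strong connectivity supplies a sub-schedule returning that agent to $s^*$ and back onto $F^*$, so no agent can leak out of its bounded region. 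I expect the most delicate point of a fully rigorous proof to be exactly this---showing such perturbations can always be absorbed within $K$ without accumulating displacement.
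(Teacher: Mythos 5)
Your proposal is correct and follows essentially the same route as the paper's proof: the same sequential-trapping scheduler parks agents one at a time on the zero-displacement fault-free cycle $F^{*}$ (spending the single per-round edge removal only near the agent currently being driven, while the waiting agents evolve fault-free on $F_{s_0}$ until they reach $s$), the union of visited nodes is bounded by a constant independent of $n$, and the initial spacing plus a per-agent movement bound rules out meetings, keeping all states inside $T_{\cal A}$. The one point where you diverge---invoking strong connectivity of $K$ to absorb ``unavoidable perturbations'' on parked agents---is vacuous here, since the adversary is never forced to remove any edge and parked agents on the fault-free $F^{*}$ require no removals at all; strong connectivity of $K$ is only exploited in the companion Lemma~\ref{lemma:dance}.
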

\begin{proof}
Let us consider a ring of size $n=4k^2(|S|+|F_{s_0}|+|F^{*}|)$. Now we create a scheduler that prevents ${\cal A}$ from patrolling. We first present the scheduler, then will the prove that such scheduler does not make agents to meet, and thus the state of each agent remains in $T_{\cal A}$.

The scheduler first takes agent $a_{0}$ and forces it to go in state $s^{*}$, notice that this takes at most $|S|$ rounds. Thus, from round $r_1=|S|$ agent $a_{1}$ moves perpetually over a set of nodes of size at most $|F^{*}|$.

Now the scheduler waits until $a_1$ enters in state $s$, this must happen: by  Obs~\ref{obs:trivial} eventually the internal state of $a_1$ is in $F_{s_0}$. 
Once agent $a_{1}$ is in state $s$ the adversary  forces it to go to $s^{*}$. This takes at most $|S|+|F_{s_0}|$ rounds, counting also the number of rounds needed by $a_1$ to enter in state $s$. 
By using this scheduler we have that at round $r_2=2|S|+|F_{s_0}|$ agents $a_{0},a_{1}$ are both in cycle $F^{*}$. This means that they perpetually move over a set of nodes of size at most $|F^{*}|$.

By iterating the previous process for the agents  $a_{2},a_{3},\ldots,a_{k-1}$ we have that at round $r_{k}=k|S|+(k-1)|F_{s_0}|$ all agents are in cycle $|F^{*}|$. From this round on, all agents perpetually over a set of nodes that has size at most $k|F^{*}|$.

The set of nodes visited by all agents is upper bounded by $r_{k}+k|F^{*}|$, that is $k|F^{*}|+k|S|+(k-1)|F_{s_0}|$.
It is immediate to see that half of the nodes in the ring have not been explored, an they will not be explored at any point in the future.

It remains to show that during this process two agents do no meet. The maximum amount of node traversed by a single agent is upper bounded by $r_{k}+|F^{*}| \leq 2r_{k}$. However, the initial space between two agents is at least $4r_{k}$, recall that $4r_{k} \leq \frac{n}{k}$. Thus no two agents meet. 
This complete the proof. 
 \end{proof}

\begin{lemma}\label{lemma:dance}
Let us consider any algorithm ${\cal A}$ with initial state $s_0$. If from a node $s \in F_{s_0}$ there exists a path in $T_{\cal A}=(S,D)$ to a node $s^{*}$  such that $s^{*}$ belongs to a strongly connected component $K$ and a fault-free cycle $F^{*}$ with $d(F^{*}) \neq 0$, then the patrolling time of  ${\cal A}$ is $I(n) \geq n-7|S|k$. 
\end{lemma}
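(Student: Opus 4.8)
The plan is to exhibit a single execution in which the scheduler herds all $k$ agents into one tight \emph{bunch} that thereafter drifts around the ring in a fixed direction, leaving a long arc unvisited for close to $n$ rounds. Assume without loss of generality that $d(F^{*})>0$, the case $d(F^{*})<0$ being symmetric after swapping clockwise and counter-clockwise. Two elementary scheduler capabilities drive the construction. First, by Observation~\ref{obs:2node} applied to one agent at a time, the scheduler can \emph{hold} any single agent inside a $2$-node region: whenever that agent would step onto a present edge the scheduler deletes exactly that edge, which is legal since removing one edge keeps the ring connected, and an agent cannot traverse a missing edge, so its movement is forced into $\{0,-1\}$ or $\{0,+1\}$ as needed. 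Second, by hypothesis there is a path in $T_{\cal A}$ from a state $s\in F_{s_0}$ to $s^{*}$, and $F^{*}$ is a fault-free cycle through $s^{*}$ with positive displacement; hence the scheduler can drive any agent first into $s\in F_{s_0}$ (possible by Observation~\ref{obs:trivial}), then force it along the path into $F^{*}$, after which presenting fault-free snapshots makes it drift clockwise at the fixed average rate $\rho=d(F^{*})/|F^{*}|\in(0,1]$ with per-period excursion at most $|F^{*}|\le|S|$.

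Using these capabilities I would build the bunch one agent at a time, in the spirit of the proof of Lemma~\ref{lemma:nopatrol}, but now exploiting the nonzero drift to \emph{translate} agents over long ring distances rather than merely confining them. First drive all $k$ agents into $F^{*}$, so that they share the common drift $\rho$; in the frame that co-moves with $\rho$, an agent left in $F^{*}$ is stationary up to its excursion $|F^{*}|$, whereas a held agent slides backwards at rate $\rho$. Starting from the uniform placement (co-moving positions $0,\tfrac{n}{k},\dots,\tfrac{(k-1)n}{k}$) the scheduler processes the agents in order and holds the $i$-th one just long enough to slide it back to co-moving position $i\delta$ for a small spacing $\delta=\Theta(|S|)$, then releases it into the drift. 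Because each agent is parked strictly ahead of all already-parked agents and strictly behind all not-yet-processed ones, and because only one agent is actively managed in any round (so the single removable edge always suffices and always lies far from every other agent), no two agents ever meet; consequently every agent's state remains inside $T_{\cal A}$ and the motion analysis above stays valid. After all $k$ agents are parked they sit in $F^{*}$ within a common window whose width I would bound by $7|S|k$, by summing the contributions: $(k-1)\delta$ for the parking slots, one excursion $|F^{*}|\le|S|$ per agent, and an $O(|S|)$ slack per forcing path, the constant $7$ emerging once $\delta$ is fixed to the smallest value preventing excursion-induced collisions.

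Once the bunch is formed the scheduler simply presents fault-free snapshots forever, so the whole bunch drifts clockwise at rate $\rho\le 1$ while staying inside a window $\mathcal{W}$ of width $W\le 7|S|k$ (the per-period oscillations add only an $O(|F^{*}|)$ margin, absorbed into the count). Pick the round $r^{*}$ at which the counter-clockwise edge of $\mathcal{W}$ lies on a node $v$. Immediately afterwards $v$ leaves the window, and since the window advances clockwise on average at rate at most $1$, it must sweep the remaining $n-W$ nodes before its clockwise edge returns to $v$, which takes at least $n-W$ rounds, throughout which no agent occupies $v$. Hence $v$ has an idle interval of length at least $n-W\ge n-7|S|k$, giving $I(n)\ge n-7|S|k$; and since the bunch keeps sweeping forever this lower bound persists for every stabilisation time.

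The step I expect to be the real obstacle is the non-meeting guarantee together with the width bound in the second paragraph: one must verify that the sequential hold-and-park schedule never forces the single removable edge to serve two distant agents in the same round, that the processing order genuinely prevents crossings in the co-moving frame throughout the entire (long) parking phase, including the initial phase that drives every agent into $F^{*}$, and that the accumulated slack from forcing-path displacements, excursions, and inter-agent margins really fits under the factor $7$. By contrast, the final drift-and-sweep argument and the reduction to the two scheduler capabilities are routine given Observations~\ref{obs:2node} and~\ref{obs:trivial} and the machinery already used in Lemma~\ref{lemma:nopatrol}.
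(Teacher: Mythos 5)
Your proposal matches the paper's proof in all essentials: both schedulers first drive every agent into the fault-free cycle $F^{*}$, then park the agents one at a time by holding a single agent on two neighbouring nodes (while the already-released agents drift at the fixed average rate $d(F^{*})/|F^{*}|$) until consecutive spacings of $\Theta(|K|)$ are reached, re-force the held agent back into $F^{*}$ through the strongly connected component, argue non-meeting from the maintained spacings, and conclude that the resulting bunch of width at most $7|S|k$ leaves an arc, hence an idle interval, of length at least $n-7|S|k$. The only differences are presentational --- you work in a co-moving frame and make explicit the final drift-and-sweep step that the paper leaves implicit --- and your flagged obstacles (the non-meeting guarantee and the accounting behind the constant $7$) are exactly the points the paper itself treats at the same level of detail.
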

\begin{proof}

Let us consider a ring of size $n=4k^2(|S|+|F_{s_0}|+|F^{*}|)$. We create a scheduler that forces a maximum distance of $n-f(|S|)k$ between two agents,  and that does not allow two agents to meet.  

The scheduler uses two phases, in the first phase it forces each agent to enter in cycle $F^{*}$, this is done in an analogous way to the one used in the proof of Lemma~\ref{lemma:nopatrol}, and we will omit its description. In the second phase the scheduler reduces the distances among agents until the maximum distance between two agents is $n-7|S|k$. 

The scheduler first takes $a_{k-2}$ and it blocks it on two neighbour nodes until the distance between $a_{k-2}$ and $a_{k-1}$ is $7|K|$. At this point the adversary it forces $a_{k-2}$ to go to $s^{*}$, this can be done since $K$ is strongly connected.
Notice that, at the end of such process the distance between $a_{k-2}$ and $a_{k-1}$, decreased by at most $2|K|$ rounds. This means that their distance is at least $5|K|$. We now show that
such distance is big enough to ensure that, if $a_{k-2}$ and $a_{k-1}$  are never blocked they never meet while they both cycle in $F^{*}$. 

The two agents are executing the same cycle $F^{*}$ but they are not synchronized, that is they are in two different vertices of the cycle. To prove that they never meet it is sufficient to show that they
do not meet in a period of $2F^{*}$ rounds. 
First of all, notice that during the execution of $F^{*}$ each of them moves of at most $|F^{*}| \leq |K|$ edges. Second notice that if $a_{k-1}$ terminates the cycle $F^{*}$ at round $r'$,
moving to the counter-clockwise of $d(F^{*}) \leq |K|$ edges, then by round $r'+|F^{*}|-1$ also $a_{k-2}$ moved to the counter-clockwise of  $d(F^{*}) \leq |K|$ edges. 

This ensures that at each round the distance between 
the two is at least $2|K|$ edges and at most $7|K|$ edges.

Now the scheduler uses the same procedure for agent $a_{k-3}$, putting it at distance at most $7|K|$ from $a_{k-2}$.
This procedure is iterated until agent $a_{0}$.
At this point, the maximum distance between $a_0$ and $a_{k-1}$ is $n-7k|K|$. 
 \end{proof}

From the two previous lemmas we can prove theorem~\ref{theorem:lowerboundlocal}.
Given  ${\cal A}$  with initial state $s_0$, we have to show that there exists a vertex $s \in F_{s_0}$ such that from $s$ there is a path to a strongly component $K$ of $T_{\cal A}$. But this is immediate consequence of the fact that ${\cal A}$ uses finite memory and that there is no sink vertex in $T_{\cal A}$.
From Observation~\ref{obs:trivial} we have that from any vertex in component $K$ we can reach a fault free cycle $F^{*}$. Lemma~\ref{lemma:nopatrol} ensures that the displacement of $F^{*}$ is not zero. Therefore we can use Lemma~\ref{lemma:dance} proving the claim of the theorem. 

\section{Two agents with Global Visibility}
\label{global:two}

In this section we assume that the agents have access to a global snapshot of the configuration at each round during \aLook phase.
We first consider the simpler case of $k=2$ agents and show upper and lower bounds on patrolling for both the \unkwn and the \knw setting. 

\subsection{\unkwn setting}

\begin{figure}[h]
  \centering
    \includegraphics[width=0.7\textwidth]{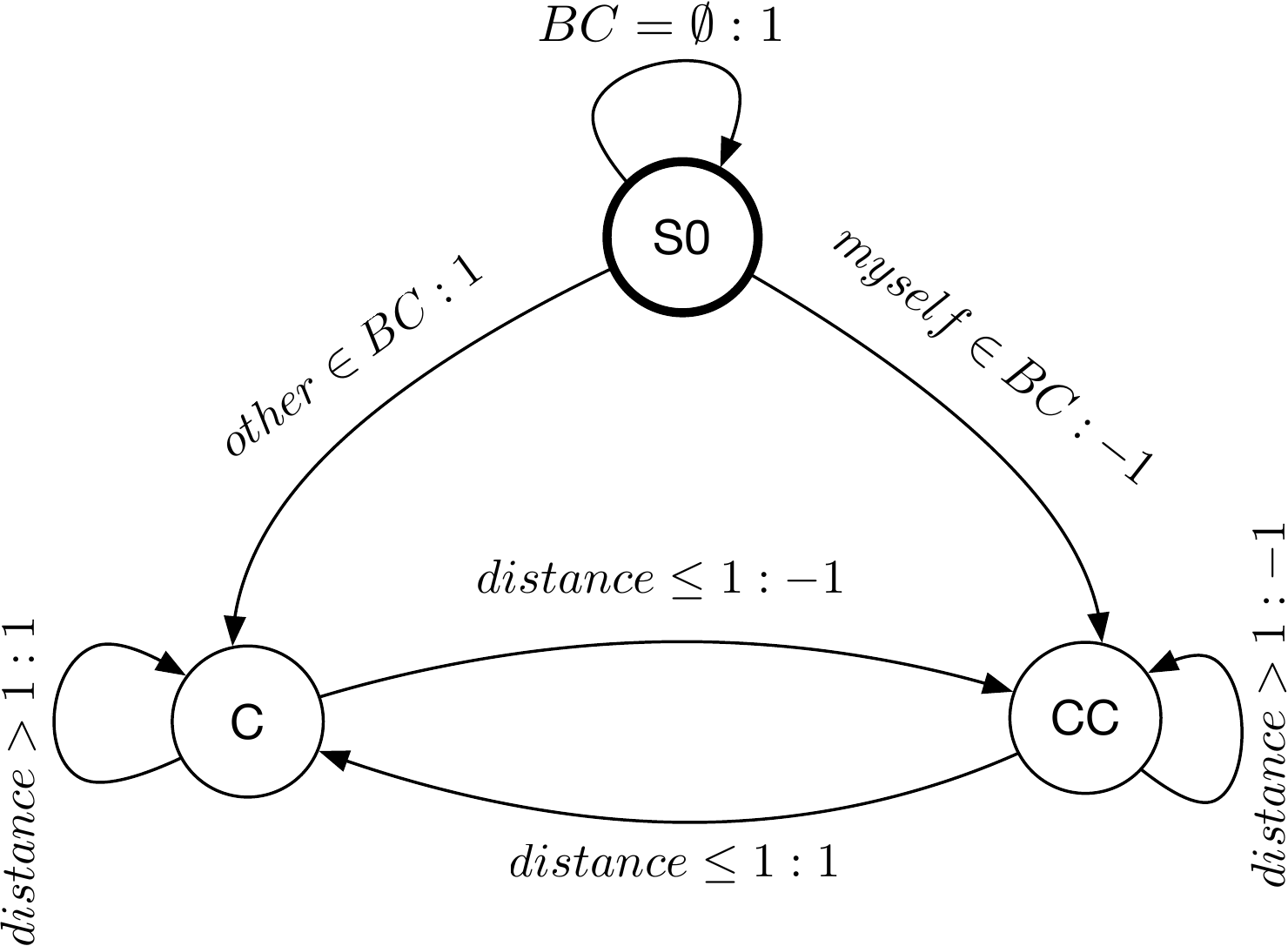}
      \caption{Algorithm  \pingpong  state diagram. The starting state is S0. Transition are of the form $Predicate:Movement$ where values of $1,-1,0$ denotes clockwise, counter-clockwise or no move, respectively.
      }\label{fig:pingpong}
\end{figure}

Given the graph $G_r$ at round $r$, we define as $BC_{r}$ (resp. $BCC_{r}$) the set of all agents that are attempting to move clockwise (resp. counter-clockwise) from a node $v$ that has the clockwise (resp. counter-clockwise) edge missing at the round $r$. 
We will remove the subscript $r$ when it is clear that we are referring to the current round.

\smallskip

We now describe a patrolling algorithm called \pingpong for $k=2$ agents in the \unkwn setting.
Initially, both agents move in the clockwise direction in each round, until they reach a round $r$ in which $BC_{r}$ is not empty. 
At this point the symmetry between agents is broken, and we assign to the agent in $BC_{r}$ the counter-clockwise direction while the other agent keeps  the clockwise direction. Starting from round $r$, the agents continue to move according to the following rule:
Move in the assigned direction until the minimum distance between the agents is less or equal to $1$; When this happen, both agents reverse their direction (i.e, the agents bounce off each other). 
The state diagram of the algorithm is presented in Figure~\ref{fig:pingpong}. 


\begin{theorem}\label{2unkwn:pingpong}
For any dynamic ring in the \unkwn model with Global Snapshot and arbitrary initial placement, Algorithm \pingpong allows two agents to patrol the ring with an idle time $I(n) \leq 2(n-1)$.
\end{theorem}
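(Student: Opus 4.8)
The plan is to split the execution into the initial \emph{all-clockwise} phase (every round $r$ before the first round with $BC_{r}\neq\emptyset$) and the subsequent \emph{ping-pong} phase, and to bound the idle time of an arbitrary node $v$ in each. For the all-clockwise phase I would first observe that the adversary cannot block either agent without placing it into $BC_{r}$, which immediately triggers the symmetry break and ends the phase; hence, for the entire duration of this phase both agents actually advance clockwise in every round. Two agents circling a ring of $n$ nodes in lock-step at mutual distances $d$ and $n-d$ (with $d\ge 1$, by injectivity of the arbitrary placement) revisit every node with gap $\max(d,n-d)\le n-1\le 2(n-1)$, so the claimed bound holds for as long as this phase lasts.

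For the ping-pong phase the key structural fact I would isolate is that, since the ring is $1$-interval-connected, at most one edge is missing in any round, so at most one agent can be blocked. The only configuration in which a single missing edge could block both agents simultaneously is when they sit on its two endpoints trying to cross it toward each other; but then their distance is $1$, and the algorithm makes them reverse and step \emph{away} from that edge. Likewise, in a reversing (bounce) round the single remaining missing edge can obstruct at most one of the two now-separating agents. Consequently, in every round of the ping-pong phase at least one agent performs a real move.

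To convert this into an idle-time bound I would pass to two abstract \emph{sweeps}: a clockwise sweep and a counter-clockwise sweep, obtained by relabelling the two real agents at each bounce so that the agent which begins moving clockwise inherits the clockwise sweep, and symmetrically. Because the agents are at distance at most $1$ when they reverse, the clockwise sweep keeps advancing monotonically clockwise around the ring, and likewise for the counter-clockwise sweep; each sweep therefore reaches any fixed target node $v$ after at most $n-1$ of its own advances. Combining this with the previous paragraph, where at least one sweep advances per round, any window of consecutive rounds that avoids $v$ has length at most $(n-1)+(n-1)=2(n-1)$, so $v$ is revisited within $2(n-1)$ rounds; taking the maximum over all $v$ and all executions yields $I(n)\le 2(n-1)$.

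I expect the main obstacle to be making the sweep relabelling rigorous on a \emph{discrete} ring. When two agents reverse from adjacent nodes $p$ and $p+1$ they do not land on a common node, as they would in the continuous fence-patrolling picture; the clockwise sweep effectively jumps from $p$ to $p+2$, skipping $p+1$. One must verify that the skipped node was visited by the other agent in the round immediately preceding the bounce, so that no node is ever missed by the pair, and it is precisely this one-node discrepancy at each reversal that inflates the per-sweep estimate and yields the factor $2(n-1)$ rather than the naive $n$. The remaining care is bookkeeping at the boundary between the two phases, to confirm that the gap of a node straddling the symmetry-break round is also covered by the $2(n-1)$ bound.
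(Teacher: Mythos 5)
Your treatment of the two pure cases is sound and follows essentially the same route as the paper: the same split into an all-clockwise phase and a bouncing phase, and the same key fact that $1$-interval connectivity blocks at most one agent per round (with the correct remark that the only configuration in which a single missing edge faces both agents is the distance-$1$ configuration, where the algorithm reverses instead of crossing). Your ``sweep'' relabelling is a rephrasing of the paper's bounce-to-bounce period argument; its one merit is that it makes explicit the skipped-node issue at a distance-$1$ bounce, which the paper's argument passes over silently, and your resolution (the skipped node is occupied by the other agent in the bounce round) is correct.

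The genuine gap is the case you defer as ``bookkeeping at the boundary'': consecutive visits $r_0,r_1$ of $v$ with $r_0$ in the first phase and $r_1$ in the second. This is not routine. Naively concatenating your two bounds --- up to $n-1$ rounds until the switch round $r$, then up to $2(n-1)$ rounds for the two freshly born sweeps to cover the ring --- yields only $3(n-1)$. Recovering $2(n-1)$ requires a positional argument at the switch that your proposal never states: since in the first phase both agents advance every round, the agent that visited $v$ at $r_0$ sits exactly $x=r-r_0$ steps clockwise of $v$ at round $r$, and the other agent cannot lie on the clockwise arc from $v$ to that agent (it would have visited $v$ in between), so the arc \emph{not} containing $v$ has length at most $n-x-1$. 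The paper then case-splits: if the assigned directions make the agents converge on the arc containing $v$, then $v$ is reached within $n-1$ further rounds, for a total of $x+(n-1)\le 2(n-1)$; otherwise they bounce within $n-x-2$ rounds and the following full sweep covers all nodes within $n-1$ more, for a total of $2n-3$. In your language, the same observation pins down the birth positions of the two sweeps so that the sum of their remaining distances to $v$ is at most $2n-2-x$, after which your pigeonhole closes the case; but without this observation the bound is simply not established for visits straddling the phase switch, so as written the proof is incomplete at exactly the point where the paper does real work.
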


\begin{proof}
The algorithm has two distinct phases. In the first phase, both agents move in the same direction, while in the second phase the agents always move in opposite directions. We need to show that for any node $v$, given two consecutive visits of $v$ at round $r_0$ and $r_1$ it holds that $r_1-r_0 \leq 2(n-1)$. 
First, let $r_0$ and $r_1$ be both in the first phase of the algorithm. Observe that in this phase each agent loops around the ring visiting each node once in every $n$ rounds. Since the agents on distinct nodes we have at most $n-1$ rounds between two visits of node $v$; thus $r_1-r_0 \leq n-1$.

Now we examine the case when $r_0$ and $r_1$ are both in the second phase. It takes at most $n-1$ rounds for the distance between the two agents to be $1$ or less--the agents are moving on opposing direction and at most one of them can be blocked at any round. This means that during a period that is upper bounded by $n-1$ all nodes are visited. Thus, there are at most $2(n-1)$ rounds between consecutive visits of a node $v$. 

Finally, we have to show that the bound still hold if $r_0$ is in the first phase and $r_1$ in the second. Let $r$ be the round in which the algorithm switches phase. We necessarily have $r-r_0=x \leq n-1$, by the previous discussion regarding the first phase. At round $r$, one agent is at distance $x$ from node $v$ and thus, the distance between the agents on the segment not containing $v$, is at most $(n-x-1)$. Now, if both agents are move towards $v$, then $v$ would be visited in the next $(n-1)$ rounds. Otherwise, the agents move away from $v$, therefore in at most $(n-x-2)$ rounds, the two agents would be at distance one or less. In the subsequent $n-1$ rounds all nodes would be visited (recall our previous discussion for the second phase). This implies that $r_1-r_0 \leq 2(n-1)$ in both cases.
 \end{proof}

Surprisingly, the algorithm \pingpong is almost optimal. 

\begin{theorem}\label{2unkwn:lb}
Under the \unkwn model with global snapshot and uniform initial placement, any patrolling algorithm ${\cal A}$ for two agents has idle time $I(n) \geq 2n-6$. 
\end{theorem}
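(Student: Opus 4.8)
The plan is to exhibit, for an arbitrary fixed algorithm ${\cal A}$, a single dynamic ring (built adaptively, which is legitimate in the \unkwn model) together with a node $v$ whose two consecutive visits are at least $2n-6$ rounds apart. Two preliminary observations make the target cleaner. First, the quantity bounded is $I(n)$, not the stable idle time, so the adversary is allowed to use the whole transient and need only produce \emph{one} bad interval. Second, by the convention that every node is visited at round $-1$, it suffices to keep some node $v$ unvisited throughout rounds $0,\dots,2n-8$: its first subsequent visit then occurs at a round $\ge 2n-7$, giving an idle interval $\ge 2n-6$. Crucially, the lower bound quantifies over a \emph{fixed} algorithm that is oblivious to the adversary's eventual choice of $v$; the adversary will always pick $v$ to be a node the two agents are currently moving \emph{away} from, so the natural ``two-sided pincer'' that would reach a \emph{known} target in $\approx n/2$ rounds is never set up against the actual target.

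The adversary I would use maintains at every round exactly one missing ring edge, so that each $G_r$ is a path (this is automatically $1$-interval-connected and consistent with uniform placement). Think of this missing edge as a movable \emph{wall}. On a static path the two agents are patrolling a line, whose hardest-to-visit points are the two endpoints adjacent to the wall; the KNW bound of Theorem~\ref{2knw:lb} is exactly the ``one fixed wall'' version giving $\approx n$. The extra factor of two comes from \emph{relocating} the wall adaptively. Concretely, I would keep as the protected node $v$ an endpoint-of-the-path that both agents are receding from, and track the potential $\Phi=$ distance, measured along the current path, from the nearer agent to $v$. Whenever an agent is about to cross onto $v$, the adversary shifts the wall by one position so that $v$ remains an endpoint but the crossing is wasted; each such shift is timed for a round in which no agent is simultaneously poised to cross on the opposite side (the single dangerous event, a simultaneous double-press, is what the adversary must and can avoid since the agents are kept on the same side of $v$). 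The intended conclusion is that reaching $v$ forces the nearer agent through essentially one full path-length ($\approx n$) and then, after one ``lure-and-shift,'' through a second near-full traversal, for a total of $\approx 2n$ rounds, i.e.\ the two forced traversals account precisely for the jump from the KNW value $n$ to $2n-6$.

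I expect the genuine obstacle to be proving that this relocating-wall strategy defeats \emph{every} \aCompute rule, not just \pingpong-like ones. The heart of it is an invariant argument, maintained round by round, on the ordered data (position of agent $1$, position of agent $2$, wall location, protected endpoint): I would show that for any attempted moves the adversary can update the wall by at most one step, keep $G_r$ connected, keep both agents strictly on one side of $v$, and charge every block it spends to agent ``progress'' that the later wall-shift forces to be repeated — so that $\Phi$ cannot be driven to $0$ before round $2n-7$. The delicate points are (i) ruling out that the algorithm splits the agents to the two ends of the length-$(n-1)$ path fast enough to create a double-press against the adversarially chosen $v$ (this is where the agents' obliviousness to $v$ and the wall sitting between the agents and $v$ are used), and (ii) accounting for the $O(1)$ rounds and the boundary effects lost at each wall-shift and near the agents' initial near-antipodal positions, which is exactly why the clean bound is $2n-6$ rather than $2n-2$. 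Once the invariant is established, the final step is routine: verify the schedule is a valid $1$-interval-connected ring, that $v$ is first visited no earlier than round $2n-7$, and hence $I(n)\ge 2n-6$.
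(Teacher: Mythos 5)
There is a genuine gap, and it sits exactly at the point you flag as ``delicate point (i)'': your plan cannot defeat every algorithm. First, your opening reduction---that it suffices to keep some node $v$ unvisited through rounds $0,\dots,2n-8$, exploiting the round $-1$ convention---sets an unachievable target. Against \pingpong itself (or any split-and-sweep rule), once the adversary blocks an agent the two agents move in opposite directions, and two agents pressing on \emph{distinct} edges cannot both be blocked in the same round, since each $G_r$ has at most one missing edge. So the unvisited territory shrinks by at least one node per round except during a bounded retrace after the agents bounce, and every node receives its first visit by roughly $\frac{3n}{2}+O(1)$ rounds, well before round $2n-7$. (Note the adversary cannot avoid breaking symmetry either: if it never blocks anyone, both agents circle clockwise and cover everything within about $n/2$ rounds.) Second, your invariant ``both agents strictly on one side of $v$'' is untenable for the same reason: to keep $v$ an endpoint the wall must sit on one of $v$'s two incident edges, and removing both would isolate $v$ and violate $1$-interval-connectivity; hence once the agents press the two edges of $v$ simultaneously (a ``double-press,'' forceable within $O(n)$ rounds of the split, since each wall-shift that blocks one agent frees the other to advance), one of them enters $v$. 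No potential/charging scheme repairs this, because the obstruction is not an accounting issue but a reachable configuration.

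The paper's proof takes a structurally different route that sidesteps this entirely: rather than protecting a node from \emph{ever} being visited, the adversary permanently imprisons \emph{one agent} on two adjacent nodes $v_{n-1},v_{n-2}$ (Observation~\ref{obs:2node}; constant visits to those two nodes are harmless), and then shows by a finite case analysis on the four-node ``gate'' $(v_0,v_{n-1},v_{n-2},v_{n-3})$ that no agent can ever traverse it---every interaction there either traps both agents or merely swaps which agent is imprisoned. This reduces the problem to a \emph{single} agent patrolling a line of $l=n-2$ nodes, whose endpoints force a \emph{recurring} gap of $2(l-1)=2n-6$ between consecutive visits: the agent must shuttle from one endpoint to the other and back. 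So the $2n-6$ arises in steady state, between two later visits, not before the first visit; your intuition that the factor $2$ over the \knw bound of Theorem~\ref{2knw:lb} comes from ``two forced traversals by relocating the wall'' is replaced, in the correct argument, by ``one agent is neutralized forever, and the surviving agent pays the single-agent line bound.'' If you want to salvage your write-up, replace the first-visit framing and the movable-wall invariant with a permanent two-node trap plus a local impossibility-of-passage argument at the trap's boundary.
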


\begin{proof}
We show that the adversarial scheduler can (1) entrap one of the agents on two neighboring nodes of the ring, say, nodes $v_{n-1},v_{n-2}$, and at the same (2) prevent the other agent from performing a full tour of the ring.  
Under the above two conditions, patrolling the ring by two agents reduces to patrolling a line of $l=n-2$ nodes by a single agent, for which we have an idle time of $2(l-1)=2n-6$.

Note that condition (1) can be easily achieved by the adversary (see Observation~\ref{obs:2node}). Suppose the adversary traps agent $a$ in the two nodes $v_{n-1},v_{n-2}$. We will now show that the other agent, call it agent $b$, cannot traverse the segment $S_4$ containing the $4$ consecutive nodes $(v_0, v_{n-1}, v_{n-2}, v_{n-3})$. Suppose the agent $b$ approaches this segment from $v_0$ (i.e. clockwise direction); the other direction can be symmetrically treated. 

Figure~\ref{fig:leftgate} depicts all the configurations reachable from this scenario.
The adversary can keep the agent $a$ trapped in the two nodes until the other agent reaches node $v_{n-1}$. At this time if the two agents are together, the adversary can keep both trapped using the same argument as before. The only other possibility is if the two agents are on the neighboring nodes $v_{n-1},v_{n-2}$, in which case the adversary removes edge $(v_{n-2}, v_{n-3})$, preventing any agent from leaving the segment $S_4$ in clockwise direction. Thus either both agents are trapped, or one of the agents can leave the segment by node $v_0$, i.e. the same direction in which the agent entered the segment.
Notice the agents may swap roles, so that agent $a$ leaves and agent $b$ is blocked in nodes $v_{n-1},v_{n-2}$. In either case, the condition (1) and (2) are satisfied and therefore the theorem holds.
\end{proof}

\begin{figure}
  \hspace{-0.5cm}
    \includegraphics[width=\textwidth]{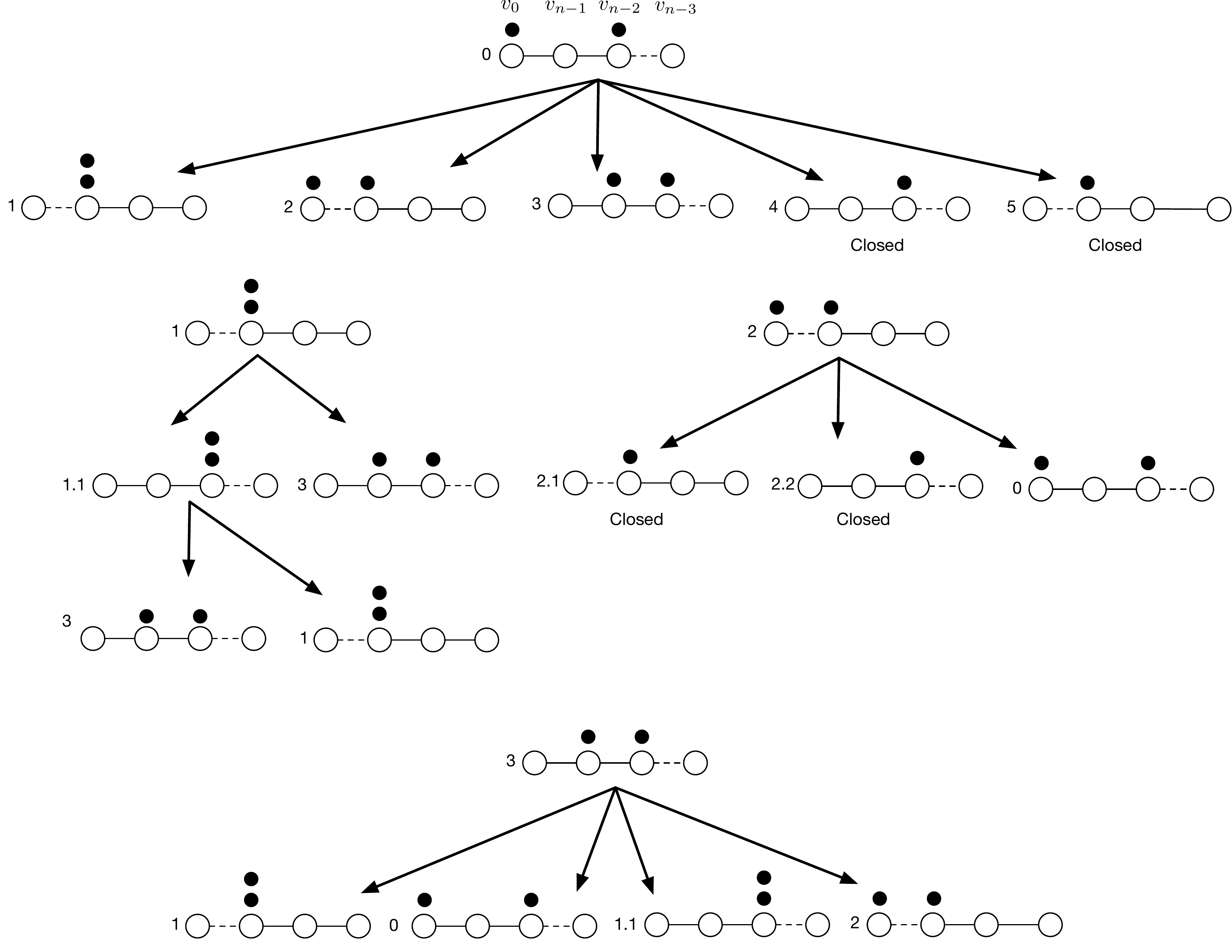}
      \caption{ Tree of reachable configurations for Theorem~\ref{2unkwn:lb}: agents are the black dot, ring nodes the white circles, and the missing edge the dotted line. We represent the set of reachable configurations as a tree, where we omit the child configuration when it is equal to the father configuration. A configuration is {\em closed} when there is only one agent in the $4$ nodes, that is also trapped in $v_{n-1},v_{n-2}$. We reach a closed configuration only when an agent moves using the counter-clockwise edge of node $v_{0}$. We can see that starting from the root, (or alternatively from Configuration $2$) all the reachable configurations are either {\em closed} configurations or configurations where no agent is on $v_{n-3}$.}\label{fig:leftgate}
\end{figure}

\subsection{\knw setting}\label{sec:placeandswipe}

In this subsection we examine the \knw setting. We first present a solution algorithm, namely \placeswipe, that solves the problem with an idle time of $3 \lceil \frac{n}{2} \rceil$ rounds, when there is an uniform initial placement of the agents.  
We then discuss how the algorithm can be adapted to work under arbitrary initial placement  by having a stabilisation time of $\lfloor \frac{n}{2} \rfloor$ and a stable idle time of $3 \lceil  \frac{n}{2} \rceil$ rounds. 
 

\begin{figure}[htb]
  \centering
    \includegraphics[width=0.8\textwidth]{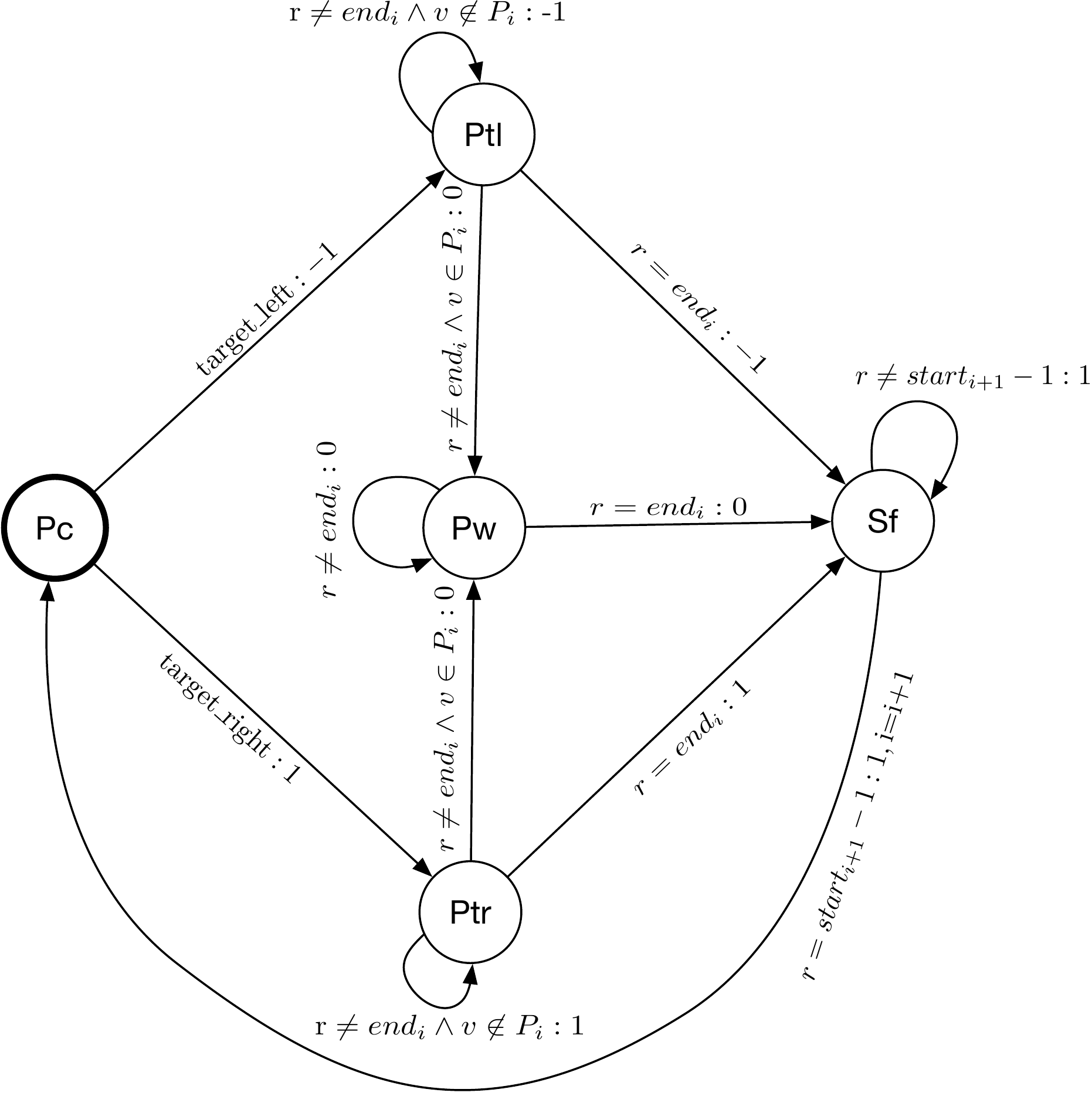}
      \caption{Algorithm  \placeswipe. The starting state in Pc, the Swiping Phase is Sf, and transitions are of the form $Predicate:Movement$}\label{fig:placeswipe}
\end{figure}

\paragraph{Patrolling Algorithm}
The algorithm \placeswipe (see Figure~\ref{fig:placeswipe}) perpetually alternates between two phases of fixed length (each phase lasts $\lceil \frac{n}{2} \rceil$ rounds). During the first phase, called \emph{Placement Phase}, the agents position themselves on a specially choosen pair of antipodal\footnote{A pair of nodes is antipodal if the  distance between them in the ring is $\lfloor \frac{n}{2} \rfloor$.} nodes -- the {\em swiping nodes}. In the second phase, called the Swipe Phase, the agents together visit all nodes of the ring by both moving clockwise for $\lfloor \frac{n}{2} \rfloor$ rounds without stop. 
A Placement Phase followed by Swipe Phase is an epoch of the algorithm, we use $i \geq 0$ to indicate the epoch number. 
Since every node is visited once in every Swipe phase, in the worst case, a node may be visited at the beginning of a Swipe phase and subsequently at the end of the next Swipe Phase, giving an idle time of at most $3 \lceil  \frac{n}{2} \rceil$ rounds.

We now show that for each epoch $i$, there exists a a special pair $P_{i}$ of antipodal nodes which allow the Swipe Phase to cover all nodes in $\lfloor \frac{n}{2} \rfloor$ rounds. Let $start_i=i  \cdot n$, and $end_i=\lceil (\frac{1}{2}+i)n \rceil -1$ be the starting and ending round of the $i$-th Placement Phase.

\begin{lemma}\label{lemma:swipe}
Given any dynamic ring ${\cal G}$ and any round $r=end_i+1$, 
there exists a pair of antipodal nodes $P_i$, such that two agents placed on $P_i$ and moving clockwise from round $end_i+1$ to the end round $start_{i+1}-1$, explore all nodes of the ring. 
\end{lemma}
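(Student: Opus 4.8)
The plan is to reduce the statement to a purely combinatorial counting problem about how far a single agent can advance clockwise during one Swipe Phase. Let $L=\lfloor n/2\rfloor$ be the number of clockwise moves performed in the phase (rounds $end_i+1$ through $start_{i+1}-1$). I would run the thought experiment behind Observation~\ref{obs:ilcinkas}: place a virtual agent on every node at round $end_i+1$ and let all of them move clockwise for the $L$ rounds. For a start node $v_j$ let $f(j)\le L$ be the number of steps the virtual agent from $v_j$ actually performs, so it visits exactly the $f(j)+1$ consecutive nodes $v_j,\dots,v_{j+f(j)}$. The two real swiping agents placed on an antipodal pair $\{v_a,v_{a+L}\}$ then visit $\{v_a,\dots,v_{a+f(a)}\}\cup\{v_{a+L},\dots,v_{a+L+f(a+L)}\}$, since each behaves exactly like the corresponding virtual agent.

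First I would establish the coverage criterion: this union equals the whole ring if and only if $f(a)\ge L-1$ and $f(a+L)\ge n-1-L$. Indeed, the only nodes that can remain uncovered lie in the clockwise gap between $v_{a+f(a)}$ and $v_{a+L}$, or in the wrap-around gap between $v_{a+L+f(a+L)}$ and $v_a$; a node in the first gap is clockwise-ahead of $v_a$ and clockwise-behind $v_{a+L}$, hence reachable by neither agent, and symmetrically for the second gap. So the problem becomes: find a rotation $a$ satisfying these two inequalities.

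The heart of the argument, and the step I expect to be the main obstacle, is the following counting lemma: the set $X=\{j: f(j)\le L-2\}$ of start nodes whose agent is blocked at least twice has size $|X|\le L-1$. The key observation is that an agent that has never yet been blocked stays on a fixed clockwise line, so at every round all never-blocked agents occupy pairwise distinct nodes; since at most one edge is missing per round, at most one never-blocked agent can be blocked in any given round, i.e.\ the round of each agent's \emph{first} block is unique to that agent. An agent in $X$ is blocked at least twice, so its first block occurs at some round $\le L-2$; as these first-block rounds are distinct and lie in $\{0,\dots,L-2\}$, we get $|X|\le L-1$. The same distinctness (Observation~\ref{obs:ilcinkas}) yields that the good set $G=\{j:f(j)=L\}$ satisfies $|G|\ge n-L=\lceil n/2\rceil$.

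Finally I would produce the pair by a short pigeonhole split on parity. For even $n$ we have $L-1=n-1-L=n/2-1$, so both inequalities reduce to $a\notin X$ and $a+L\notin X$; the antipodal relation is an involution splitting the nodes into $n/2$ pairs, and at most $|X|\le L-1<n/2$ of these pairs can meet $X$, so some pair avoids $X$ entirely and works. For odd $n$ we have $n-1-L=L$, so we need $a\notin X$ and $a+L\in G$; if every good node $w$ had $w-L\in X$, then $|X|\ge|G|\ge L+1>L-1$, a contradiction, so some good $w$ gives $a=w-L$ with $f(a)\ge L-1$ and $f(a+L)=L$. In either case $P_i=\{v_a,v_{a+L}\}$ is the required antipodal pair, which completes the proof.
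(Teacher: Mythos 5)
Your proof is correct, and at its core it uses the same two ingredients as the paper's own proof: the ``at most one never-blocked agent gets blocked per round'' counting of Observation~\ref{obs:ilcinkas}, and a pigeonhole over antipodal pairs. The execution differs, though. The paper simply plugs $h=\lceil n/2\rceil-1$ into Observation~\ref{obs:ilcinkas} to obtain a set $E_i$ of $\lfloor n/2\rfloor+1$ start nodes from which an agent is never blocked (hence visits exactly $\lceil n/2\rceil$ nodes), and concludes in one line that $E_i$ must contain an antipodal pair. You instead keep the exact phase length $L=\lfloor n/2\rfloor$, isolate an explicit coverage criterion ($f(a)\ge L-1$ and $f(a+L)\ge n-1-L$), and replace ``both swiping agents never blocked'' by the relaxation that an agent may be blocked once, supported by your counting lemma $|X|\le L-1$ --- which, note, is exactly Observation~\ref{obs:ilcinkas} at parameter $h=L-1$ (an agent never blocked in the first $L-1$ rounds has $f\ge L-1$, so at most $L-1$ agents lie in $X$), so your ``main obstacle'' is a re-derivation of a known ingredient rather than a new one. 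What your longer route buys is rigor precisely where the paper is terse: the paper's pigeonhole (``at least $\lfloor n/2\rfloor$ antipodal pairs vs.\ $\lfloor n/2\rfloor+1$ nodes'') is literally valid only for even $n$, where the antipodal pairs partition $V$; for odd $n$ the antipodal relation forms an $n$-cycle and one needs its independence number $(n-1)/2$, a step the paper skips and your odd-$n$ argument (the injective map $w\mapsto w-L$ together with $|G|\ge L+1>|X|$) handles cleanly. Your establishing the criterion as an ``if and only if'' is unnecessary --- only sufficiency is used --- but harmless.
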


\begin{proof}
The key idea to prove the existence of $P_{i}$ is Observation \ref{obs:ilcinkas}. By plugging $t=\lceil \frac{n}{2} \rceil -1$ in the statement of the observation. We have that there are $\lfloor \frac{n}{2} \rfloor +1$ nodes, let $E_i$ be this set, such that an agent being on one of these nodes at round
$end_i+1$ moving clockwise visits exactly $\lceil \frac{n}{2} \rceil$ nodes by the end of round $start_{i+1}-1$. Now we have to prove that $E_i$ contains a pair of antipodal nodes. But this is obvious since the ring contains at least $\lfloor \frac{n}{2} \rfloor$ antipodal pairs and the cardinality of $E_i$ is $\lfloor \frac{n}{2} \rfloor +1$.
Being the pair $P_{i}$ antipodals, when each agent visits $\lceil \frac{n}{2} \rceil$ nodes the ring has been explored. 
\end{proof}

%
To prove correctness of the algorithm, we need to show that agents starting from any uniform configuration, the two agents can reach the chosen nodes $P_i$ in $\lceil \frac{n}{2} \rceil$ rounds. 
Note that, for computing $P_i$ in each epoch, the algorithm needs only the knowledge of the future $n$ rounds of ${\cal G}$.

\begin{theorem}\label{th:placeswipe}
Consider the \knw model with Global snapshot and uniform initial placement. The algorithm \placeswipe allows two agents to patrol a ring with an idle time $I(n) \leq  3\lceil \frac{n}{2} \rceil$.
\end{theorem}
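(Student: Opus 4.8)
The plan is to treat the execution as a sequence of epochs of exactly $n$ rounds, each consisting of a Placement Phase of $\lceil n/2 \rceil$ rounds followed by a Swipe Phase of $\lfloor n/2 \rfloor$ rounds. Lemma~\ref{lemma:swipe} already guarantees that if the two agents sit on the antipodal pair $P_i$ at the first round of the $i$-th Swipe Phase, then every node is visited during that Swipe Phase. So the theorem reduces to two claims: (i) that the agents can always be brought onto an admissible $P_i$ by the start of each Swipe Phase, and (ii) that this periodic coverage yields the stated idle time. Claim (ii) is routine arithmetic: since each node is visited at least once per Swipe Phase, the largest gap between two consecutive visits of a node occurs when it is visited at the very first round of one Swipe Phase and only again at the very last round of the next one; counting the remaining $\lfloor n/2\rfloor$ rounds of the first Swipe Phase, the $\lceil n/2\rceil$ rounds of the intervening Placement Phase, and the $\lfloor n/2\rfloor$ rounds of the following Swipe Phase gives a gap of at most $n+\lfloor n/2\rfloor \le 3\lceil n/2\rceil$, hence $I(n)\le 3\lceil n/2\rceil$ (the convention that every node is visited at round $-1$ covers the first epoch).

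The real work is Claim (i), which I would split into an invariant and a reachability lemma. The invariant is that at the beginning of every Placement Phase the two agents occupy an antipodal pair. The base case is exactly the uniform initial placement. For the inductive step I would use that, by Lemma~\ref{lemma:swipe} (via Observation~\ref{obs:ilcinkas}), each of the two agents starting the Swipe Phase on $P_i$ and moving only clockwise visits \emph{exactly} $\lceil n/2\rceil$ nodes, i.e.\ performs exactly $\lceil n/2\rceil - 1$ successful clockwise steps over the $\lfloor n/2\rfloor$ rounds of the phase. Since both agents advance by the same net amount, their spacing is preserved, so at the end of the Swipe Phase (the start of the next Placement Phase) they again form an antipodal pair. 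This is precisely what lets the argument be iterated epoch after epoch from the uniform start.

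The main obstacle is the reachability lemma: starting from an antipodal configuration, the two agents must reach the target pair $P_i$ within the $\lceil n/2\rceil$ rounds of the Placement Phase, against an arbitrary $1$-interval-connected schedule. I would first cut down the distance to be covered: because a diameter is invariant under the half-turn of the ring, I am free to match either agent to either endpoint of $P_i$ and to rotate the pair either way, and choosing the best of these options reduces the required per-agent rotation to roughly $n/4$, so that the two agents together need at most $\lceil n/2\rceil$ elementary moves in a common rotational direction. The key lever is that at every round at most one edge is missing, so at most one of the two well-separated agents can be blocked; while both still need to move this yields at least one unit of progress per round. The delicate case, which I expect to be the heart of the proof, is when the adversary stalls one agent by keeping the edge in front of it permanently missing. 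Here I would argue that a persistently missing edge turns the ring into a static path on which both agents move freely, and that the antipodal structure forces each agent's target to lie within $\lceil n/2\rceil$ path-steps on the correct side of the cut (a direct check shows the two endpoints are then at path-positions whose difference is below $\lceil n/2\rceil$); the stalled agent therefore simply reroutes the short way, possibly swapping which endpoint of $P_i$ it aims for, using its knowledge of ${\cal G}$. A general (moving-cut) schedule interpolates between this handled persistent case and the easy intermittent case, and I would finish it with a time-accounting/potential argument on the blocked versus unblocked rounds. Combining the invariant, this reachability lemma, and the idle-time arithmetic then gives $I(n)\le 3\lceil n/2\rceil$.
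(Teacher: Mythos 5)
Your overall decomposition is exactly the paper's: the idle-time arithmetic bounding the gap by one Placement Phase sandwiched between two Swipe Phases, the invariant that the agents are antipodal at the start of each Placement Phase (preserved because, by Lemma~\ref{lemma:swipe} via Observation~\ref{obs:ilcinkas}, both agents make the same number of successful clockwise steps during a Swipe Phase), and the reduction of everything to a reachability lemma: from an antipodal configuration, reach the target antipodal pair $P_i$ within $\lceil \frac{n}{2} \rceil$ rounds. Your w.l.o.g.\ normalization (match either agent to either endpoint, rotate either way, so the counter-clockwise rotation distance is $x \leq \lfloor \frac{n}{2} \rfloor - x$) is also the paper's setup.

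The gap is in the reachability lemma itself, which you correctly identify as the heart of the proof but do not actually prove. You handle two extremes (intermittent blocking via combined progress, and a permanently missing edge via rerouting on the resulting static path) and then assert that ``a general (moving-cut) schedule interpolates'' between them, to be finished by an unspecified ``time-accounting/potential argument.'' That interpolation is precisely where the content lies, and your reactive framing would fail: if an agent first attempts the counter-clockwise route, stalls for $t > x$ rounds, and only then ``reroutes the short way'' (clockwise distance $\lceil \frac{n}{2} \rceil - x$ to the swapped target), it has only $\lceil \frac{n}{2} \rceil - t < \lceil \frac{n}{2} \rceil - x$ rounds left, which is not enough. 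The paper resolves this with an \emph{a priori} dichotomy exploiting the \knw model: the agents simulate the counter-clockwise plan offline in ${\cal G}$. Either it completes within the phase, or the simulated walk is blocked in at least $\lceil \frac{n}{2} \rceil - x + 1$ rounds; in each such round the unique missing edge lies \emph{on} the counter-clockwise path $p$, so every edge off $p$ --- in particular every edge of the two clockwise paths, which are edge-disjoint from $p$ --- is present. Hence executing the clockwise plan with swapped targets, each agent is guaranteed at least $\lceil \frac{n}{2} \rceil - x + 1$ rounds in which it can move, enough for the $\lceil \frac{n}{2} \rceil - x$ required steps. No potential function or interpolation is needed; the decision between the two plans is made before the phase starts, from knowledge of ${\cal G}$, which is the crux your sketch gestures at but does not nail down.
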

\begin{proof}
We first assume that agents are always able to reach $P_{i}$ in the $i$-th Placement Phase. 
By Lemma \ref{lemma:swipe} the agents explore in the Swipe Phase. Therefore, the idle time is upper bounded by the time that passes between the first round of Swipe Phase $i$ and the last round of Swipe Phase $i+1$. Since each Phase lasts $\lceil \frac{n}{2}\rceil$ rounds and between two Swipe Phases there is only one Placement Phase, then the idle time is $3\lceil \frac{n}{2} \rceil$ rounds. 
It remains to show that agents are able to reach $P_{i}$ during the $i$-th Placement Phase. 
First notice that at round $start_{i}$ agents are in antipodal positions: they start antipodal, and is easy to verify that at end of each Swipe Phase they are still antipodal.  
We have to show that given two agents on antipodal positions, they can reach, using the knowledge of ${\cal G}$, any pair of target antipodal positions in the interval $[start_{i},start_{i}+ \lceil \frac{n}{2}\rceil -1]$.
Let $a_0,a_1$ be the agents and $v_i,v_j$ the target nodes disposed as in Figure \ref{fig:antipodal}. W.l.o.g we assume $x \leq \lfloor\frac{n}{2}\rfloor -x$

\begin{figure}[H]

  \centering
    \includegraphics[width=0.4\textwidth]{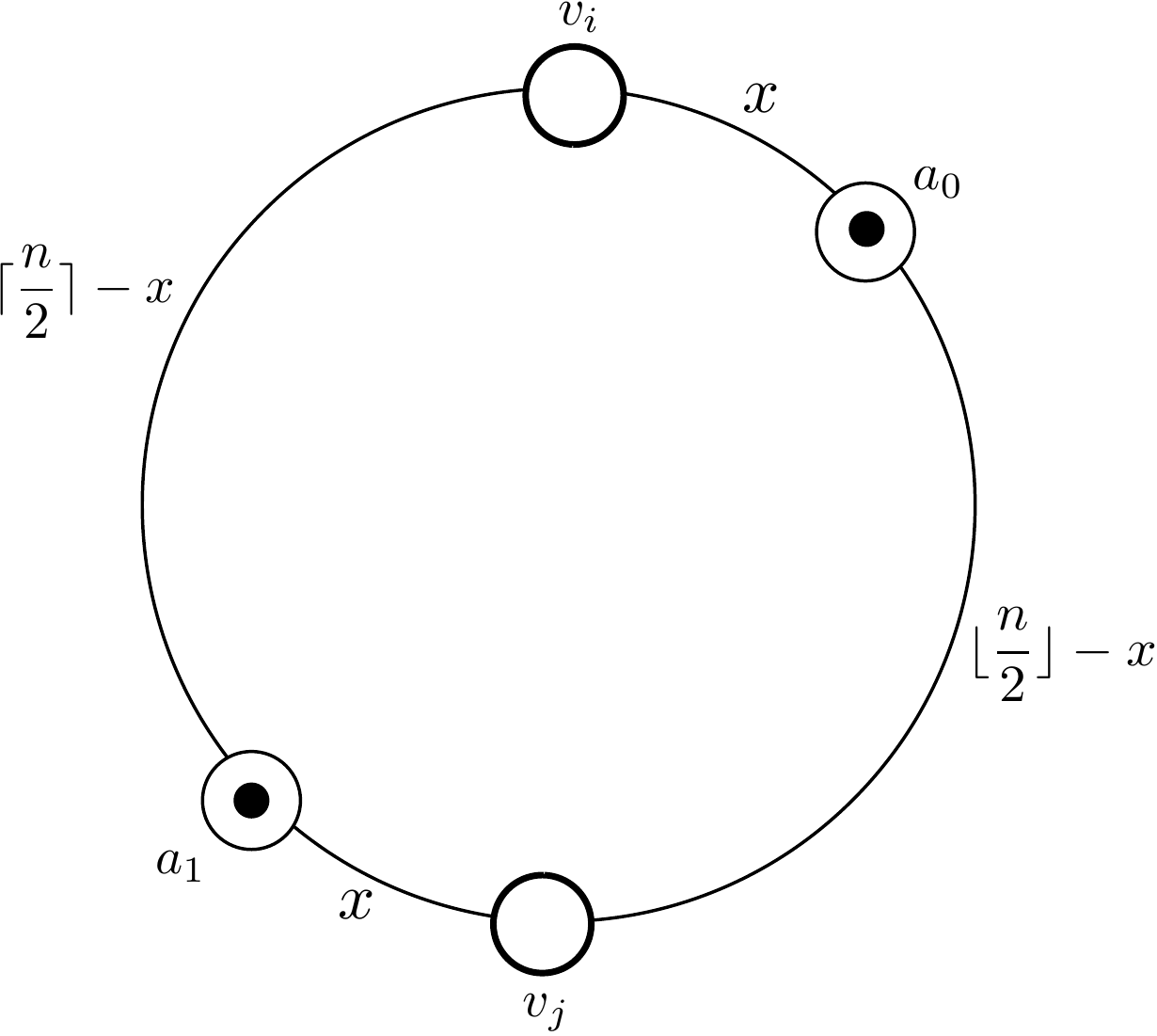}
      \caption{The values on the ring arcs are the number of edges between the depicted nodes.}\label{fig:antipodal}
\end{figure}
If agent $a_0$ reaches node $v_i$ using the counter-clockwise path $p$ and $a_1$ does the same with $v_j$ during the interval $[start_{i},start_{i}+\lceil \frac{n}{2} \rceil-1]$, then we are done. 

Otherwise, let us assume, w.l.o.g., that agent $a_0$ cannot reach node $v_i$ using the counter-clockwise path $p$ during the interval $[start_{i},start_{i}+\lceil \frac{n}{2} \rceil-1]$. Then in ${\cal G}$ agent $a_0$ is blocked for at least $\lceil \frac{n}{2} \rceil-x+1$ rounds while trying to traverse path $p$. This implies that there are at least $\lceil \frac{n}{2} \rceil-x+1$ rounds in which edges that are not on path $p$ are present in the interval $[start_{i},start_{i}+\lceil \frac{n}{2} \rceil-1]$. But this implies that, by moving in clockwise direction, agent $a_0$ reaches node $v_j$ and agent $a_1$ node $v_i$.  \end{proof}

\paragraph{Arbitrary initial placement.}
Theorem \ref{th:placeswipe} assumes that agents are starting at uniform distance.  However, it is possible to easily adapt the algorithm to work under any initial placement sacrificing the stabilization time. Essentially, we need an initialization phase in which agents place themselves in antipodal positions. This can be done in $\lfloor \frac{n}{2} \rfloor$ rounds: in each round, agents move apart from each other increasing the distance by at least one unit per round. Thus, we obtain an algorithm with stabilization time $r_s=\lfloor \frac{n}{2} \rfloor$ and $I_{r_s}(n) \leq 3\lceil \frac{n}{2} \rceil$.

\paragraph{Lower bounds.}
A lower bound of $n$ for the \knw setting is immediate from Th.~\ref{2knw:lb}. However, when the initial placement of the agents is arbitrary we can show a slightly better bound. 

\begin{theorem}\label{2knw:lbiplacement}
Let ${\cal A}$ be a patrolling algorithm for two agents with arbitrary initial placement under the \knw model with Global snapshot. For any even $n \geq 10$, there exists a $1$-interval connected ring where ${\cal A}$  has an idle time $I(n) \geq \lfloor (1+\frac{1}{5})(n-1) \rfloor$. 
\end{theorem}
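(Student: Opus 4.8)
The plan is to exhibit a single $1$-interval-connected ring $\mathcal{G}$, together with an adversarial (arbitrary) initial placement, on which \emph{every} pair of agent trajectories consistent with $\mathcal{G}$ leaves some node unvisited for at least $\lfloor \frac{6}{5}(n-1)\rfloor$ rounds. Since in the \knw model the agents are clairvoyant---their move at round $r$ may depend on the entire future of $\mathcal{G}$---it does not suffice to reason about a reactive scheduler; instead I would quantify over all connectivity-respecting trajectory pairs, because an optimal clairvoyant algorithm realises the best such pair. A first remark guiding the design is that, against clairvoyant agents, one \emph{cannot} confine an agent to a proper sub-arc: at every round at least one edge is present, so the ring minus the missing edge is a spanning path, and a clairvoyant agent can always pre-position itself at a boundary edge for the round it opens and cross. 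Consequently the confinement argument behind Theorem~\ref{2knw:lb} only yields $n$, and to beat it the lower bound must charge the \emph{cost of movement} rather than an impossibility of movement.

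The construction I would use forces one agent to act, for a full round trip, as the \emph{sole} patroller of a sub-path of roughly $\frac{3}{5}n$ nodes. Recall that a single agent on a line of $\ell$ nodes has idle time at least $2(\ell-1)$ (it must ping-pong; see \cite{CollinsCGKKKMP13}), so solo coverage of $\ell \approx \frac{3}{5}(n-1)+1$ nodes already gives idle $\frac{6}{5}(n-1)$. The mechanism is a \emph{moving cut}: keep the ring a spanning path by removing a single edge, but switch the removed edge between two positions $e_A$ and $e_B$ at adversarially chosen rounds. Switching the cut swaps which nodes are endpoints of the path, so a balanced coverage for cut $e_A$ is unbalanced for cut $e_B$; to rebalance, an agent must physically travel between the two halves, and the adversary times the switches (using the arbitrary placement to start the agents already on the wrong side) so that this traversal cannot be amortised. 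The target node $v$ is chosen near the cut that is about to become an endpoint, so that while the responsible agent is in transit, $v$ is left to the other agent---which is itself committed to its own half.

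The quantitative heart is a two-case balance. Fix the round at which $v$ is last visited and let $t$ be the clockwise distance the responsible agent has been driven past $v$. If the agents refuse to hand $v$ off across the cut, then $v$ is covered only by round trips of the responsible agent, costing at least $2t$; taking $t \approx \frac{3}{5}(n-1)$ yields a gap $\geq \frac{6}{5}(n-1)$. If instead they hand off---sending the second agent to cover $v$'s side---then that agent must reach $v$ from the far side, a detour whose length plus the wait forced at the currently-closed bridge is again $\geq \frac{6}{5}(n-1)$, using the tight counterpart of the fact that a single agent may be delayed up to $n-1$ rounds before reaching a fixed node. Equating the two costs is what fixes the split at the ratio $3:2$ and produces the additive surplus $\frac{1}{5}(n-1)$ over the $n-1$ of the static line; the restriction to even $n\geq 10$ and the outer $\lfloor\cdot\rfloor$ absorb the parity of the antipodal distance and the rounding of $\frac{3}{5}(n-1)$.

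The step I expect to be the main obstacle is certifying the bound against clairvoyance: because the agents see all future cuts, I cannot argue that ``the scheduler reacts to the agent,'' and must instead show that \emph{no} valid trajectory pair escapes the gap. Concretely, the delicate part is ruling out clever amortisation---an agent anticipating a switch and pre-travelling toward the about-to-be endpoint---by showing that any such pre-travel necessarily abandons $v$ (or its antipode) for the claimed duration. I would organise this exactly as the reachable-configuration tree used in Theorem~\ref{2unkwn:lb}, pruning the branches where an agent leaves the critical segment and checking that every surviving branch incurs the $\frac{6}{5}(n-1)$ gap; the bookkeeping of the cut-switch timing and of the $3:2$ segment lengths is where the real work lies.
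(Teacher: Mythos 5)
Your framing is correct on one important point---in the \knw model the dynamic graph must be fixed in advance, so the bound must be certified against every connectivity-respecting trajectory pair rather than against a reactive scheduler, and confinement arguments cap out at $n$---but the core mechanism you propose, a cut that alternates between just \emph{two} fixed edge positions $e_A$ and $e_B$, cannot deliver the claimed costs, and the step you yourself flag as ``the main obstacle'' is exactly where it breaks. With only two cut locations, a clairvoyant agent is delayed only while the cut sits at the one edge it wants to cross, and it knows the full switch schedule: it simply times its arrival at $e_A$ for a round when the cut is at $e_B$ (or vice versa) and crosses for free. The adversary's only recourse is to park the cut at one position for a long stretch, but then the instance degenerates to an (essentially static) path of $n$ nodes, on which two agents achieve idle time about $n$ (this is precisely the regime of Theorem~\ref{2knw:lb}, which gives only $2n/k=n$), not $\frac{6}{5}(n-1)$. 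So neither branch of your two-case balance holds up: the ``wait forced at the currently-closed bridge'' can be amortised away by pre-positioning, and nothing in the construction forces one agent to be the \emph{sole} patroller of a $\frac{3}{5}n$ sub-path for a full round trip---the other agent can always lend a hand through whichever edge is open. This is not residual bookkeeping; the construction itself must change.

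What the paper's proof uses instead, and what your construction is missing, is a \emph{sweeping} cut: the single missing edge changes position every round, advancing along the two half-segments (edge $e^{up}_i$ removed at round $4x(\frac{n}{2}-1)+2i$, $e^{down}_i$ at the interleaved odd rounds, then the wave reverses), with period $\Theta(n)$. Because the cut travels along \emph{with} an agent moving from the start nodes toward the antipodal end nodes, that agent is blocked every other round for the entire traversal: crossing the half-ring in the wave direction costs $n-2$ rounds (a factor-$2$ directional slowdown that no clairvoyant timing can evade), while travel against the wave is free. The bound is then obtained not from single-agent line patrolling but by accounting the first complete exploration plus a revisit of the start node: either both agents push to the end at half speed and one returns (cost $2n-4$), or agent $a$ turns back after $d=\frac{n}{2}-1-x$ steps (round trip cost $3d$: half speed out, full speed back) while agent $b$ sweeps $\frac{n}{2}+x$ nodes at half speed (cost $2(x+\frac{n}{2})$); balancing $3d=2(x+\frac{n}{2})$ gives $x=\frac{n}{10}-\frac{6}{10}$ and idle time at least $\lfloor(1+\frac{1}{5})(n-1)\rfloor$. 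Your $3{:}2$ ratio and the additive $\frac{1}{5}(n-1)$ surplus are numerically the right target, but they must be extracted from this movement-cost asymmetry, which a two-position cut does not create.
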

 \begin{figure}[htb]
\centering
\begin{subfigure}{0.45\textwidth}
\centering
    \includegraphics[width=\linewidth]{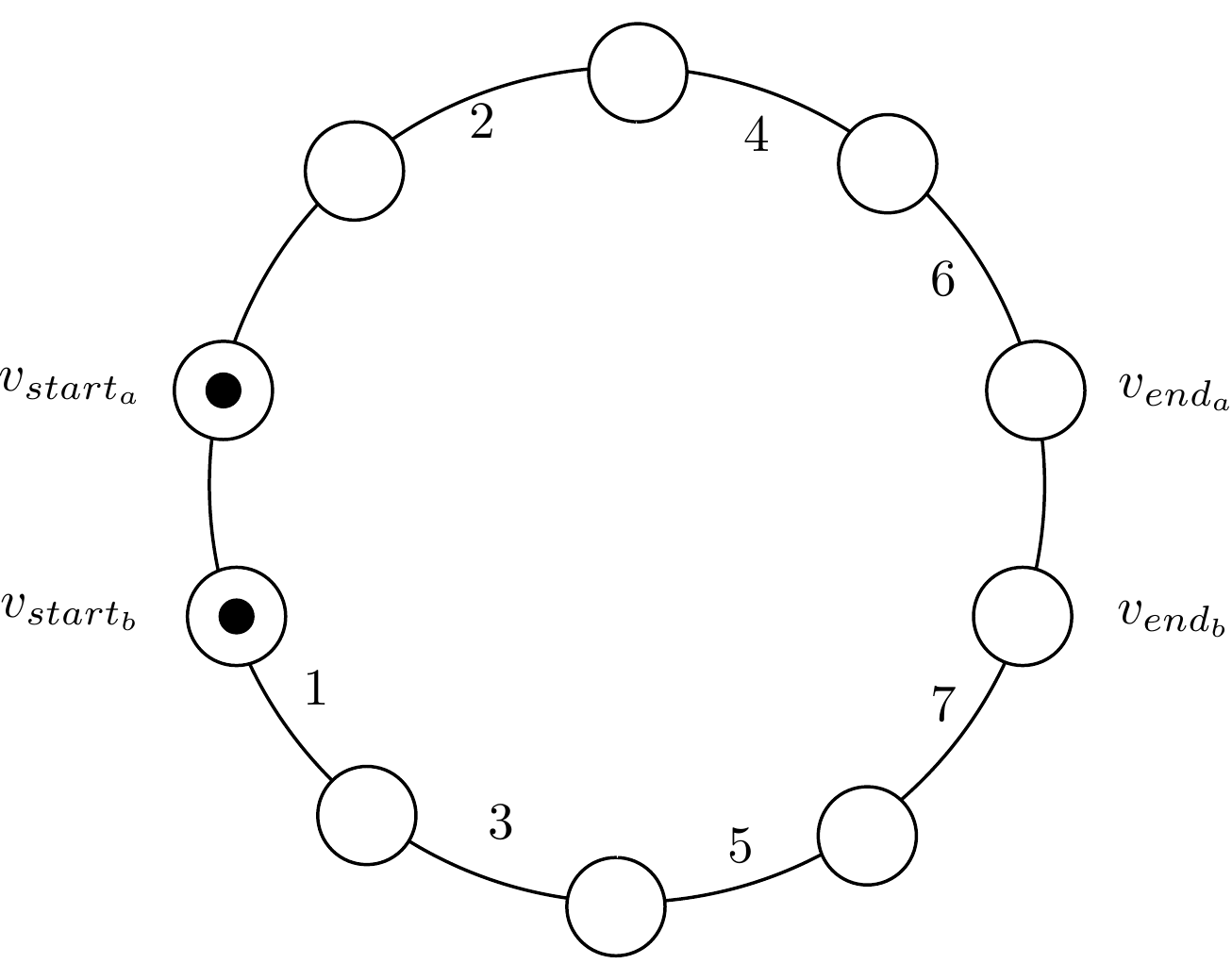}
    \caption{Example of lower bound graph: the edge label is the round in which the edge is removed in the start-end wave.}
    \label{fig:lowerbound1}
\end{subfigure}%
\quad
\begin{subfigure}{0.45\textwidth}
\centering
    \includegraphics[width=\linewidth]{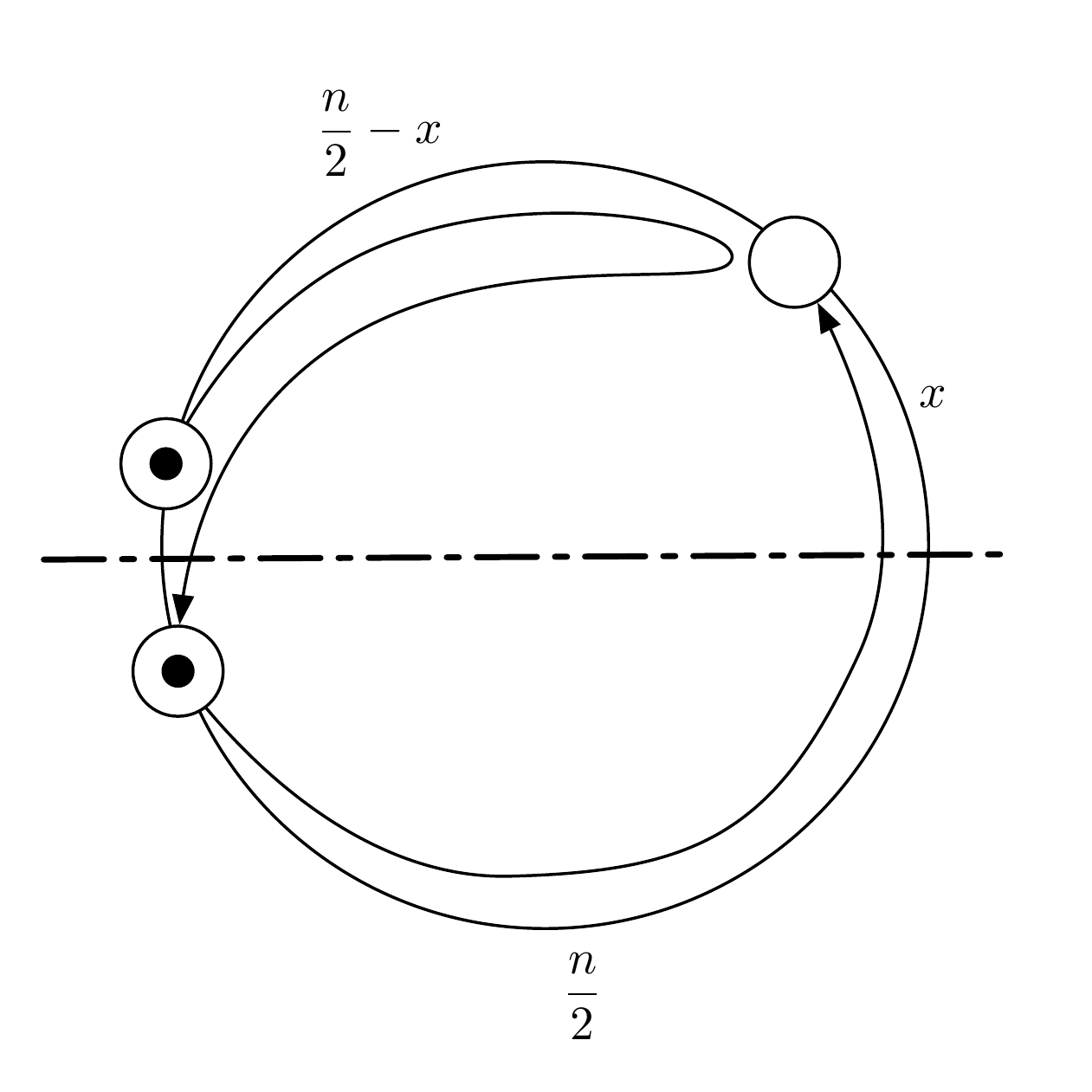}
    \caption{ Optimal Strategy for exploring all nodes an visit twice $v_{start_{a}}$.}
    \label{fig:lowerbound}
\end{subfigure}
\caption{Lower bound for the \knw model with Global Snapshot and arbitrary initial placement.}
\label{fig:problem}
\end{figure}

\begin{proof}
The agents $a,b$ are initially positioned on two neighbour nodes, let them be $v_{start_{a}},v_{start_b}$ and let $v_{end_{a}},v_{end_b}$ the antipodal nodes, see Figure~\ref{fig:lowerbound1}. The edges removal follows a {\em wave} strategy, in which in alternating periods of ${\cal O}(n)$ rounds, edges are first removed in the direction that goes from the two start nodes to the end nodes and vice versa. See Figure \ref{fig:lowerbound1} for an example of start-end wave, If an agent tries to reach the end nodes from start nodes during a start-end wave it takes at least $n-2$ rounds.

More formally, the scheduler of edges removal is as follows:
Let $e^{up}_i$ (resp $e^{down}_i$) be the edge at distance $i$ from $v_{start_{a}}$ (resp. $v_{start_b}$) in the clockwise segment $v_{start_{a}},v_{end_{b}}$ (resp. counter-clockwise segment $v_{start_{b}},v_{end_{a}}$). Note that $i \in [0,\frac{n}{2}-2]$.
Edge $e^{up}_i$ is absent in rounds $4x(\frac{n}{2}-1)+2i$ and  $4(x+1)(\frac{n}{2}-1)-2(i+1)$ for $x\in \mathbb{N}$.
Edge $e^{down}_i$ is absent in rounds $4x(\frac{n}{2}-1)+(2i+1)$ and  $4(x+1)(\frac{n}{2}-1)-(2i+1)$ for $x\in \mathbb{N}$.

Let us recall that the idle time is the maximum among the number of rounds between two consecutive visits of the same node, and the rounds needed to explore all nodes the first time. Therefore we focus on the minimum time that agents need to first explore all nodes and than to visit again node $v_{start_{a}}$.
If agents try to minimise the exploration time, then they pay at least $2n-4$ rounds to go back to $v_{start_{a}}$: they have to move in parallels to  nodes  $v_{end_{a}},v_{end_b}$, this takes $n-2$ rounds; after, at least one has to go back to $v_{start_{a}}$ and this takes other $n-2$ rounds.

Therefore, any strategies has to leave some nodes unexplored in the first $n-2$ rounds, one of the two agents has to go back to the start position before reaching the end nodes. While the other agent will take care of the nodes left unexplored by the first.

The strategy is reported in Figure \ref{fig:lowerbound}, w.l.o.g. agents $b$ covers $v_{start_b}-v_{end_b}$ and it moves of  $x$ steps in segment $v_{start_a}-v_{end_a}$. Agent $a$ moves clockwise of $d=\frac{n}{2}-1-x$ edges  and it goes back to its initial position. The best $x$ is the one such that $a$ visits again  $v_{start_a}$ exactly when $b$ completes the exploration. Thus imposing $3d=2(x+\frac{n}{2})$, by algebraic manipulation, we have $x=\frac{n}{10}-\frac{6}{10}$.
The time to explore all nodes is $2(\frac{n}{2}+x)$ (agent $b$ is blocked half of the rounds), thus the time is at least $\lfloor (1+\frac{1}{5})(n-1) \rfloor$ rounds. \end{proof}

\section{Patrolling with $k>2$ agents having Global Visibility}
\label{global:k}

In this section we examine the case of $k>2$ agents, showing how to generalize the algorithms of Section \ref{global:two} for this case. 

\subsection{\unkwn setting: Generalising  \pingpong for $k$ agents}
We generalize \pingpong for $k$ agents assuming that: $k$ divides $n$, $k$ is even, and that there is uniform initial placement. At the end of the section we discuss how to remove such assumptions.
The new algorithm, called \pingpongk (see Figure \ref{fig:pingpongk}) is divided in two phases, {\em Single-Group-Swiping}  and {\em Two-Groups-Swiping}, as described below.
\begin{figure}[htb]
  \centering
    \includegraphics[width=\textwidth]{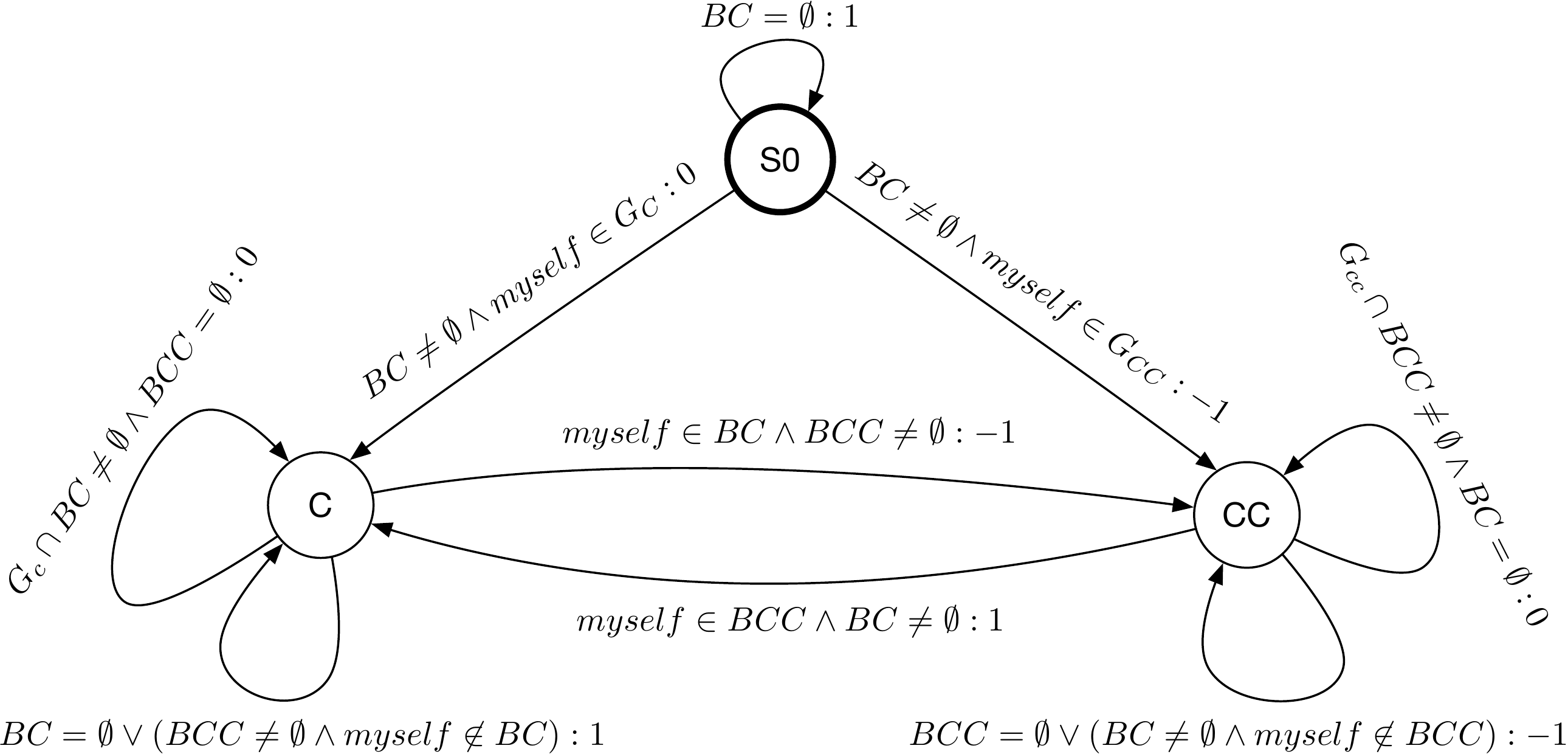}
      \caption{Algorithm  \pingpongk  state diagram. The starting state is S0. Movement on clockwise or counter-clockwise edge, or no move is denoted by $1$,$-1$,$0$ respectively.}
      \label{fig:pingpongk}
\end{figure}


%

The Single-Group-Swiping Phase starts at round $r=0$ and all agents move clockwise in this phase, keeping uniform distribution. The phase ends at the first round $r'$ when an agent is blocked. Starting from round $r'$, the Two-Groups-Swiping phase starts.
Recall that $BC_{r'}$ is the set of agents trying to move clockwise in round $r'$ that encounter a missing edge. Since the agents are in distinct nodes, only one agent, say agent $a_j \in BC_{r'}$. This breaks the symmetry among the agents and they can partition themselves in two groups: group clockwise $G_{C}$ and group counter-clockwise $G_{CC}$. The group $G_{C}$ contains agent $a_{(j+2t) \mod k}$ with $t \in \mathbb{N}$, and group $G_{CC}$ contains all other agents (see Figure \ref{fig:problem1}). The partition into groups happens during the computation phase of round $r'$. 
From round $r'$, the agents move according to the following rules:
\begin{itemize}
\item Rule 1 (Group Movement): For $X \in {C,CC}$, an agent in $G_{X}$ moves in direction $X$ if no agent in $G_{X}$ is blocked, i.e. $\nexists a \in BX_{r} \bigcap G_{X}$ . This predicate is represented by the loops in states $C$, $CC$ of Figure \ref{fig:pingpongk}. 
\item Rule 2 (Membership Swapping): If at some round $r''$ agents in both groups are blocked, then the agents in $BC_{r''}$ and $BCC_{r''}$ swap their role, i.e. they exchange their states and thus their group membership in this round. Any other agent in $G_{X}$ moves in direction $X$ during this round. This rule is represented by the arrows that connect state $C$ and $CC$ for the blocked agents in Figure \ref{fig:pingpongk}. 
\end{itemize}

Intuitively, for Rule $1$ a group $G_{X}$ moves when all the agents in the group would be able to move without trying to cross a missing edge. Rule $2$ is applied only when two agents, one from group $G_{C}$ and one from group $G_{CC}$ are on two nodes that share the same missing edge, and this allows the groups to perform a  ``{\em virtual movement}'', see Figures \ref{fig:problem2}-\ref{fig:problem3}. 
 
 \begin{figure}[H]
\centering
\begin{subfigure}{0.30\textwidth}
\centering
    \includegraphics[width=\linewidth]{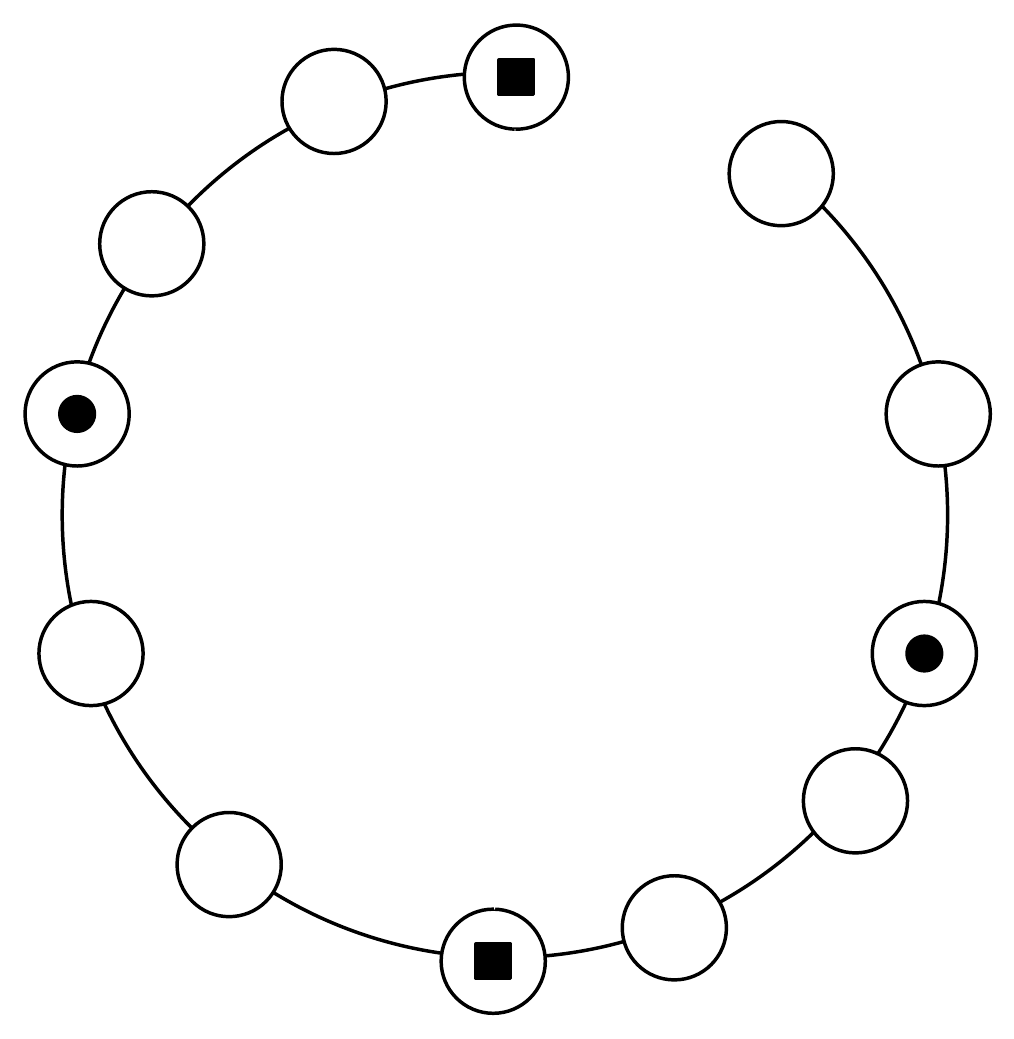}
    \caption{Starting round of Two-Groups-Swiping. $G_{C}$ (resp. $G_{CC}$) agents are marked with squares (dots)}
    \label{fig:problem1}
\end{subfigure}%
\quad
\begin{subfigure}{0.30\textwidth}
\centering
    \includegraphics[width=\linewidth]{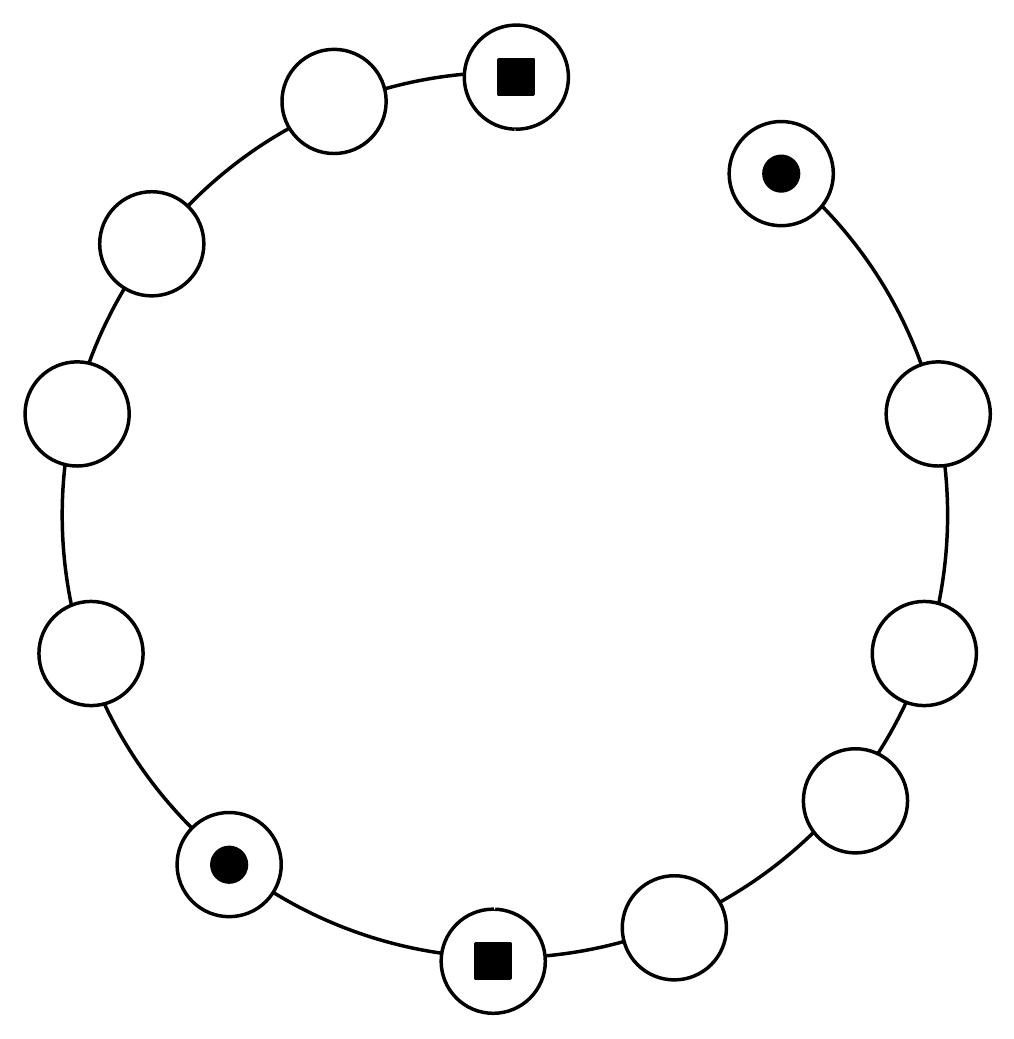}
    \caption{ Rule 1: Group $G_{C}$ is blocked. $G_{CC}$ reaches the other endpoint of the missing edge.}
    \label{fig:problem2}
\end{subfigure}
\quad
\begin{subfigure}{0.30\textwidth}
\centering
    \includegraphics[width=\linewidth]{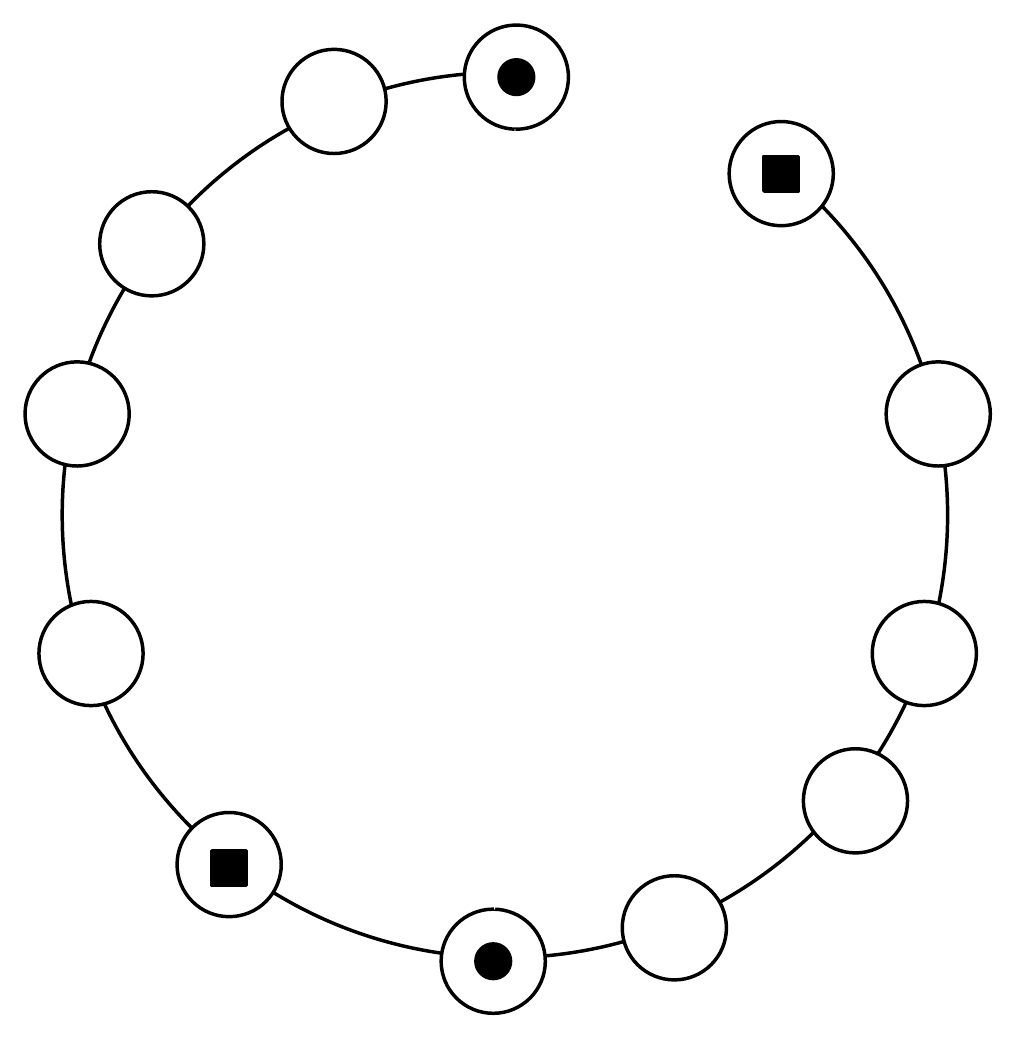}
    \caption{ Rule 2: The two blocked agents swap roles. Others move normally.}
    \label{fig:problem3}
\end{subfigure}
\caption{Algorithm \pingpongk, depiction of salient cases.}
\label{fig:problem}
\end{figure}

\begin{theorem}\label{twogroups:unkown}
The \pingpongk algorithm has an idle time of $\frac{4n}{k}$.
\end{theorem}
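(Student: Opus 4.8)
The plan is to reduce the whole analysis to two clean invariants about the two groups during the \emph{Two-Groups-Swiping} phase of \pingpongk, exploiting the fact that in a $1$-interval-connected ring at most one edge can be missing in any round (removing two edges would disconnect the ring). This single-missing-edge constraint is what makes the coupled group-movement rules behave well.

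First I would establish a \emph{uniform-spacing invariant}: throughout the Two-Groups-Swiping phase each group stays uniformly spaced with inter-agent gap $2n/k$. This holds at the first round $r'$ of the phase, because the uniform placement has spacing $n/k$ and the alternating assignment $a_{(j+2t)\bmod k}\in G_{C}$ puts every second agent into each group. I then argue it is preserved round by round. Since only one edge is missing, at most one agent of $G_{C}$ and at most one agent of $G_{CC}$ can be blocked; hence by Rule~1 a whole group either shifts rigidly by one step in its direction or stays put, both of which preserve the $2n/k$ spacing. The only delicate case is Rule~2, when both groups are blocked: the two blocked agents then sit on the two endpoints of the unique missing edge and swap membership. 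I would verify by a direct position computation that this swap is indistinguishable, at the level of the two \emph{sets} of group positions, from a rigid one-step shift of $G_{C}$ clockwise and $G_{CC}$ counter-clockwise across the missing edge (the ``virtual movement''). Thus the spacing stays $2n/k$ and, crucially, in this case both groups advance by one.

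Next I would record a per-round \emph{progress dichotomy}. Classifying each round by which groups are blocked: if neither is blocked both groups advance; if only $G_{C}$ (resp. only $G_{CC}$) is blocked the other group advances while the blocked one stalls; and if both are blocked the Rule~2 swap makes both groups advance. Hence in \emph{every} round at least one group advances by one step. Writing $A_{C}$ and $A_{CC}$ for the cumulative clockwise, resp. counter-clockwise, displacement of the two groups over a window of $T$ rounds, this gives $A_{C}+A_{CC}\ge T$. Then, because each group is uniformly spaced with gap $2n/k$, once $A_{C}\ge 2n/k$ every node has been swept by some $G_{C}$ agent (each of the $k/2$ agents covers exactly its own gap), and symmetrically for $G_{CC}$. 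So any node $v$ is visited as soon as $A_{C}\ge 2n/k$ or $A_{CC}\ge 2n/k$. For a window in which $v$ is never visited we therefore have $A_{C}\le 2n/k-1$ and $A_{CC}\le 2n/k-1$, whence $T\le A_{C}+A_{CC}\le 4n/k-2$. This bounds the gap between consecutive visits of every node by $4n/k$, as claimed.

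Finally I would dispose of the transient: during the \emph{Single-Group-Swiping} phase all $k$ agents sweep clockwise at spacing $n/k$, giving the smaller idle time $n/k\le 4n/k$, and the one-time transition into the Two-Groups phase together with the initialization are absorbed into the $O(n)$ stabilization time, after which the steady bound $4n/k$ applies. I expect the main obstacle to be exactly the uniform-spacing invariant under Rule~2: one must check that the membership swap of the two agents straddling the missing edge is genuinely equivalent, as a shift of group positions, to a rigid one-step advance of both groups, since it is this identification that simultaneously keeps each group spaced by $2n/k$ and guarantees that a group never gets permanently stuck while the missing edge sits inside it. Everything else is bookkeeping driven by the single-missing-edge constraint.
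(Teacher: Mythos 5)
Your within-phase argument is sound and essentially the paper's: the paper likewise rests on the two facts that in every round of the Two-Groups-Swiping phase at least one group advances, and that each group is spaced $\frac{2n}{k}$ apart so a node $v$ is at distance at most $\frac{2n}{k}-1$ from the nearest approaching agent of each group, giving $r_1-r_0\leq \frac{4n}{k}$ between consecutive visits inside the phase. Your explicit verification that the Rule~2 swap acts on the two \emph{sets} of group positions as a rigid one-step shift (so the spacing invariant and per-round progress both survive) is actually spelled out more carefully than in the paper, which leaves this to the ``virtual movement'' figures. So far, same route, slightly more rigor.

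The genuine gap is your treatment of the phase transition. You write that ``the one-time transition into the Two-Groups phase \dots\ [is] absorbed into the $O(n)$ stabilization time, after which the steady bound $\frac{4n}{k}$ applies.'' This fails: the switch round $r'$ is the \emph{first} round in which some agent is blocked, and in the \unkwn model the adversary chooses when (if ever) to remove an edge. It can therefore trigger the switch after any fixed stabilization round $r_s$, so the visit gap straddling the transition cannot be swept into a transient; it must be bounded directly, and this is exactly the third case in the paper's proof. Concretely: the last visit of $v$ before the switch can be as early as $r'-(\frac{n}{k}-1)$ (an agent was about to reach $v$ and reversed), and at round $r'$ the interleaved C/CC pattern places the nearest approaching $G_{C}$ and $G_{CC}$ agents at \emph{combined} distance at most $\frac{3n}{k}-1$ from $v$ --- not $\frac{4n}{k}-2$ as in your steady-state window, which starts from a fresh visit of $v$. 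Your own progress dichotomy ($A_{C}+A_{CC}\geq T$) then yields $r_1-r'\leq \frac{3n}{k}$, and adding the pre-switch idle $\frac{n}{k}-1$ recovers $r_1-r_0\leq \frac{4n}{k}$. So the hole is repairable with the machinery you already built, but as written the stabilization argument does not cover this case and the proof is incomplete.
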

\begin{proof}
Consider two successive visits of a node $v$ at $r_0$ and $r_1$. 
If $r_0$ and $r_1$ belongs to the Single-Group-Swiping phase, then $r_1-r_0 \leq \frac{n}{k}$: the distance between two agents at most $\frac{n}{k}$ and they are all moving in the same direction.

Let us examine now the case when $r_0$ and $r_1$ belongs to the Two-Groups-Swiping phase. Notice that: (1) in each round at least one group moves, and (2) At round $r_0+1$, the distances between node $v$ and the closest agents in $G_{CC}$ (resp. $G_{CC}$) that are moving towards $v$ are at most $\frac{2n}{k}-1$.
Due to (1) we know that in the successive $\frac{4n}{k}$ rounds, at least one group performs $\frac{2n}{k}$ steps, thus reaching $v$.
Therefore $r_1-r_0 \leq \frac{4n}{k}$.

It remains to examine the last case when $r_0$ belongs to the Single-Group-Swiping phase and $r_1$ belongs to the Two-Groups-Swiping phase.
Thus we have $r_0 \leq r' \leq r_1$. Notice that, the worst case is the one in which $r' -r_0$ is maximised, that is $r' -r_0=\frac{n}{k}-1$: $v$ was about to be visited by an agent $a$, but $a$ switches direction.
In this case, the distance between $v$ and the first agent in $G_{CC}$ moving towards $v$, added to the distance between $v$ and the first agent in $G_{C}$ moving towards $v$, is $\frac{3n}{k}-1$. 
Using the same analysis of the previous case we obtain $r_1-r' \leq \frac{3n}{k}$ that gives $r_1-r_0 \leq \frac{4n}{k}$.
\end{proof}

\paragraph{When $k$ is not a divisor of $n$.}
In the case $k$ does not divide $n$, we have that in the initial placement the minimum distance between two agents is $\lfloor \frac{n}{k} \rfloor$ and the maximum distance is $\lfloor \frac{n}{k} \rfloor+1$. We can use the same analysis of Theorem \ref{twogroups:unkown}, taking into account the difference in the distance, which gives a bound of $\lfloor \frac{4n}{k} \rfloor+2$.

\paragraph{When agents are not uniformly placed.}
If agents are not uniformly placed initially, they can arrive at a uniform configuration in $O(n)$ steps.

\begin{observation}\label{obs:uniformspread}
Consider a set of $k \geq 2$ agents arbitrarily placed in a dynamic ring under the \unkwn model with global snapshot, then the agents need at most $2n$ rounds to reach an uniform placement in the ring.
\end{observation}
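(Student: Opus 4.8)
The plan is to show that $k$ agents arbitrarily placed on a dynamic ring under the \unkwn model with global snapshot can reach a uniform placement within $2n$ rounds. First I would set up the invariant to maintain. Since the agents have global visibility, every agent sees the entire configuration $C_r$ at each round, and because the ring is oriented, all agents can compute, in a consistent way, the same target uniform placement. The natural target is the one that minimizes total movement: fix the cyclic order of agents (which is preserved as long as agents do not cross each other), and associate to the current configuration the unique uniform placement reachable by moving each agent toward its designated slot without changing the cyclic order. The goal is to drive every inter-agent gap to $\lfloor n/k \rfloor$ or $\lceil n/k \rceil$.

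Next I would describe the spreading rule and analyze its progress. Consider the $k$ gaps $g_0, \ldots, g_{k-1}$ between consecutive agents (in clockwise order), so that $\sum_i g_i = n$. A gap is \emph{too large} if $g_i > \lceil n/k \rceil$ and \emph{too small} if $g_i < \lfloor n/k \rfloor$. The rule is that each agent attempts to move so as to shrink any oversized gap adjacent to it and enlarge any undersized gap; equivalently, an agent moves in the direction of the larger of its two adjacent gaps whenever that gap exceeds the uniform target. The key point I want to establish is a progress measure, for instance the potential $\Phi(r) = \sum_i |g_i - n/k|$ (or the maximum gap), which is non-increasing and strictly decreases while the configuration is non-uniform. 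The crucial use of $1$-interval-connectivity is Observation~\ref{obs:2node} applied in reverse: at each round the adversary can block at most one edge, hence it can prevent at most one agent (or one group boundary) from moving, so at least the remaining agents make progress toward balancing their gaps.

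The main obstacle I expect is proving the $2n$ bound on the time rather than merely that uniformity is eventually reached. The quantitative argument is as follows: the total displacement needed to move from an arbitrary placement to the target uniform placement is at most $\tfrac{n}{2}$ per agent in the worst case (each agent need travel no more than half the ring to reach its slot, since the target is chosen to respect cyclic order and minimize movement). Summed differently, the largest gap can exceed the target by at most $n - k\lfloor n/k \rfloor \le n$, and since in each round at least one agent on the boundary of the largest oversized gap advances (the adversary blocking a single edge can delay at most a constant number of boundary moves), the excess of any single gap decreases on average by a positive amount per round. I would bound the total number of rounds by relating $\Phi(0) \le n$ to the per-round decrease; the factor $2$ in $2n$ absorbs the rounds lost to adversarial blocking, where each blocked edge can stall progress by at most a matching free round elsewhere. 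The delicate step is verifying that no two agents meet during this process and that the cyclic order is preserved, so that the target slots remain well-defined throughout; this follows because an agent only moves into an oversized adjacent gap and never crosses its neighbor, keeping all gaps strictly positive.
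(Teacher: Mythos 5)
There is a genuine gap, in fact two. First, your progress measure is asserted rather than proven, and the natural candidate you name fails: under your local rule, when an agent bounding an oversized gap $g_i$ advances into it, $|g_i - n/k|$ drops by one but the agent's other adjacent gap grows by one, so $\Phi(r)=\sum_i |g_i - n/k|$ can stay constant for many rounds while imbalance merely propagates around the ring like a wave. Non-increase of $\Phi$ plus ``at least $k-1$ agents are unblocked'' does not yield strict decrease, so the step from $\Phi(0)\le n$ to a $2n$ round bound does not follow; your own phrase ``decreases on average by a positive amount per round'' is exactly the claim that needs a proof and does not have one. Second, your fixed-target formulation (each agent travels to its designated slot, at most $n/2$ per agent) is broken by the adversary that removes one specific edge forever: the single agent needing to cross it never reaches its slot, and progress by the others is irrelevant because the target set is fixed. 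Since uniformity is a property of \emph{relative} positions, the team must re-anchor the target around a blocked agent. This is precisely the mechanism in the paper's proof: agents are totally ordered from a reference agent $a_0$, each $a_j$ positions itself relative to $a_{j-1}$, and crucially, if $a_j$ is blocked then all agents $a_i$ with $i<j$ shift instead, so the relative configuration improves every round regardless of which single agent the adversary stalls. Your proposal has no analogue of this, and the remark that ``the factor $2$ in $2n$ absorbs the rounds lost to adversarial blocking'' is not an argument.

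A further issue you gloss over is anonymity. Agents are identical, so two agents whose global snapshots coincide (which happens exactly when the configuration is rotationally periodic) must make identical, rotationally symmetric moves; in such configurations your ``unique uniform placement that minimizes total movement'' need not be unique, and no canonical tie-breaking is computable by symmetric agents. The paper spends half its proof on this case: it partitions a periodic configuration into its $P$ period segments, uses chirality to obtain a total order inside each segment and solves each segment locally in at most $n$ rounds, and switches to the totally ordered algorithm the moment the adversary removes an edge (which breaks the symmetry by distinguishing the nearest clockwise agent as $a_0$). The $2n$ bound then comes from composing the two phases, $n$ rounds each. Your gap-balancing rule is at least symmetric, so anonymity alone does not kill it, but without (i) a working potential argument and (ii) a re-anchoring mechanism against a persistently removed edge, the proposal does not establish the observation.
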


\begin{proof}
According to the initial configuration we may have that it is possible to find total order among agents or not.
Let us first assume the case when such total order does not exists. In this case the initial configuration is periodic, with period $P$, and it can be partitioned in $P$ segments.
In each of this segment a total order among robots exists, thanks to the presence of chirality. Therefore in each segment we can uniquely identify an agent and place it in the correct position. 
If there are no edge removal this terminates in at most $n$ rounds. If there are edge removal then we will show that there exists a total order. So we will use the algorithm for the total order case.

Let us consider the case when there is a total order, or when an edge is remove.
When an edge is removed the symmetry between agents is broke: the first agent in the total order, let it be $a_0$, is the nearest clockwise agent to the removed edge. The others are ordered according to the clockwise direction starting from $a_0$.
The agents place themselves uniformly using agent $a_0$ as reference.
Agent $a_{j}$, with $j >0$ moves clockwise or counter-clockwise according to the need of shrinking or expanding its distance from the final position of $a_{j-1}$. In case $a_j$ is blocked, then all agents $a_{i}$ with $i < j$ move counter-clockwise or clockwise to set $a_j$ in the correct position. This process requires at most $n$ rounds. 
The bound of $2n$ rounds comes from composing the two previous algorithm in the obvious way: if there is no total order we run the algorithm for periodic configurations and we switch to the one with total order as soon as the adversary introduces a failure. 
\end{proof}

\paragraph{When $k$ is odd.}
The problem for odd $k$ is that once the algorithm switches to the Two-Group-Swiping phase, the groups $G_{C},G_{CC}$ do not have equal sizes. One group has size $\frac{k-1}{2}$ and the other $\frac{k+1}{2}$.
Moreover, within each group the members are not uniformly placed. The last problem is easy fixable at the price of stabilization time using Observation \ref{obs:uniformspread}.
Once the groups are uniformly placed, we can bound the idle time to $\frac{4nk}{k^2-1}+4$, as shown in the following lemma:
\begin{lemma}
When one group has size $\frac{k-1}{2}$ and the other  $\frac{k+1}{2}$, the Two-Groups-Swiping phase of \pingpongk has an idle time of at most $\frac{4nk}{k^2-1}+4$ rounds.
\end{lemma}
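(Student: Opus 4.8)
The plan is to adapt the counting argument from the proof of Theorem~\ref{twogroups:unkown} to the case of unequal group sizes. Let the two groups have sizes $k_1 = \frac{k-1}{2}$ and $k_2 = \frac{k+1}{2}$, with each group assumed uniformly placed internally (guaranteed by Observation~\ref{obs:uniformspread}). The first thing I would pin down is the maximum gap each group must cover. Within the group of size $k_1$, the spacing between consecutive members of that group is at most $\lceil n/k_1 \rceil = \lceil \frac{2n}{k-1} \rceil$, and similarly the group of size $k_2$ has internal spacing at most $\lceil \frac{2n}{k+1} \rceil$. The crucial observation to reuse is that in each round at least one of the two groups advances a full step (point (1) in the earlier proof), since Rule~1 and Rule~2 together guarantee that whenever one group is blocked the other can either move or swap.

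Next I would bound the idle time by tracking, for an arbitrary node $v$ and two consecutive visits at rounds $r_0$ and $r_1$, how long it takes for some approaching agent to reach $v$. Just after $r_0$, the closest member of $G_C$ moving toward $v$ is within the clockwise group-spacing, and the closest member of $G_{CC}$ moving toward $v$ is within the counter-clockwise group-spacing. Because at least one group moves every round, within any window of $T$ rounds the two groups together make $T$ forward steps split between them; the worst case for $v$ is when both approaching agents are maximally far, i.e. at distances $\lceil \frac{2n}{k-1}\rceil$ and $\lceil \frac{2n}{k+1}\rceil$ respectively. A node $v$ is visited once one of these two approaching agents closes its gap. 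Since the adversary can stall whichever group has the smaller remaining gap, the relevant quantity is governed by the larger spacing, namely $\lceil \frac{2n}{k-1}\rceil$ for the smaller group; I would then argue that because every round advances \emph{some} group, the total number of rounds before $v$ is hit is at most the sum of the two group-spacings, giving roughly $\frac{2n}{k-1} + \frac{2n}{k+1} = \frac{4nk}{k^2-1}$, and the additive constant $+4$ absorbs the two ceiling roundings plus the off-by-one boundary effects (as in the $\frac{n}{k}-1$ term handled in Theorem~\ref{twogroups:unkown}).

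I would organize the case analysis exactly as before: (a) both $r_0,r_1$ inside the steady Two-Groups regime, handled by the sum-of-spacings bound above; and (b) the transition rounds, where I would confirm that the phase-switch does not inflate the bound beyond the additive constant. The key algebraic step is simplifying $\lceil \frac{2n}{k-1}\rceil + \lceil \frac{2n}{k+1}\rceil \le \frac{4nk}{k^2-1} + 2$ and then accounting for the boundary round where $v$ is on the verge of being visited but an agent reverses, contributing the remaining slack up to $+4$.

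The main obstacle I anticipate is the asymmetry in how the adversary can exploit the \emph{unequal} group sizes: with equal groups the two swiping waves are symmetric, but here the scheduler may repeatedly block the larger group and force the smaller, more sparsely spaced group to do the covering work, which is precisely why the bound degrades from $\frac{4n}{k}$ to $\frac{4nk}{k^2-1}$. I would need to argue carefully that Rule~2 (membership swapping) still guarantees the ``virtual movement'' progress even when the groups are unbalanced — specifically that a swap between a blocked $G_C$ agent and a blocked $G_{CC}$ agent preserves the invariant that some group advances each round, regardless of the group sizes. Verifying that this progress invariant survives the unequal-size setting, and that the internal uniformity of each group is maintained across swaps, is the delicate part; once that invariant is secured, the counting is a routine adaptation of the equal-size argument.
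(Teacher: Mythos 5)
Your proposal is correct and follows essentially the same route as the paper's proof: bound the two worst-case approach distances to $v$ by the group spacings $\frac{2n}{k-1}+1$ and $\frac{2n}{k+1}+1$, observe that since at most one group can be blocked per round (with Rule~2 handling the case of both groups blocked at the same missing edge) the sum of these distances decreases by at least one each round, and conclude $r_1-r_0 \leq \frac{4nk}{k^2-1}+2 \leq \frac{4nk}{k^2-1}+4$. The paper's proof is in fact terser than your plan --- it does not separately treat transition rounds (the lemma concerns only the steady Two-Groups-Swiping phase with both groups already uniformly spread) and simply asserts the one-group-moves-per-round invariant --- so your extra verification concerns are conservative rather than divergent.
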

\begin{proof}
W.l.o.g. let $G_{C}$ be the group of size $\frac{k-1}{2}$ and $G_{CC}$ be the other group. Let $r_0,r_1$ be the times between two successive visits of some node $v$. In the worst case at round $r_0+1$, node $v$ could be at distance at most $\frac{2n}{k-1}+1$ from an agent in group $G_{C}$, and distance at most $\frac{2n}{k+1}+1$ from an agent in  $G_{CC}$. The sum of these distances is $\frac{4nk}{k^2-1}+2$, and since only one group can be blocked at each round, this distance decreases by one at each round. This implies that $r_1-r_0 \leq \frac{4nk}{k^2-1}+2$, thus proving the bound. 
 \end{proof}

From the previous Lemma and using the same proof strategy of Theorem \ref{twogroups:unkown} we have that $\frac{4nk}{k^2-1}+4$ is the idle time of the algorithm. Unfortunately, it is not possible to bound the stabilization time of the algorithm.
The adversary decides when, and if, the algorithm goes to the Two-Groups-Swiping phase, and when this happen a certain number of rounds has to be payed to position in an uniform way the members of each group. However, in any infinite execution of the algorithm, there are only finitely many times in which two consecutive visits of a node are spaced by more than  $\frac{4nk}{k^2-1}+4$ rounds.

 \subsection{\kwn setting: \placeswipe for $k$ agents.}
 
Generalizing the algorithm Section \ref{sec:placeandswipe}, for $k$ agents is immediate. The algorithm is essentially the same, the only variations are: each phase now lasts $\lfloor \frac{n}{k} \rfloor$ rounds and $P_{i}$ is not a pair of nodes but $k$ nodes uniformly placed. Also in this case we assume that agents start uniformly placed, such assumption can be dropped sacrificing the stabilization time (see Observation \ref{obs:uniformspread}).
Lemma \ref{lemma:swipek} below is an equivalent of Lemma \ref{lemma:swipe} for $k\geq 2$ agents. Further, we can show that starting from any uniform configuration, the agents can reach, using the knowledge of ${\cal G}$, any given target uniform configuration in at most $\lceil \frac{n}{k} \rceil$ steps. 

\begin{lemma}\label{lemma:swipek}
Given any 1-interval connected dynamic ring ${\cal G}$, for any round $r_i$, 
there exists a set $P_i$ of $k$ uniformly spaced nodes, such that $k$ agents placed on $P_i$ and moving clockwise from round $r_i$ to round $r_{i}+ \lfloor \frac{n}{k} \rfloor$, together explore all nodes of the ring. 
\end{lemma}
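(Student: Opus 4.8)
The plan is to transplant the proof of Lemma~\ref{lemma:swipe} to $k$ agents, replacing the antipodal pigeonhole by a counting argument over all uniform placements. First I would invoke Observation~\ref{obs:ilcinkas} at round $r=r_i$ with $h=\lceil \frac{n}{k}\rceil-1$. This produces a set $E_i$ of $n-h=n-\lceil \frac{n}{k}\rceil+1$ \emph{good} starting nodes with the guarantee that a single agent placed on any node of $E_i$ and moving clockwise visits exactly $h+1=\lceil \frac{n}{k}\rceil$ distinct nodes after $h=\lceil \frac{n}{k}\rceil-1$ clockwise moves. Since $\lceil \frac{n}{k}\rceil-1\le\lfloor \frac{n}{k}\rfloor$, these moves all fit inside the movement interval $[r_i,\,r_i+\lfloor \frac{n}{k}\rfloor]$ of the statement, so every agent launched from a good node is guaranteed to sweep $\lceil \frac{n}{k}\rceil$ consecutive nodes clockwise.

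It then remains to find a set $P_i$ of $k$ uniformly spaced nodes entirely inside $E_i$. This is the step that replaces the antipodal pigeonhole of the $k=2$ case, and it is the main obstacle: when $k\nmid n$ the uniform placements no longer partition $V$ into disjoint classes, so a naive pigeonhole breaks. I would instead use a double-counting / averaging argument that handles both cases uniformly. Fix one canonical uniform placement $Q_0$ of $k$ nodes and consider its $n$ rotations $Q_0,Q_1,\ldots,Q_{n-1}$, each again a uniform placement. Let $\bar E_i=V\setminus E_i$ be the set of \emph{bad} nodes, so that $|\bar E_i|=h=\lceil \frac{n}{k}\rceil-1$. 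Each bad node belongs to exactly $k$ of the rotations (the map $c\mapsto v-c \bmod n$ from $Q_0$ to the indices $j$ with $v\in Q_j$ is a bijection), hence the total number of bad incidences $\sum_{j}|Q_j\cap\bar E_i|$ equals $k\,|\bar E_i|=k(\lceil \frac{n}{k}\rceil-1)$. A one-line check shows this is $n-k<n$ when $k\mid n$ and $k\lfloor \frac{n}{k}\rfloor=n-(n\bmod k)<n$ otherwise, so it is strictly less than $n$ in every case. Averaging over the $n$ rotations, some $Q_j$ contains no bad node, i.e. $P_i:=Q_j\subseteq E_i$.

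Finally I would close with the coverage argument. The $k$ nodes of $P_i$ split the ring into $k$ clockwise gaps, each of size $\lfloor \frac{n}{k}\rfloor$ or $\lceil \frac{n}{k}\rceil$. An agent started on a node $p\in P_i\subseteq E_i$ visits the $\lceil \frac{n}{k}\rceil$ consecutive nodes starting at $p$, which covers its entire gap, the last node of the gap being the next agent's start and therefore covered by that agent. Taking the union over the $k$ agents covers all $n$ nodes, proving the lemma. The only delicate points are the off-by-one bookkeeping between \emph{moves} and \emph{nodes visited} (settled by $\lceil \frac{n}{k}\rceil-1\le\lfloor \frac{n}{k}\rfloor$) and the strictness of the counting inequality in the non-divisor case (settled by $n\bmod k\ge 1$); the averaging step cleanly subsumes the antipodal-pair pigeonhole used for $k=2$.
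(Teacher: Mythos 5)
Your proposal is correct, and its first and last steps coincide with the paper's: invoke Observation~\ref{obs:ilcinkas} with $h=\lceil \frac{n}{k}\rceil-1$ to get a good set $E_i$ whose complement has at most $\lceil \frac{n}{k}\rceil-1$ nodes, then argue coverage gap-by-gap once a uniform placement inside $E_i$ is found. Where you genuinely diverge is the middle step. The paper transplants the antipodal pigeonhole of Lemma~\ref{lemma:swipe} almost verbatim: it asserts that $E_i$ has cardinality $\lfloor \frac{n}{k}\rfloor+1$ (a slip carried over from the $k=2$ computation --- the correct count from the observation is $n-\lceil \frac{n}{k}\rceil+1$) and that the ring ``contains at least $\lfloor \frac{n}{k}\rfloor$ uniformly placed nodes,'' which only reads as a valid pigeonhole when $k\mid n$, where the $n/k$ rotations of a canonical placement partition $V$ into disjoint classes and the bad set, of size $n/k-1$, misses at least one class. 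Your rotation-averaging argument --- each node lies in exactly $k$ of the $n$ rotations, so the bad incidences number $k\bigl(\lceil \frac{n}{k}\rceil-1\bigr)<n$ and some rotation is entirely good --- reaches the same conclusion without assuming $k\mid n$, and in fact repairs the paper's proof: it specializes to the disjoint-classes pigeonhole in the divisor case and correctly handles the non-divisor case, where uniform placements overlap and the paper's phrasing breaks down. The only cosmetic inaccuracy in your write-up is asserting $|\bar E_i|=h$ where the observation only guarantees $|\bar E_i|\le h$, but since your counting inequality is in the right direction this is harmless. In short: same skeleton, but your selection of $P_i$ is a strictly more careful and more general argument than the one printed in the paper.
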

\begin{proof}
The key idea to prove the existence of $P_{i}$ is Observation \ref{obs:ilcinkas}. By plugging $t=\lceil \frac{n}{k} \rceil -1$ in the statement of the observation. We have that there are $\lfloor \frac{n}{k} \rfloor +1$ nodes, let $E_i$ be this set, such that an agent being on one of these nodes at round
$r_i$ moving clockwise visits exactly $\lceil \frac{n}{k} \rceil$ nodes by the end of round $r_{i}+ \lfloor \frac{n}{k} \rfloor$. Now we have to prove that $E_i$ contains a set of uniformly placed nodes. But this is obvious since the ring contains at least $\lfloor \frac{n}{k} \rfloor$  uniformly placed nodes and the cardinality of $E_i$ is $\lfloor \frac{n}{k} \rfloor +1$.
Being the agents in $P_{i}$ uniformly placed, when each agent visits $\lceil \frac{n}{k} \rceil$ nodes the ring has been explored. 
\end{proof}

 \begin{theorem}\label{th:Kplaceswipe}
Consider the \knw model with global snapshots.
The algorithm \placeswipe allows $k$ agents with uniform initial placement to patrol a ring with an idle time $I(n) \leq  3\lceil \frac{n}{k} \rceil$.
\end{theorem}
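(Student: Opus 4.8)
The plan is to mirror the proof of Theorem~\ref{th:placeswipe} (the $k=2$ case), since the generalized \placeswipe has the identical two-phase epoch structure, with each phase now lasting $\lceil n/k \rceil$ rounds and the swiping target $P_i$ being a set of $k$ uniformly spaced nodes instead of an antipodal pair. I would first reduce the theorem to two sub-claims: (a) during each Placement Phase the $k$ agents successfully reposition themselves onto the chosen target $P_i$; and (b) during each Swipe Phase the $k$ agents together visit every node of the ring. Sub-claim (b) is exactly Lemma~\ref{lemma:swipek}, so the real work lies in (a) together with the bookkeeping that turns (a) and (b) into the stated bound.

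Granting (a) and (b), the idle-time bound is immediate. Every node is visited during every Swipe Phase; hence, for any node $v$, the worst case between two consecutive visits occurs when $v$ is visited at the very beginning of some Swipe Phase and next at the very end of the following Swipe Phase. That interval spans (the remainder of) one Swipe Phase, one full Placement Phase, and one full Swipe Phase, i.e. at most three phases each of length $\lceil n/k \rceil$, giving $I(n) \le 3\lceil n/k \rceil$.

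For sub-claim (a) I would argue in two steps. First, the agents are uniformly placed at the start of every Placement Phase: they start uniform by hypothesis, and I claim uniformity is preserved across each Swipe Phase. Indeed, $P_i$ is chosen inside the set $E_i$ furnished by Observation~\ref{obs:ilcinkas} (through Lemma~\ref{lemma:swipek}), whose defining property is that agents starting there and moving clockwise are never blocked throughout the phase. Thus each of the $k$ agents advances by exactly the same number of edges (the number of visited nodes minus one), so the configuration is merely rotated and stays uniform. Second, with the agents uniform at the start of the Placement Phase, I would invoke the repositioning statement established just before the theorem: from any uniform configuration the agents can reach any target uniform configuration within $\lceil n/k \rceil$ rounds using the knowledge of ${\cal G}$. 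Applying this with target $P_i$ completes (a); as in the $k=2$ case, only the knowledge of the next $n$ rounds of ${\cal G}$ is needed to compute $P_i$.

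The step I expect to be the main obstacle is the repositioning, the last ingredient of (a). It generalizes the direction-switching argument of Theorem~\ref{th:placeswipe}: there, if an agent could not reach its target by the short (counter-clockwise) route within the phase, it was blocked so often that, since at most one ring edge is absent per round, the opposite (clockwise) route was available enough to succeed, and antipodal symmetry made the two agents' moves consistent. For $k$ agents one must shift all of them by a common offset to preserve uniformity, so the delicate point is to show that a single common direction (clockwise by some $d$, or counter-clockwise by $n/k - d$) works simultaneously for all $k$ agents within $\lceil n/k \rceil$ rounds, again exploiting that the single missing edge blocks at most one agent per round. Once this counting argument is in place, assembling the theorem from (a) and (b) is routine.
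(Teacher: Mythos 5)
Your proposal is correct and takes essentially the same approach as the paper: it reduces the theorem to (a) reaching $P_i$ in each Placement Phase and (b) Lemma~\ref{lemma:swipek} for the Swipe Phase, notes that uniformity is preserved at the end of each Swipe Phase, and settles (a) by the same direction-switching counting argument the paper uses (w.l.o.g.\ $d_j \leq \lceil \frac{n}{k} \rceil - d_j$; if some agent is blocked clockwise too often, the single missing edge per round guarantees the common counter-clockwise shift completes within the phase). The only difference is presentational: you defer the repositioning claim to a statement ``established just before the theorem,'' whereas in the paper that claim is proved inline in the theorem's proof, at essentially the level of detail you sketch.
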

\begin{proof}
The only thing to prove is that agents are able to reach $P_{i}$ during the $i$-th Placement Phase, since the correctness of the Swiping Phase is given by Lemma \ref{lemma:swipe}.
First notice that at round $start_{i}$ for each epoch $i$, agents are uniformly placed in the ring, since the agents are initially uniformly placed and at the end of each Swipe Phase they are still uniformly placed.  
We have to show that starting from any uniform configuration, the agents can reach, using the knowledge of ${\cal G}$, any target uniform configuration, within the time interval $[start_{i},start_{i}+ \lceil \frac{n}{k} \rceil -1]$.
At round $start_{i}$, the distance between agent $a_{j}$ and $a_{j-1}$ is at most $\lceil \frac{n}{k} \rceil$. Let $d_j$ be the distance between the point in $P_{i}$ that has the closest clockwise distance from $a_j$.
W.l.o.g $d_{j} \leq \lceil \frac{n}{k} \rceil-d_{j}$, otherwise we can switch to the counter-clockwise orientation instead. Note for two agents $a_l,a_j$ it could be that $|d_{j}-d_{l}| = 1$ (in case $k$ is not an exact divisor of $n$), w.l.o.g. let us consider that $d_{j}$ is max($d_{j},d_{l}$).
If each agent $a$ is able to move clockwise for at least $d_j$ steps we can reach configuration $P_i$. 
Otherwise, there exists at least one $a_{j}$ that has been blocked for at least $\lceil \frac{n}{k} \rceil-(d_{j}-1)$ rounds. This implies that by moving counter-clockwise each agent can move for at least
$\lceil \frac{n}{k} \rceil-d_{j}+1$ rounds reaching the target node in $P_{i}$.
 \end{proof}


\section{Conclusion}
We provided the first results on the patrolling problem in dynamic graphs. As patrolling is usually performed on boundaries of territories, it is natural to study the problem for ring networks. The results may be extended to other topologies e.g. by moving on any cycle containing all the nodes of a graph. Our results on the dynamic ring networks are almost complete, but there exists a small gap between the lower and upper bounds, specially for the case of $k>2$ agents which can be reduced by future work. In particular, we believe the lower bound for $k>2$ agents in the \unkwn setting can be improved.     


\bibliographystyle{plain} 
 \bibliography{mybibfile}

\begin{thebibliography}{10}

\bibitem{cover1}
E.~Aaron, D.~Krizanc, and E.~Meyerson.
\newblock Dmvp: Foremost waypoint coverage of time-varying graphs.
\newblock In {\em Proceedings of the 40th International Workshop on
  Graph-Theoretic Concepts in Computer Science (WG)}, pages 29--41, 2014.

\bibitem{cover2}
E.~Aaron, D.~Krizanc, and E.~Meyerson.
\newblock Multi-robot foremost coverage of time-varying graphs.
\newblock In {\em Proceedings of the 11th Symposium on Algorithms for Sensor
  Systems (Algosensors)}, pages 22--38, 2015.

\bibitem{AuPR13}
J.~Augustine, G.~Pandurangan, and P.~Robinson.
\newblock Fast byzantine agreement in dynamic networks.
\newblock In {\em Proceedings of the 32th Symposium on Principles of
  Distributed Computing (PODC)}, pages 74--83, 2013.

\bibitem{AwE84}
B.~Awerbuch and S.~Even.
\newblock Efficient and reliable broadcast is achievable in an eventually
  connected network.
\newblock In {\em Proceedings of the 3th Symposium on Principles of Distributed
  Computing (PODC)}, pages 278--281, 1984.

\bibitem{BRS12}
M.~Biely, P.~Robinson, and U.~Schmid.
\newblock Agreement in directed dynamic networks.
\newblock In {\em Proceedings of the 19th International Colloquium on
  Structural Information and Communication Complexity (SIROCCO)}, pages 73--84,
  2012.

\bibitem{BournatDD16}
M.~Bournat, A.K. Datta, and S.~Dubois.
\newblock Self-stabilizing robots in highly dynamic environments.
\newblock In {\em Proceedings of the 18th International Symposium on
  Stabilization, Safety, and Security of Distributed Systems (SSS)}, pages
  54--69, 2016.

\bibitem{DBLP:journals/dcg/CarlssonJN99}
Svante Carlsson, H{\aa}kan Jonsson, and Bengt~J. Nilsson.
\newblock Finding the shortest watchman route in a simple polygon.
\newblock {\em Discrete {\&} Computational Geometry}, 22(3):377--402, 1999.

\bibitem{CaFQS12}
A.~Casteigts, P.~Flocchini, W.~Quattrociocchi, and N.~Santoro.
\newblock Time-varying graphs and dynamic networks.
\newblock {\em International Journal of Parallel, Emergent and Distributed
  Systems}, 27(5):387--408, 2012.

\bibitem{DBLP:journals/algorithmica/ChanC93}
Mee~Yee Chan and Francis Y.~L. Chin.
\newblock Schedulers for larger classes of pinwheel instances.
\newblock {\em Algorithmica}, 9(5):425--462, 1993.

\bibitem{DBLP:journals/ipl/Chin88}
Wei{-}pang Chin and Simeon~C. Ntafos.
\newblock Optimum watchman routes.
\newblock {\em Inf. Process. Lett.}, 28(1):39--44, 1988.

\bibitem{CleMPS11}
A.~Clementi, A.~Monti, F.~Pasquale, and R.~Silvestri.
\newblock Information spreading in stationary markovian evolving graphs.
\newblock {\em {IEEE} Transactions on Parallel and Distributed Systems},
  22(9):1425--1432, 2011.

\bibitem{CollinsCGKKKMP13}
Andrew Collins, Jurek Czyzowicz, Leszek Gasieniec, Adrian Kosowski, Evangelos
  Kranakis, Danny Krizanc, Russell Martin, and Oscar~Morales Ponce.
\newblock Optimal patrolling of fragmented boundaries.
\newblock In {\em 25th {ACM} Symposium on Parallelism in Algorithms and
  Architectures, ({SPAA} '13), Montreal}, pages 241--250, 2013.

\bibitem{DBLP:conf/esa/CzyzowiczGKK11}
Jurek Czyzowicz, Leszek Gasieniec, Adrian Kosowski, and Evangelos Kranakis.
\newblock Boundary patrolling by mobile agents with distinct maximal speeds.
\newblock In {\em Algorithms - {ESA} 2011 - 19th Annual European Symposium,
  Saarbr{\"{u}}cken, Germany, September 5-9, 2011. Proceedings}, pages
  701--712, 2011.

\bibitem{DBLP:journals/algorithmica/CzyzowiczGKKKT17}
Jurek Czyzowicz, Leszek Gasieniec, Adrian Kosowski, Evangelos Kranakis, Danny
  Krizanc, and Najmeh Taleb.
\newblock When patrolmen become corrupted: Monitoring a graph using faulty
  mobile robots.
\newblock {\em Algorithmica}, 79(3):925--940, 2017.

\bibitem{di2015brief}
G.A. {Di~Luna} and R.~Baldoni.
\newblock Brief announcement: Investigating the cost of anonymity on dynamic
  networks.
\newblock In {\em Proceedings of the 34th Symposium on Principles of
  Distributed Computing (PODC)}, pages 339--341, 2015.

\bibitem{DBLP:conf/icdcs/LunaDFS16}
G.A. {Di Luna}, S.~Dobrev, P.~Flocchini, and N.~Santoro.
\newblock Live exploration of dynamic rings.
\newblock In {\em Proocedings of the 36th {IEEE} International Conference on
  Distributed Computing Systems (ICDCS)}, pages 570--579, 2016.

\bibitem{LunaFPPSV17}
G.A. {Di Luna}, P.~Flocchini, L.~Pagli, G.~Prencipe, N.~Santoro, and
  G.~Viglietta.
\newblock Gathering in dynamic rings.
\newblock In {\em Proocedings of the 24th International Colloquium Structural
  Information and Communication Complexity (SIROCCO)}, pages 339--355, 2017.

\bibitem{ErHK15}
T.~Erlebach, M.~Hoffmann, and F.~Kammer.
\newblock On temporal graph exploration.
\newblock In {\em Proceedings of 42nd International Colloquium on Automata,
  Languages, and Programming (ICALP)}, pages 444--455, 2015.

\bibitem{Flocchini2013}
P.~Flocchini, B.~Mans, and N.~Santoro.
\newblock On the exploration of time-varying networks.
\newblock {\em Theoretical Computer Science}, 469:53--68, January 2013.

\bibitem{DBLP:conf/sofsem/GasieniecKLLMR17}
Leszek Gasieniec, Ralf Klasing, Christos Levcopoulos, Andrzej Lingas, Jie Min,
  and Tomasz Radzik.
\newblock Bamboo garden trimming problem (perpetual maintenance of machines
  with different attendance urgency factors).
\newblock In {\em {SOFSEM} 2017: Theory and Practice of Computer Science - 43rd
  International Conference on Current Trends in Theory and Practice of Computer
  Science, Limerick, Ireland, January 16-20, 2017, Proceedings}, pages
  229--240, 2017.

\bibitem{HarG97}
F.~Harary and G.~Gupta.
\newblock Dynamic graph models.
\newblock {\em {\em Mathematical and Computer Modelling}}, 25(7):79--88, 1997.

\bibitem{IlKW14}
D.~Ilcinkas, R.~Klasing, and A.M. Wade.
\newblock Exploration of constantly connected dynamic graphs based on cactuses.
\newblock {\em Proceedings 21st International Colloquium Structural Information
  and Communication Complexity (SIROCCO)}, pages 250--262, 2014.

\bibitem{IlW11}
D.~Ilcinkas and A.M. Wade.
\newblock On the power of waiting when exploring public transportation systems.
\newblock In {\em Proceedings of the 15th International Conference on
  Principles of Distributed Systems (OPODIS)}, pages 451--464, 2011.

\bibitem{IlcinkasW18}
D.~Ilcinkas and A.M. Wade.
\newblock Exploration of the t-interval-connected dynamic graphs: the case of
  the ring.
\newblock {\em Theory of Computing Systems}, 62(5):1144--1160, 2018.

\bibitem{DBLP:journals/dc/KawamuraK15}
Akitoshi Kawamura and Yusuke Kobayashi.
\newblock Fence patrolling by mobile agents with distinct speeds.
\newblock {\em Distributed Computing}, 28(2):147--154, 2015.

\bibitem{icalpcounting}
D.~{Kowalski} and A.M. Miguel.
\newblock Polynomial counting in anonymous dynamic networks with applications
  to anonymous dynamic algebraic computations.
\newblock In {\em Proocedings of the 45th International Colloquium on Automata,
  Languages, and Programming, (ICALP)}, (to appear), 2018.

\bibitem{KLyO10}
F.~Kuhn, N.~Lynch, and R.~Oshman.
\newblock Distributed computation in dynamic networks.
\newblock In {\em Proceedings of the 42nd Symposium on Theory of Computing
  (STOC)}, pages 513--522, 2010.

\bibitem{KuMO11}
F.~Kuhn, Y.~Moses, and R.~Oshman.
\newblock Coordinated consensus in dynamic networks.
\newblock In {\em Proocedings of the 30th Symposium on Principles of
  Distributed Computing (PODC)}, pages 1--10, 2011.

\bibitem{KuO11}
F.~Kuhn and R.~Oshman.
\newblock Dynamic networks: Models and algorithms.
\newblock {\em {\em SIGACT News}}, 42(1):82--96, 2011.

\bibitem{Michail16}
O.~Michail.
\newblock An introduction to temporal graphs: An algorithmic perspective.
\newblock {\em Internet Mathematics}, 12(4):239--280, 2016.

\bibitem{Michail2014}
O.~Michail and P.G. Spirakis.
\newblock Traveling salesman problems in temporal graphs.
\newblock {\em Theoretical Computer Science}, 634:1--23, 2016.

\bibitem{DBLP:journals/ipl/Ntafos86}
Simeon~C. Ntafos.
\newblock On gallery watchmen in grids.
\newblock {\em Inf. Process. Lett.}, 23(2):99--102, 1986.

\bibitem{OdW05}
R.~O'Dell and R.~Wattenhofer.
\newblock Information dissemination in highly dynamic graphs.
\newblock In {\em Proceedings of the Joint Workshop on Foundations of Mobile
  Computing (DIALM-POMC)}, pages 104--110, 2005.

\end{thebibliography}
  
\end{document}